\DeclareTextSymbolDefault{\textquotedbl}{T1}
\providecommand{\tabularnewline}{\\}
\numberwithin{equation}{section}
\numberwithin{figure}{section}
\theoremstyle{plain}
\newtheorem{thm}{\protect\theoremname}
\theoremstyle{plain}
\newtheorem{lem}[thm]{\protect\lemmaname}
\theoremstyle{remark}
\newtheorem{rem}[thm]{\protect\remarkname}
\theoremstyle{definition}
\newtheorem{defn}[thm]{\protect\definitionname}
\theoremstyle{plain}
\newtheorem{prop}[thm]{\protect\propositionname}
\theoremstyle{plain}
\newtheorem{cor}[thm]{\protect\corollaryname}
\providecommand{\corollaryname}{Corollary}
\providecommand{\definitionname}{Definition}
\providecommand{\lemmaname}{Lemma}
\providecommand{\propositionname}{Proposition}
\providecommand{\remarkname}{Remark}
\providecommand{\theoremname}{Theorem}
\begin{document}
\title{Synthesis of lossless electric circuits based on prescribed Jordan
forms}
\author{Alexander Figotin}
\address{University of California at Irvine, CA 92967}
\begin{abstract}
We advance here an algorithm of the synthesis of lossless electric
circuits such that their evolution matrices have the prescribed Jordan
canonical forms subject to natural constraints. Every synthesized
circuit consists of a chain-like sequence of $LC$-loops coupled by
gyrators. All involved capacitances, inductances and gyrator resistances
are either positive or negative with values determined by explicit
formulas. A circuit must have at least one negative capacitance or
inductance for having a nontrivial Jordan block for the relevant matrix.
\end{abstract}

\keywords{Electric circuit, electric network, synthesis, negative capacitance,
negative inductance, gyrator, Lagrangian, Hamiltonian, exceptional
point of degeneracy (EPD), Jordan block, lossless.}
\maketitle

\section{Introduction\label{sec:intro}}

This work is motivated by an interest to electromagnetic and optical
systems exhibiting Jordan eigenvector degeneracy, which is a degeneracy
of the system evolution matrix when not only some eigenvalues coincide
but the corresponding eigenvectors coincide also. Another way to describe
the eigenvector degeneracy of a matrix is by acknowledging that there
is no a basis in the relevant vector space made of eigenvectors of
the matrix. Such degenerate system states are quite often are referred
to as \emph{exceptional points of degeneracy} (EPDs), \cite[II.1]{Kato}.
A particularly important class of applications of EPDs is sensing,
\cite{CheN}. \cite{PeLiXu}, \cite{Wie}, \cite{Wie1}. Other potential
applications include (i) enhancement of the gain in active systems,
\cite{VOFC}, \cite{MLSPL}, \cite{OVFC}, \cite{OVFC1}, \cite{OTC},
and (ii) directivity of antennas, \cite{OthCap}. A variety of systems
have been suggested that exhibit EPDs in space for waveguide structures
and time for circuits. These systems are based on: (i) non-Hermitian
parity-time (PT) symmetric coupled systems, which are systems with
balanced loss and gain, \cite{BenBoe}, \cite{RKEC}, \cite{OGC};
(ii) coupled resonators \cite{SLZEK}, \cite{HMHCK}, \cite{HHWGECK},
(iii) electronic circuits involving dissipation, \cite{SteHeSch}. 

Systems with EPDs in cited above literature commonly involve loss
and gain elements suggesting that they might be essential to the existence
of EPDs, see for instance \cite{Berr}. It turns out though that the
presence of loss and gain elements in a system is not necessary for
having EPD regimes. An interesting system without loss and gain elements
has been proposed in \cite{KNAC} where the authors demonstrate that
EPDs can exist for a single $LC$ resonator with time-periodic modulation.
Our own studies in \cite{FigTWTbk} show that an analytical model
of traveling wave tube (TWT) has the Jordan eigenvector degeneracy
at some points of the system dispersion relation. This TWT system
is governed by a Lagrangian and consequently it is a perfectly conservative
system. Inspired by those studies we raised a question if simple lossless
(perfectly conservative) circuits exist such that their evolution
matrices exhibit the Jordan eigenvector degeneracy. We answered to
the question positively by constructing circuits with prescribed degeneracies.

Our primary goal here is to synthesize a lossless electric circuit
so that its evolution matrix $\mathscr{H}$ has a prescribed Jordan
canonical form $\mathscr{J}$ subject to natural constraints considered
later on. Hence by the definition of the Jordan canonical form, $\mathscr{H}=S\mathscr{J}S^{-1}$
where $S$ is an invertible matrix and $\mathscr{J}$ is a block diagonal
matrix of the form

\begin{gather}
\mathscr{J}=\left[\begin{array}{cccc}
J_{n_{1}}\left(\zeta_{1}\right) & \cdots & 0 & 0\\
0 & \ddots & \cdots & \vdots\\
\vdots & \ddots & J_{n_{q-1}}\left(\zeta_{q-1}\right) & 0\\
0 & \cdots & 0 & J_{n_{q}}\left(\zeta_{q}\right)
\end{array}\right],\quad J_{n}\left(\zeta\right)=\left[\begin{array}{ccccc}
\zeta & 1 & \cdots & 0 & 0\\
0 & \zeta & 1 & \cdots & 0\\
0 & 0 & \ddots & \cdots & \vdots\\
\vdots & \vdots & \ddots & \zeta & 1\\
0 & 0 & \cdots & 0 & \zeta
\end{array}\right].\label{eq:Joblo1b}
\end{gather}
$\zeta_{j}$ are real or complex numbers, and $J_{n}\left(\zeta\right)$
is the so-called Jordan block which is $n\times n$ matrix. For $n=1$
the matrix $J_{n}\left(\zeta\right)=\zeta$ turns just into number
$\zeta$.

As to the evolution matrix $\mathscr{H}$ we assume that the circuit
evolution is governed by the following linear equation
\begin{equation}
\partial_{t}X=\mathscr{H}X\label{eq:XHX1a}
\end{equation}
where $X$ is $2n$ dimensional vector-column describing the circuit
state and $\mathscr{H}$ is $2n\times2n$ matrix where $n>1$ is an
integer. The particular choice of the dimensions is explained by our
desire to have an underlying Lagrangian and Hamiltonian structure
so that equation (\ref{eq:XHX1a}) will be the Hamilton evolution
equation. Consequently, $2n\times2n$ matrix $\mathscr{H}$ is going
to be a Hamiltonian matrix and we will refer to it as the \emph{circuit
evolution matrix} or just \emph{circuit matrix}, see Section \ref{sec:LagHam}.
To meet the dimension requirements of the evolution equation (\ref{eq:XHX1a})
the circuit topological structure is expected to have $n$ \emph{fundamental
loops} or f-loops for short, see Section \ref{sec:e-net}. The circuit
state then is described by the corresponding $n$ time dependent charges
$q_{k}\left(t\right)$ which are the time integrals of the relevant
loop currents $\partial_{t}q_{k}\left(t\right)$. Hamiltonian formulations
of the dynamics of $LC$ circuits has been studied, see for instance
\cite{Masc} and references therein.

The eigenvalue problem associated with the evolution evolution (\ref{eq:XHX1a})
is
\begin{equation}
\mathscr{H}X=sX,\quad s=i\omega,\label{eq:XHX1aa}
\end{equation}
where $\omega$ is the frequency. Notice that the eigenvalue (spectral
parameter) $s$ is pure imaginary for real frequencies.

As to the prescribed Jordan canonical form $\mathscr{J}$ we are rather
interested in the simplest possible systems exhibiting nontrivial
Jordan blocks than systems that can have arbitrary Jordan canonical
form allowed for Hamiltonian matrices. It turns out that if the Jordan
canonical form $\mathscr{J}$ of the circuit matrix $\mathscr{H}$
has a nontrivial Jordan block then the circuit must have at least
one negative capacitance or inductance, see Section \ref{subsec:pos-Ham}.
The Jordan forms associated with Hamiltonian matrices must satisfy
certain constraints considered in Section \ref{sec:synth-elem}. The
origin of the constraints is the fundamental property of a Hamiltonian
matrix $\mathscr{H}$ to be similar to $-\mathscr{H}^{\mathrm{T}}$
which is the transposed to $\mathscr{H}$ matrix. This special property
of a Hamiltonian matrix combined with the general statement that every
square matrix $M$ is similar to the transposed to it matrix $M^{\mathrm{T}}$
impose the following constraints on the spectral structure of matrix
$\mathscr{H}$: (i) if $s$ in an eigenvalue of $\mathscr{H}$ then
$-s$ is its eigenvalue as well; (ii) the Jordan blocks corresponding
to the eigenvalues $s$ and $-s$ have the same structure. If in addition
to that the entries of the Hamiltonian matrix $\mathscr{H}$ are real-valued
then the following properties hold: (i) if $s$ in an eigenvalue of
$\mathscr{H}$ then$-s$, $\bar{s}$ and $-\bar{s}$, where $\bar{s}$
is complex-conjugate to $s$, are its eigenvalues as well; (ii) the
Jordan blocks corresponding to $s$,$-s$, $\bar{s}$ and $-\bar{s}$
have the same structure. We refer to the listed properties as \emph{Hamiltonian
spectral symmetry}, see Sections \ref{sec:synth-elem}, \ref{subsec:ham-mat}.
Apart from the Hamiltonian spectral symmetry the Jordan structure
of Hamiltonian matrices can be arbitrary, \cite[2.2]{ArnGiv}. Our
approach to the generation of Hamiltonian and the corresponding Hamiltonian
matrices is intimately related to the Hamiltonian canonical forms,
see Section \ref{sec:can-forms} and references therein.

Another significant mathematical input to the synthesis of the simplest
possible systems exhibiting nontrivial Jordan blocks comes from the
property of a square matrix $M$ to be \emph{cyclic} (also called
\emph{non-derogatory}), see Section \ref{sec:co-mat} and references
therein. We remind that a square matrix $M$ is called cyclic (or
non-derogatory) if the geometric multiplicity of each of its eigenvalues
is exactly $1$, or in other words, if every eigenvalue of $M$ has
exactly one eigenvector. Consequently, if a square matrix $M$ is
cyclic its Jordan form $J_{M}$ is completely determined by its \emph{characteristic
polynomia}l $\chi\left(s\right)=\det\left\{ s\mathbb{I}-M\right\} $
where $\mathbb{I}$ is the identity matrix of the relevant dimension.
Namely, every eigenvalue $s_{0}$ of $M$ of multiplicity $m$ is
associated with the single Jordan block $J_{m}\left(s_{0}\right)$
in the Jordan form $J_{M}$ of $M$. Consequently, \emph{for a cyclic
matrix $M$ its characteristic polynomial $\chi\left(s\right)=\det\left\{ s\mathbb{I}-M\right\} $
encodes all the information about its Jordan form }$J_{M}$. Another
property of any cyclic matrix $M$ associated with the a monic polynomial
$\chi$ is that it is similar to the so-called companion matrix $C_{\chi}$
defined by simple explicit expression involving the coefficients of
the polynomial $\chi$, see Section \ref{sec:co-mat} and references
therein. Companion matrix $C_{\chi}$ is naturally related to the
high-order differential equation $\chi\left(\partial_{t}\right)x\left(t\right)=0$
where $x\left(t\right)$ is a complex-valued function of $t$, see
Sections \ref{sec:co-mat} and \ref{sec:dif-jord}. This fact underlines
the relevance of the cyclicity property to the evolution of simpler
systems described by higher order differential equations for a scalar
function. In light of the above discussion, we focus on cyclic Hamiltonian
matrices $\mathscr{H}$ for they lead to the simplest circuits with
the evolution matrices $\mathscr{H}$ having nontrivial Jordan forms
$\mathscr{J}$.

Suppose the prescribed Jordan form $\mathscr{J}$ is an $2n$$\times2n$
matrix subject to the Hamiltonian spectral symmetry and the cyclicity
conditions. The synthesis of a circuit associated with $\mathscr{J}$
involves the following steps. We introduce first the characteristic
polynomial $\chi\left(s\right)=\det\left\{ s\mathbb{I}_{2n}-\mathscr{J}\right\} $
which is an even monic polynomial $\chi\left(s\right)$ of the degree
$2n$. We consider then the companion to $\chi\left(s\right)$ matrix
$\mathscr{C}$, see Section \ref{sec:co-mat}, which by the design
has $\mathscr{J}$ as its Jordan form, that is

\begin{equation}
\mathscr{C}=\mathscr{Y}\mathscr{J}\mathscr{Y}^{-1},\label{eq:XHX1b}
\end{equation}
where the columns of matrix $\mathscr{Y}$ form the so-called Jordan
basis of the companion matrix $\mathscr{C}$ associated with the characteristic
polynomial $\chi\left(s\right)=\det\left\{ s\mathbb{I}_{2n}-\mathscr{J}\right\} $,
see Section \ref{sec:co-mat}. We proceed with an introduction of
our principal Hamiltonian $\mathcal{H}$, see Section \ref{sec:synth-elem},
and recover from it $2n\times2n$ Hamiltonian matrix $\mathscr{H}$
that governs the system evolution according to equation (\ref{eq:XHX1a}).
As the result of our particular choice of the Hamiltonian $\mathcal{H}$
the corresponding to it Hamiltonian matrix $\mathscr{H}$ is similar
to the companion matrix $\mathscr{C}$ and consequently it has exactly
the same Jordan form $\mathscr{J}$ as $\mathscr{C}$. In particular,
we construct an $2n\times2n$ matrix $T$ such that
\begin{equation}
\mathscr{C}=T^{-1}\mathscr{H}T,\quad\mathscr{H}=\mathscr{Z}\mathscr{J}\mathscr{Z}^{-1},\quad\mathscr{Z}=T\mathscr{Y},\label{eq:XHX1c}
\end{equation}
where the columns of matrix $\mathscr{Z}$ form a Jordan basis of
the evolution matrix $\mathscr{H}$. The relations (\ref{eq:XHX1b})
and (\ref{eq:XHX1c}) between involved matrices are considered in
Section \ref{sec:priHam}. 

To relate the constructed Hamiltonian $\mathcal{H}$ to a circuit
we introduce the corresponding to it Lagrangian $\mathcal{L}$. Finally,
based on the Lagrangian $\mathcal{L}$ we design the relevant to it
circuit, see Section \ref{sec:pri-cir}. Consequently, this circuit
evolution is governed by equation (\ref{eq:XHX1a}) with cyclic Hamiltonian
matrix $\mathscr{H}$ that has the prescribed$\mathscr{J}$ as its
Jordan form. Each of the described steps of the circuit synthesis
and the quantities constructed in the process provide insights into
the circuit features.

In the light of our studies we can revisit now the question whether
the presence of the balanced loss and gain is essential for achieving
an electric circuit governed by the evolution matrix with nontrivial
Jordan form. We have succeeded in constructing lossless circuits associated
with nontrivial Jordan forms. Each of these circuits though must involve
at least one negative capacitance or inductance. If we take a look
at the physical implementations of negative capacitance and inductance
provided in Section \ref{subsec:neg-RCL} we find that they involve
matched positive and negative resistances. \emph{Based on this we
may conclude that (i) the presence of the balanced loss and gain is
essential for achieving negative values for the capacitance and the
inductance; (ii) the presence of at least one capacitor or inductor
of negative value of the capacitance or inductance respectively is
necessary for achieving a lossless electric circuit associated with
nontrivial Jordan form}.

The structure of the paper is as follows. In Section \ref{sec:pri-cir}
we show our principal circuit tailored to the desired Jordan form
$\mathscr{J}$ subject to natural constraints. In Section \ref{sec:spe-cir}
we introduce special circuits tailored to specially chosen characteristic
polynomials $\chi\left(s\right)$ and the corresponding Jordan forms
made of exactly two Jordan blocks of the size 2, 3 and 4. In Section
\ref{sec:synth-elem} we provide our strategy for the synthesis of
circuits associated wit the desired Jordan forms. Section \ref{sec:priHam}
is devoted to the analysis of our principal circuit Hamiltonian which
is the basis to the circuit synthesis. In Section \ref{sec:expriHam}
we consider examples of the principal circuit Hamiltonian and significant
matrices. Section \ref{sec:LagHam} provides aspects of the Lagrangian
and Hamiltonian formalisms as well important properties of Hamiltonian
matrices. In Section \ref{sec:e-net} we review basic elements of
the electric networks and their elements including gyrators and negative
capacitances and inductances. Sections \ref{sec:jord-form}-\ref{sec:can-forms}
are devoted to a number of mathematical subjects needed for our analysis.
In Section \ref{sec:notation} we provide the list of notations used
throughout the paper.

\section{Principal circuit\label{sec:pri-cir}}

Leaving the technical details of the circuit synthesis to the following
sections we present here our principal circuit design that implements
the desired Jordan form $\mathscr{J}$ of the circuit evolution matrix
$\mathscr{H}$. Quite remarkably the topology of circuits associated
with different Jordan forms is essentially the same. The difference
between the circuits is in: (i) the number of involved $LC$-loops;
(ii) particular values of the involved capacitances, inductances and
gyration resistances. Fig. \ref{fig:pri-cir} shows our \emph{principal
circuit} made of $n$ $LC$-loops coupled by gyrators. Quantities
$L_{j}$, $C_{j}$ and $G_{j}$ are respectively inductances, capacitances
and gyrator resistances.
\begin{figure}[h]
\centering{}\includegraphics[scale=0.6]{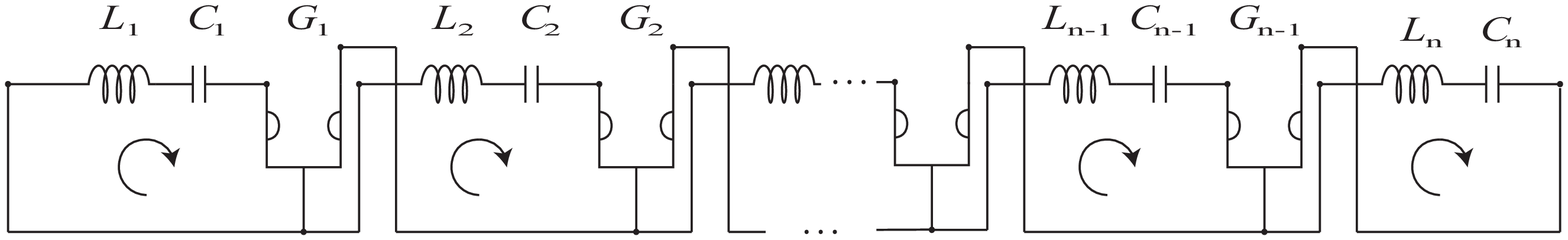}\caption{\label{fig:pri-cir} The principal circuit made of $n$ $LC$-loops.
Notice the difference between the left and the right connections for
the gyrators and $LC$-loops. It is explained by the non-reciprocity
of the gyrators and is designed to be consistent with (i) the standard
port assignment and selection of positive directions for the loop
currents and the gyrator; (ii) the sign of gyration resistance as
shown in Fig. \ref{fig:cir-gyr} and equations (\ref{eq:crvc1b}).
The values of quantities $L_{j}$, $C_{j}$ and $G_{j}$ are determined
by equations (\ref{eq:Lpricir1f})-(\ref{eq:Lpricir1g}) relating
them to the coefficients of the relevant polynomial.}
\end{figure}

To simplify equations throughout the paper we introduce the following
dimensionless version of some of the involved quantities
\begin{equation}
\tilde{t}=\omega_{0}t,\quad\breve{C}_{j}=\frac{C_{j}}{\left|C_{n}\right|},\quad\breve{L}_{j}=\omega_{0}^{2}\left|C_{n}\right|L_{j},\;\breve{G}_{j}=G_{j}\left|C_{n}\right|\omega_{0},\quad\mathcal{\breve{L}}=\left|C_{n}\right|\mathcal{L},\label{eq:pricirC1a}
\end{equation}
where $\omega_{0}>0$ is a unit of frequency and $1\leq j\leq n$
and $\mathcal{\breve{L}}$ is the scaled Lagrangian. To have less
cluttered formulas we actually omit ``hat'' from $\mathcal{\breve{L}}$,
$\tilde{t}$, $\breve{C}_{j}$, $\breve{L}_{j}$ and $\breve{G}_{j}$
and simply remember from now on that we use the relevant letters for
the dimensionless quantities and the scaled Lagrangian.

The \emph{principal circuit Lagrangian} associated with the principal
circuit depicted in Fig. \ref{fig:pri-cir} is

\begin{gather}
\mathcal{L}=\sum_{k=1}^{n}\frac{L_{k}\left(\partial_{t}q_{k}\right)^{2}}{2}-\sum_{k=1}^{n}\frac{\left(q_{k}\right)^{2}}{2C_{k}}+\sum_{k=1}^{n-1}\frac{G_{k}\left(q_{k}\partial_{t}q_{k+1}-q_{k+1}\partial_{t}q_{k}\right)}{2},\label{eq:Lpricir1a}\\
q_{k}=q_{k}\left(t\right)=\int i_{k}\,dt,\quad1\leq k\leq n,\nonumber 
\end{gather}
where $q_{k}$ and $i_{k}$ are respectively the charges and the currents
associated with $LC$-loops of the principal circuit depicted in Fig.
\ref{fig:pri-cir}. The corresponding \emph{Euler-Lagrange (EL) equations}
are

\begin{gather}
L_{1}\partial_{t}^{2}q_{1}-G_{1}\partial_{t}q_{2}+\frac{q_{1}}{C_{1}}=0,\quad L_{n}\partial_{t}^{2}q_{n}+G_{n-1}\partial_{t}q_{n-1}+\frac{q_{n}}{C_{n}}=0,\label{eq:Lpricir1b}
\end{gather}
\begin{gather}
L_{k}\partial_{t}^{2}q_{k}+G_{k-1}\partial_{t}q_{k-1}-G_{k}\partial_{t}q_{k+1}+\frac{q_{k}}{C_{k}}=0,\quad1<k<n.\label{eq:Lpricir1c}
\end{gather}

It is well known that the EL equations (\ref{eq:Lpricir1b}) and (\ref{eq:Lpricir1c})
represent the Kirchhoff voltage law for each of the $n$ f-loops,
see Section \ref{sec:e-net}. Indeed, each term in these equations
is associated with the voltage drop for the relevant electric element
as it can be verified by comparison with the voltage-current relations
reviewed in Section \ref{subsec:cir-elem}). As to the Kirchhoff current
law, one finds that it is already enforced by the selection of $n$
involved f-loops and currents $\partial_{t}q_{k}$ there. Indeed,
according to the gyrator settings the $k$-th gyrator has exactly
two incoming currents $\partial_{t}q_{k}$ and $\partial_{t}q_{k+1}$.
Indeed, the outgoing current, that passes through the gyrator branch
common to the $k$-th and the $\left(k+1\right)$-th f-loops, is equal
to the sum $\partial_{t}q_{k}+$$\partial_{t}q_{k+1}$. When exiting
this common branch the current $\partial_{t}q_{k}+$$\partial_{t}q_{k+1}$
splits into currents $\partial_{t}q_{k}$ and $\partial_{t}q_{k+1}$
in perfect compliance with the Kirchhoff current law. Notice that
each of the two equations in (\ref{eq:Lpricir1b}) corresponds to
the first and the last f-loops and involves only a single gyration
resistance. Each of the other f-loops has two adjacent f-loops, and,
consequently, the relevant to it equation in (\ref{eq:Lpricir1c})
involves two gyration resistances. Notice also the difference between
left and right connections between the gyrators and $LC$-loops. It
is explained by the non-reciprocity of the gyrators and is designed
to be consistent with (i) the standard port assignment and selection
of positive directions for the loop currents and the gyrator; (ii)
the sign of gyration resistance as shown in Fig. \ref{fig:cir-gyr},
see also equations (\ref{eq:crvc1b}) in Section \ref{sec:e-net}.

To make now a contact between the principal circuit, as depicted in
Fig. \ref{fig:pri-cir} and governed by the Lagrangian (\ref{eq:Lpricir1a}),
and the desired Jordan form $\mathscr{J}$ we introduce its characteristic
polynomial $\chi\left(s\right)$ that has to be of the form

\begin{gather}
\chi\left(s\right)=\det\left\{ s\mathbb{I}_{2n}-\mathscr{J}\right\} =s^{2n}+\left(-1\right)^{n}\sum_{k=1}^{n}a_{n-k}s^{2\left(n-k\right)}=\label{eq:Lpricir1d}\\
=s^{2n}+\left(-1\right)^{n}\left(a_{n-1}s^{2\left(n-1\right)}+a_{n-2}s^{2\left(n-2\right)}+\cdots a_{0}\right),\nonumber 
\end{gather}
where parameters $a_{k}$ are real-valued and satisfy
\begin{gather}
-\infty<a_{k}<\infty,\quad0\leq k\leq n-1;\;a_{0}\neq0.\label{eq:Lpricir1e}
\end{gather}

The Jordan form $\mathscr{J}$ has to satisfy some a priori symmetry
conditions to be associated with a Hamiltonian matrix $\mathscr{H}$.
In particular, its characteristic polynomial $\chi\left(s\right)$
has to be even polynomial as indicated by equations (\ref{eq:Lpricir1d})
and its parameters $a_{k}$ must be as described in relations (\ref{eq:Lpricir1e}).
The details on indicated properties of matrices $\mathscr{J}$ and
$\mathscr{H}$ are provided in Section \ref{sec:priHam}.

We relate then the principal circuit to the characteristic polynomial
$\chi\left(s\right)$ by the setting up the following expressions
for the circuit electric inductances, capacitances and gyration resistances
in terms of the coefficients $a_{k}$ of the polynomial $\chi\left(s\right)$:
\begin{equation}
L_{j}=\frac{1}{\left(-1\right)^{j-1}a_{j-1}},\quad1\leq j\leq n;\label{eq:Lpricir1f}
\end{equation}
\begin{gather}
C_{j}=\left(-1\right)^{j-1}a_{j},\quad G_{j}=\frac{1}{\left(-1\right)^{j-1}a_{j}}\quad1\leq j\leq n-1,\quad C_{n}=-1.\label{eq:Lpricir1g}
\end{gather}
Notice, that the equations (\ref{eq:Lpricir1f})-(\ref{eq:Lpricir1g})
imply the following identities
\begin{equation}
C_{j}L_{j+1}=-1,\quad C_{j}G_{j}=1,\quad1\leq j\leq n-1,\label{eq:Lpricir2a}
\end{equation}
as well as following expressions for coefficients $a_{j}$ in terms
of the circuit parameters
\begin{equation}
a_{j}=\frac{1}{\left(-1\right)^{j}L_{j+1}},\quad0\leq j\leq n-1,\quad a_{j}=\left(-1\right)^{j-1}\frac{1}{G_{j}}=\left(-1\right)^{j-1}C_{j},\quad1\leq j\leq n-1.\label{eq:Lpricir2b}
\end{equation}

Under assumptions that the circuit elements values satisfy equations
(\ref{eq:Lpricir1f})-(\ref{eq:Lpricir1g}) the Lagrangian $\mathcal{L}$
defined by equations (\ref{eq:Lpricir1a}) is related to our principal
Hamiltonian $\mathcal{H}$ defined by equations (\ref{eq:Hamapq1a}).
The relationship between $\mathcal{L}$ and $\mathcal{H}$ is as follows.
The Lagrangian $\mathcal{L}^{\prime}$ obtained from $\mathcal{H}$
by the Legendre transformation has exactly the same EL equations as
the Lagrangian $\mathcal{L}$, see Sections \ref{sec:priHam}, \ref{subsec:Lag}. 

We show in Section \ref{sec:dif-jord} that any solution to the EL
equations (\ref{eq:Lpricir1b}) and (\ref{eq:Lpricir1c}) satisfies
also the scalar differential equation
\begin{equation}
\chi\left(\partial_{t}\right)q_{k}\left(t\right)=0,\quad1\leq k\leq n,\label{eq:pricir2e}
\end{equation}
indicating that the circuit Hamiltonian matrix $\mathscr{H}$ is cyclic
and is determined by the characteristic polynomial $\chi\left(s\right)$
defined by equations (\ref{eq:Lpricir1d}).

\subsection{Principle circuit for two loops\label{subsec:pri-cir2}}

The principal circuit for two-loops is shown in Fig. \ref{fig:pri-cir2}.
It is the simplest case of our principal circuit that carries most
of the significant properties of the general case.
\begin{figure}[h]
\centering{}\includegraphics[scale=0.8]{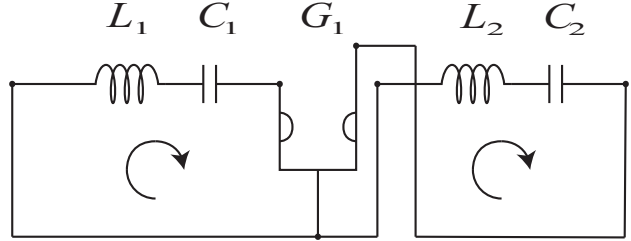}\caption{\label{fig:pri-cir2} Principle circuit for 2 $LC$-loops.}
\end{figure}
The the general form (\ref{eq:Lpricir1d}) of the characteristic polynomial
for $n=2$ turns into
\begin{equation}
\chi\left(s\right)=s^{4}+a_{1}s^{2}+a_{0}.\label{eq:pricir21a}
\end{equation}
The general form (\ref{eq:Lpricir1a}) of principal circuit Lagrangian
yields for $n=2$

\begin{equation}
\mathcal{L}=\frac{L_{1}\left(\partial_{t}q_{1}\right)^{2}}{2}+\frac{L_{2}\left(\partial_{t}q_{2}\right)^{2}}{2}-\frac{\left(q_{1}\right)^{2}}{2C_{1}}-\frac{\left(q_{2}\right)^{2}}{2C_{2}}+\frac{G_{1}\left(q_{1}\partial_{t}q_{2}-q_{2}\partial_{t}q_{1}\right)}{2},\label{eq:specir21b}
\end{equation}
and the corresponding EL equations are
\begin{gather}
L_{1}\partial_{t}^{2}q_{1}-G_{1}\partial_{t}q_{2}+\frac{q_{1}}{C_{1}}=0,\quad L_{2}\partial_{t}^{2}q_{2}+G_{1}\partial_{t}q_{1}+\frac{q_{2}}{C_{2}}=0.\label{eq:pricir21c}
\end{gather}
In particular, as the consequence of equations (\ref{eq:Lpricir2a})-(\ref{eq:Lpricir1c})
as well as the data in Table \ref{tab:pri-cir2} the following identities
hold
\begin{equation}
C_{1}L_{2}=-1,\quad C_{1}G_{1}=1;\quad a_{0}=\frac{1}{L_{1}},\quad a_{1}=-\frac{1}{L_{2}}=\frac{1}{G_{1}}=C_{1}.\label{eq:pricir21e}
\end{equation}

The set of values of the principal circuit elements described by equations
(\ref{eq:Lpricir1f})-(\ref{eq:Lpricir1g}) for $n=2$ are listed
in Table \ref{tab:pri-cir2}. 
\begin{table}[h]
\centering{}\caption{Circuit elements values, $n=2$}
\label{tab:pri-cir2}%
\begin{tabular}{|c||c||c|}
\hline 
$k$ & $1$ & $2$\tabularnewline[0.1cm]
\hline 
\hline 
$L_{k}$ & $\frac{1}{a_{0}}$ & $-\frac{1}{a_{1}}$\tabularnewline[0.1cm]
\hline 
\hline 
$C_{k}$ & $a_{1}$ & $-1$\tabularnewline[0.1cm]
\hline 
\hline 
$G_{k}$ & $\frac{1}{a_{1}}$ & \tabularnewline[0.1cm]
\hline 
\end{tabular}.
\end{table}
The significant circuit matrices in this case are as follows:
\begin{equation}
\mathscr{H}=\left[\begin{array}{rrrr}
0 & 0 & a_{0} & 0\\
1 & 0 & 0 & -a_{1}\\
0 & 0 & 0 & -1\\
0 & 1 & 0 & 0
\end{array}\right],\quad\mathscr{C}=\left[\begin{array}{rrrr}
0 & 1 & 0 & 0\\
0 & 0 & 1 & 0\\
0 & 0 & 0 & 1\\
-a_{0} & 0 & -a_{1} & 0
\end{array}\right],\quad T=\left[\begin{array}{rrrr}
0 & -a_{0} & 0 & 0\\
-a_{0} & 0 & -a_{1} & 0\\
0 & 0 & -1 & 0\\
0 & 0 & 0 & 1
\end{array}\right].\label{eq:pricir22a}
\end{equation}
The determinants of the above matrices are as follows
\begin{equation}
\det\left\{ \mathscr{H}\right\} =\det\left\{ \mathscr{C}\right\} =a_{0},\quad\det\left\{ T\right\} =a_{0}^{2}.\label{eq:pricir22d}
\end{equation}
The similarity between matrices $\mathscr{H}$ and $\mathscr{C}$
takes here the form
\begin{equation}
T^{-1}\mathscr{H}T=\mathscr{C},\label{eq:pricir22e}
\end{equation}
and can be verified by showing that $\mathscr{H}T=\mathscr{C}T^{-1}$
based on expressions (\ref{eq:pricir22a}) for the involved matrices.

Solutions to the EL equations (\ref{eq:pricir21c}) according to equations
(\ref{eq:Lpricir1e}) satisfy the following scalar differential equation
\begin{equation}
\left(\partial_{t}^{4}+a_{1}\partial_{t}^{2}+a_{0}\right)q_{k}\left(t\right)=0,\quad k=1,2.\label{eq:pricir22f}
\end{equation}

\section{Special circuits\label{sec:spe-cir}}

We define special circuits as implementations of our principal circuit
tailored to specially chosen characteristic polynomials $\chi\left(s\right)$
to achieve the desired Jordan forms. Namely, we are interested in
\begin{gather}
\chi\left(s\right)=\left(s^{2}-a^{2}\right)^{n},\quad\left(s^{2}+b^{2}\right)^{n},\quad n=2,3,4,\label{eq:specirk1a}
\end{gather}
for real $a$ and $b$ corresponding to the Jordan form $\mathscr{J}$
made of two Jordan blocks $J_{n}\left(\pm a\right)$ and $J_{n}\left(\pm b\mathrm{i}\right)$
respectively, where $J_{n}\left(s\right)$ is $n\times n$ matrix
defined by equation (\ref{eq:Joblo1b}). In Section \ref{subsec:spe-cirn}
we consider special circuits associated with the characteristic polynomials
$\chi\left(s\right)$ defined by equation (\ref{eq:specirk1a}) for
arbitrary $n\geq2$.

We are also interested in
\begin{gather}
\chi\left(s\right)=\left[\left(s-\zeta\right)\left(s-\bar{\zeta}\right)\left(s+\zeta\right)\left(s+\bar{\zeta}\right)\right]^{2},\quad\zeta=a+b\mathrm{i},\label{eq:specirk1c}
\end{gather}
where $a\neq0$ and $b\neq0$ corresponding to the Jordan form $\mathscr{J}$
made of 4 Jordan blocks $J_{2}\left(\pm a\pm b\mathrm{i}\right)$
and $J_{2}\left(\pm a\mp b\mathrm{i}\right)$.

When considering special circuits we evaluate the significant matrices
$\mathscr{H}$, $\mathscr{J}$, $\mathscr{C}$, $\mathscr{Y}$, $\mathscr{Z}$
and $T$ related by equations (\ref{eq:XHX1b}) and (\ref{eq:XHX1c}),
see Sections \ref{sec:intro}. \ref{sec:priHam}.

\subsection{Special circuits for two real or pure imaginary eigenvalues and $n=2$\label{subsec:spe-cir2}}

This is the simplest case demonstrating nontrivial Jordan forms and
for that reason we study it greater detail. The special circuit shown
in Fig. \ref{fig:pri-cir2} has $2$ f-loops. To get the desired Jordan
form we use one of the polynomials
\begin{equation}
\chi\left(s\right)=\left(s^{2}-a^{2}\right)^{2},\quad\left(s^{2}+b^{2}\right)^{2},\label{eq:specir2c}
\end{equation}
and assign to the circuit elements the values provided in Table \ref{tab:spe-cir2}
(parameter $r$ refers to the roots of the characteristic polynomial
$\chi\left(s\right)$, that is $\chi\left(r\right)=0$).
\begin{table}[h]
\centering{}\caption{Circuit elements values, $n=2$}
\label{tab:spe-cir2}%
\begin{tabular}{|c||c||c|}
\hline 
$k$ & $1$ & $2$\tabularnewline[0.1cm]
\hline 
\hline 
$L_{k}$ & $\frac{1}{a^{4}}$ & $\frac{1}{2\,a^{2}}$\tabularnewline[0.1cm]
\hline 
\hline 
$C_{k}$ & $-2\,a^{2}$ & $-1$\tabularnewline[0.1cm]
\hline 
\hline 
$G_{k}$ & $-\frac{1}{2\,a^{2}}$ & \tabularnewline[0.1cm]
\hline 
$r;\;n$ & $\pm a;$$\;2$ & \tabularnewline[0.1cm]
\hline 
\end{tabular},$\quad$%
\begin{tabular}{|c||c||c|}
\hline 
$k$ & $1$ & $2$\tabularnewline[0.1cm]
\hline 
\hline 
$L_{k}$ & $\frac{1}{b^{4}}$ & $-\frac{1}{2\,b^{2}}$\tabularnewline[0.1cm]
\hline 
\hline 
$C_{k}$ & $2\,b^{2}$ & $-1$\tabularnewline[0.1cm]
\hline 
\hline 
$G_{k}$ & $\frac{1}{2\,b^{2}}$ & \tabularnewline[0.1cm]
\hline 
$r;\;n$ & $\pm b\mathrm{i};$$\;2$ & \tabularnewline[0.1cm]
\hline 
\end{tabular}.
\end{table}

In the case of the first polynomial $\chi\left(s\right)$ that has
real roots $\pm a$ the circuit significant matrices are as follows:
\begin{equation}
\mathscr{H}=\left[\begin{array}{rrrr}
0 & 0 & a^{4} & 0\\
1 & 0 & 0 & 2\,a^{2}\\
0 & 0 & 0 & -1\\
0 & 1 & 0 & 0
\end{array}\right],\quad\mathscr{J}=\left[\begin{array}{rrrr}
a & 1 & 0 & 0\\
0 & a & 0 & 0\\
0 & 0 & -a & 1\\
0 & 0 & 0 & -a
\end{array}\right],\quad T=\left[\begin{array}{rrrr}
0 & -a^{4} & 0 & 0\\
-a^{4} & 0 & 2\,a^{2} & 0\\
0 & 0 & -1 & 0\\
0 & 0 & 0 & 1
\end{array}\right],\label{eq:specir2dH}
\end{equation}
\begin{equation}
\mathscr{C}=\left[\begin{array}{rrrr}
0 & 1 & 0 & 0\\
0 & 0 & 1 & 0\\
0 & 0 & 0 & 1\\
-a^{4} & 0 & 2\,a^{2} & 0
\end{array}\right],\:\mathscr{Y}=\left[\begin{array}{rrrr}
1 & 0 & 1 & 0\\
a & 1 & -a & 1\\
a^{2} & 2\,a & a^{2} & -2\,a\\
a^{3} & 3\,a^{2} & -a^{3} & 3\,a^{2}
\end{array}\right],\:\mathscr{Z}=\left[\begin{array}{rrrr}
-a^{5} & -a^{4} & a^{5} & -a^{4}\\
a^{4} & 4\,a^{3} & a^{4} & -4\,a^{3}\\
-a^{2} & -2\,a & -a^{2} & 2\,a\\
a^{3} & 3\,a^{2} & -a^{3} & 3\,a^{2}
\end{array}\right].\label{eq:specir2dC}
\end{equation}
In the case of the second polynomial $\chi\left(s\right)$ that has
pure imaginary roots $\pm b\mathrm{i}$ the significant circuit matrices
are

\begin{equation}
\mathscr{H}=\left[\begin{array}{rrrr}
0 & 0 & b^{4} & 0\\
1 & 0 & 0 & -2\,b^{2}\\
0 & 0 & 0 & -1\\
0 & 1 & 0 & 0
\end{array}\right],\quad\mathscr{J}=\left[\begin{array}{rrrr}
b\mathrm{i} & 1 & 0 & 0\\
0 & b\mathrm{i} & 0 & 0\\
0 & 0 & -b\mathrm{i} & 1\\
0 & 0 & 0 & -b\mathrm{i}
\end{array}\right],\quad T=\left[\begin{array}{rrrr}
0 & -b^{4} & 0 & 0\\
b^{4} & 0 & -2\,b^{2} & 0\\
0 & 0 & -1 & 0\\
0 & 0 & 0 & 1
\end{array}\right],\label{eq:specir2eH}
\end{equation}
\begin{equation}
\mathscr{C}=\left[\begin{array}{rrrr}
0 & 1 & 0 & 0\\
0 & 0 & 1 & 0\\
0 & 0 & 0 & 1\\
-b^{4} & 0 & -2\,b^{2} & 0
\end{array}\right],\quad\mathscr{Y}=\left[\begin{array}{rrrr}
1 & 0 & 1 & 0\\
b\mathrm{i} & 1 & -b\mathrm{i} & 1\\
-b^{2} & 2\,b\mathrm{i} & -b^{2} & -2\,b\mathrm{i}\\
-b^{3}\mathrm{i} & -3\,b^{2} & b^{3}\mathrm{i} & -3\,b^{2}
\end{array}\right],\label{eq:specir2eC}
\end{equation}
\begin{equation}
\mathscr{Z}=\left[\begin{array}{rrrr}
-b^{5}\mathrm{i} & -b^{4} & b^{5}\mathrm{i} & -b^{4}\\
b^{4} & -4\,b^{3}\mathrm{i} & b^{4} & 4\,b^{3}\mathrm{i}\\
b^{2} & -2\,b\mathrm{i} & b^{2} & 2\,b\mathrm{i}\\
-b^{3}\mathrm{i} & -3\,b^{2} & b^{3}\mathrm{i} & -3\,b^{2}
\end{array}\right].\label{eq:specir2eZ}
\end{equation}
Applying to the case $n=2$ the general formulas (\ref{eq:specirn2h})
for the eigenfrequencies $\omega_{j}=\pm\frac{1}{\sqrt{L_{j}C_{j}}}$
of $LC$-loops as they were decoupled we obtain
\begin{equation}
\omega_{1}=\pm\frac{b}{\sqrt{2}},\quad\omega_{2}=\pm\sqrt{2}b.\label{eq:specir2d}
\end{equation}

\subsection{Special circuit for two real or pure imaginary eigenvalues and $n=3$}

This special circuit shown in Fig. \ref{fig:pri-cir3} has $3$ f-loops.
It provides for two Jordan blocks of order 3 for circuit elements
values as in Table \ref{tab:spe-cir3a} (parameter $r$ refers to
the roots of the characteristic polynomial $\chi\left(s\right)$,
that is $\chi\left(r\right)=0$).
\begin{figure}[h]
\centering{}\includegraphics[scale=0.8]{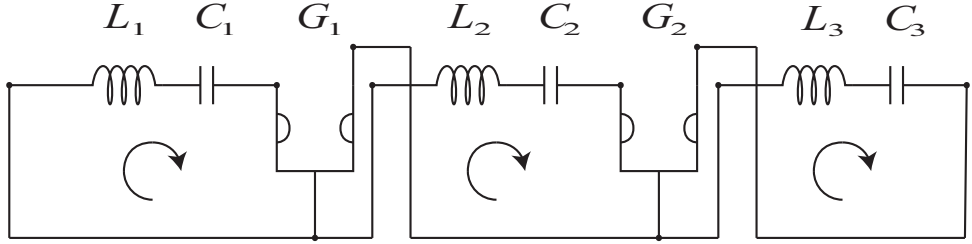}\caption{\label{fig:pri-cir3} Special circuit for 3 $LC$-loops.}
\end{figure}
\begin{table}[h]
\centering{}\caption{Circuit elements values, $n=3$}
\label{tab:spe-cir3a}%
\begin{tabular}{|c||c||c||c|}
\hline 
$k$ & $1$ & $2$ & $3$\tabularnewline[0.1cm]
\hline 
\hline 
$L_{k}$ & $\frac{1}{a^{6}}$ & $\frac{1}{3\,a^{4}}$ & $\frac{1}{3\,a^{2}}$\tabularnewline[0.1cm]
\hline 
\hline 
$C_{k}$ & $-3\,a^{4}$ & $-3\,a^{2}$ & $-1$\tabularnewline[0.1cm]
\hline 
\hline 
$G_{k}$ & $-\frac{1}{3\,a^{4}}$ & $-\frac{1}{3\,a^{2}}$ & \tabularnewline[0.1cm]
\hline 
\hline 
$r;\;n$ & $\pm a;$$\;3$ &  & \tabularnewline[0.1cm]
\hline 
\end{tabular},$\quad$%
\begin{tabular}{|c||c||c||c|}
\hline 
$k$ & $1$ & $2$ & $3$\tabularnewline[0.1cm]
\hline 
\hline 
$L_{k}$ & $-\frac{1}{b^{6}}$ & $\frac{1}{3\,b^{4}}$ & $-\frac{1}{3\,b^{2}}$\tabularnewline[0.1cm]
\hline 
\hline 
$C_{k}$ & $-3\,b^{4}$ & $3\,b^{2}$ & $-1$\tabularnewline[0.1cm]
\hline 
\hline 
$G_{k}$ & $-\frac{1}{3\,b^{4}}$ & $\frac{1}{3\,b^{2}}$ & \tabularnewline[0.1cm]
\hline 
\hline 
$r;\;n$ & $\pm b\mathrm{i};$$\;3$ &  & \tabularnewline[0.1cm]
\hline 
\end{tabular}.
\end{table}
The corresponding polynomial is
\begin{equation}
\chi\left(s\right)=\left(s^{2}-a^{2}\right)^{3},\quad\left(s^{2}+b^{2}\right)^{3},\label{eq:specir3a}
\end{equation}
In the case when the roots of $\chi\left(s\right)$ are real numbers
$\pm a$ the circuit matrices are
\begin{gather}
\mathscr{H}=\left[\begin{array}{rrrrrr}
0 & 0 & 0 & a^{6} & 0 & 0\\
1 & 0 & 0 & 0 & 3\,a^{4} & 0\\
0 & 1 & 0 & 0 & 0 & 3\,a^{2}\\
0 & 0 & 0 & 0 & -1 & 0\\
0 & 0 & 0 & 0 & 0 & -1\\
0 & 0 & 1 & 0 & 0 & 0
\end{array}\right],\quad\mathscr{J}=\left[\begin{array}{rrrrrr}
a & 1 & 0 & 0 & 0 & 0\\
0 & a & 1 & 0 & 0 & 0\\
0 & 0 & a & 0 & 0 & 0\\
0 & 0 & 0 & -a & 1 & 0\\
0 & 0 & 0 & 0 & -a & 1\\
0 & 0 & 0 & 0 & 0 & -a
\end{array}\right],\label{eq:specir3bH}
\end{gather}

\begin{gather}
\mathscr{Z}=\left[\begin{array}{rrrrrr}
a^{8} & 2\,a^{7} & a^{6} & a^{8} & -2\,a^{7} & a^{6}\\
-2\,a^{7} & -8\,a^{6} & -9\,a^{5} & 2\,a^{7} & -8\,a^{6} & 9\,a^{5}\\
a^{6} & 6\,a^{5} & 15\,a^{4} & a^{6} & -6\,a^{5} & 15\,a^{4}\\
a^{3} & 3\,a^{2} & 3\,a & -a^{3} & 3\,a^{2} & -3\,a\\
-a^{4} & -4\,a^{3} & -6\,a^{2} & -a^{4} & 4\,a^{3} & -6\,a^{2}\\
a^{5} & 5\,a^{4} & 10\,a^{3} & -a^{5} & 5\,a^{4} & -10\,a^{3}
\end{array}\right].\label{eq:specir3bZ}
\end{gather}

In the case when the roots of $\chi\left(s\right)$ are pure imaginary
numbers $\pm b\mathrm{i}$ the circuit matrices are

\begin{equation}
\mathscr{H}=\left[\begin{array}{rrrrrr}
0 & 0 & 0 & -b^{6} & 0 & 0\\
1 & 0 & 0 & 0 & 3\,b^{4} & 0\\
0 & 1 & 0 & 0 & 0 & -3\,b^{2}\\
0 & 0 & 0 & 0 & -1 & 0\\
0 & 0 & 0 & 0 & 0 & -1\\
0 & 0 & 1 & 0 & 0 & 0
\end{array}\right],\quad\mathscr{J}=\left[\begin{array}{rrrrrr}
b\mathrm{i} & 1 & 0 & 0 & 0 & 0\\
0 & b\mathrm{i} & 1 & 0 & 0 & 0\\
0 & 0 & b\mathrm{i} & 0 & 0 & 0\\
0 & 0 & 0 & -b\mathrm{i} & 1 & 0\\
0 & 0 & 0 & 0 & -b\mathrm{i} & 1\\
0 & 0 & 0 & 0 & 0 & -b\mathrm{i}
\end{array}\right],\label{eq:specir3cH}
\end{equation}

\begin{gather}
\mathscr{Z}=\left[\begin{array}{rrrrrr}
b^{8} & -2\,b^{7}\mathrm{i} & -b^{6} & b^{8} & 2\,b^{7}\mathrm{i} & -b^{6}\\
2\,b^{7}\mathrm{i} & 8\,b^{6} & -9\,b^{5}\mathrm{i} & -2\,b^{7}\mathrm{i} & 8\,b^{6} & 9\,b^{5}\mathrm{i}\\
-b^{6} & 6\,b^{5}\mathrm{i} & 15\,b^{4} & -b^{6} & -6\,b^{5}\mathrm{i} & 15\,b^{4}\\
-b^{3}\mathrm{i} & -3\,b^{2} & 3\,b\mathrm{i} & b^{3}\mathrm{i} & -3\,b^{2} & -3\,b\mathrm{i}\\
-b^{4} & 4\,b^{3}\mathrm{i} & 6\,b^{2} & -b^{4} & -4\,b^{3}\mathrm{i} & 6\,b^{2}\\
b^{5}\mathrm{i} & 5\,b^{4} & -10\,b^{3}\mathrm{i} & -b^{5}\mathrm{i} & 5\,b^{4} & 10\,b^{3}\mathrm{i}
\end{array}\right].\label{eq:specir3cJZ}
\end{gather}
Applying to the case $n=3$ the general formulas (\ref{eq:specirn2h})
for the eigenfrequencies $\omega_{j}=\pm\frac{1}{\sqrt{L_{j}C_{j}}}$
of $LC$-loops as they were decoupled we obtain
\begin{equation}
\omega_{1}=\pm\frac{b}{\sqrt{3}},\quad\omega_{2}=\pm b,\quad\omega_{3}=\pm\sqrt{3}b.\label{eq:specir3d}
\end{equation}

\subsection{Special circuit for two real or pure imaginary eigenvalues and $n=4$}

This special circuit shown in Fig. has $4$ f-loops. It provides for
two Jordan blocks of order 4 for circuit elements values as in Tables
\ref{tab:spe-cir4a} and \ref{tab:spe-cir4b}.
\begin{figure}[h]
\centering{}\includegraphics[scale=0.8]{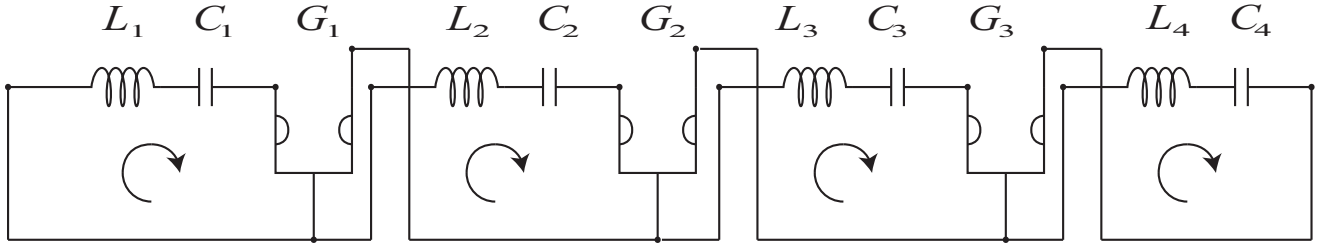}\caption{\label{fig:pri-cir4} Special circuit for 3 $LC$-loops.}
\end{figure}

The values of circuit elements are provided in Tables \ref{tab:spe-cir4a}
and \ref{tab:spe-cir4b} (parameter $r$ refers to the roots of the
characteristic polynomial $\chi\left(s\right)$, that is $\chi\left(r\right)=0$).
\begin{table}[h]
\centering{}\caption{Circuit elements values, $n=4$}
\label{tab:spe-cir4a}%
\begin{tabular}{|c||c||c||c||c|}
\hline 
$k$ & $1$ & $2$ & $3$ & $4$\tabularnewline[0.1cm]
\hline 
\hline 
$L_{k}$ & $\frac{1}{a^{8}}$ & $\frac{1}{4\,a^{6}}$ & $\frac{1}{6\,a^{4}}$ & $\frac{1}{4\,a^{2}}$\tabularnewline[0.1cm]
\hline 
\hline 
$C_{k}$ & $-4\,a^{6}$ & $-6\,a^{4}$ & $-4\,a^{2}$ & $-1$\tabularnewline[0.1cm]
\hline 
\hline 
$G_{k}$ & $-\frac{1}{4\,a^{6}}$ & $-\frac{1}{6\,a^{4}}$ & $-\frac{1}{4\,a^{2}}$ & \tabularnewline[0.1cm]
\hline 
\hline 
$r;\;n$ & $\pm a;$$\;4$ &  &  & \tabularnewline[0.1cm]
\hline 
\end{tabular}
\end{table}

The corresponding polynomial is
\begin{equation}
\chi\left(s\right)=\left(s^{2}-a^{2}\right)^{4},\quad\left(s^{2}+b^{2}\right)^{4},\label{eq:specir4a}
\end{equation}
In the case when the roots of $\chi\left(s\right)$ are real numbers
$\pm a$ the circuit matrices are
\begin{equation}
\mathscr{H}=\left[\begin{array}{rrrrrrrr}
0 & 0 & 0 & 0 & a^{8} & 0 & 0 & 0\\
1 & 0 & 0 & 0 & 0 & 4\,a^{6} & 0 & 0\\
0 & 1 & 0 & 0 & 0 & 0 & 6\,a^{4} & 0\\
0 & 0 & 1 & 0 & 0 & 0 & 0 & 4\,a^{2}\\
0 & 0 & 0 & 0 & 0 & -1 & 0 & 0\\
0 & 0 & 0 & 0 & 0 & 0 & -1 & 0\\
0 & 0 & 0 & 0 & 0 & 0 & 0 & -1\\
0 & 0 & 0 & 1 & 0 & 0 & 0 & 0
\end{array}\right],\quad\mathscr{J}=\left[\begin{array}{rrrrrrrr}
a & 1 & 0 & 0 & 0 & 0 & 0 & 0\\
0 & a & 1 & 0 & 0 & 0 & 0 & 0\\
0 & 0 & a & 1 & 0 & 0 & 0 & 0\\
0 & 0 & 0 & a & 0 & 0 & 0 & 0\\
0 & 0 & 0 & 0 & -a & 1 & 0 & 0\\
0 & 0 & 0 & 0 & 0 & -a & 1 & 0\\
0 & 0 & 0 & 0 & 0 & 0 & -a & 1\\
0 & 0 & 0 & 0 & 0 & 0 & 0 & -a
\end{array}\right],\label{eq:specir4bH}
\end{equation}

\begin{equation}
\mathscr{Z}=\left[\begin{array}{rrrrrrrr}
-a^{11} & -3\,a^{10} & -3\,a^{9} & -a^{8} & a^{11} & -3\,a^{10} & 3\,a^{9} & -a^{8}\\
3\,a^{10} & 14\,a^{9} & 23\,a^{8} & 16\,a^{7} & 3\,a^{10} & -14\,a^{9} & 23\,a^{8} & -16\,a^{7}\\
-3\,a^{9} & -19\,a^{8} & -48\,a^{7} & -56\,a^{6} & 3\,a^{9} & -19\,a^{8} & 48\,a^{7} & -56\,a^{6}\\
a^{8} & 8\,a^{7} & 28\,a^{6} & 56\,a^{5} & a^{8} & -8\,a^{7} & 28\,a^{6} & -56\,a^{5}\\
-a^{4} & -4\,a^{3} & -6\,a^{2} & -4\,a & -a^{4} & 4\,a^{3} & -6\,a^{2} & 4\,a\\
a^{5} & 5\,a^{4} & 10\,a^{3} & 10\,a^{2} & -a^{5} & 5\,a^{4} & -10\,a^{3} & 10\,a^{2}\\
-a^{6} & -6\,a^{5} & -15\,a^{4} & -20\,a^{3} & -a^{6} & 6\,a^{5} & -15\,a^{4} & 20\,a^{3}\\
a^{7} & 7\,a^{6} & 21\,a^{5} & 35\,a^{4} & -a^{7} & 7\,a^{6} & -21\,a^{5} & 35\,a^{4}
\end{array}\right]\label{eq:specir4bZ}
\end{equation}
\begin{table}[h]
\centering{}\caption{Circuit elements values, $n=4$}
\label{tab:spe-cir4b}%
\begin{tabular}{|c||c||c||c||c|}
\hline 
$k$ & $1$ & $2$ & $3$ & $4$\tabularnewline[0.1cm]
\hline 
\hline 
$L_{k}$ & $\frac{1}{b^{8}}$ & -$\frac{1}{4\,b^{6}}$ & $\frac{1}{6\,b^{4}}$ & -$\frac{1}{4\,b^{2}}$\tabularnewline[0.1cm]
\hline 
\hline 
$C_{k}$ & $4\,b^{6}$ & $-6\,b^{4}$ & $4\,b^{2}$ & $-1$\tabularnewline[0.1cm]
\hline 
\hline 
$G_{k}$ & $\frac{1}{4\,b^{6}}$ & $-\frac{1}{6\,b^{4}}$ & $\frac{1}{4\,b^{2}}$ & \tabularnewline[0.1cm]
\hline 
\hline 
$r;\;n$ & $\pm b\mathrm{i};$$\;4$ &  &  & \tabularnewline[0.1cm]
\hline 
\end{tabular}
\end{table}
In the case when the roots of $\chi\left(s\right)$ are pure imaginary
numbers $\pm b\mathrm{i}$ the circuit matrices are

\begin{equation}
\mathscr{H}=\left[\begin{array}{rrrrrrrr}
0 & 0 & 0 & 0 & b^{8} & 0 & 0 & 0\\
1 & 0 & 0 & 0 & 0 & -4\,b^{6} & 0 & 0\\
0 & 1 & 0 & 0 & 0 & 0 & 6\,b^{4} & 0\\
0 & 0 & 1 & 0 & 0 & 0 & 0 & -4\,b^{2}\\
0 & 0 & 0 & 0 & 0 & -1 & 0 & 0\\
0 & 0 & 0 & 0 & 0 & 0 & -1 & 0\\
0 & 0 & 0 & 0 & 0 & 0 & 0 & -1\\
0 & 0 & 0 & 1 & 0 & 0 & 0 & 0
\end{array}\right],\quad\mathscr{J}=\left[\begin{array}{rrrrrrrr}
b\mathrm{i} & 1 & 0 & 0 & 0 & 0 & 0 & 0\\
0 & b\mathrm{i} & 1 & 0 & 0 & 0 & 0 & 0\\
0 & 0 & b\mathrm{i} & 1 & 0 & 0 & 0 & 0\\
0 & 0 & 0 & b\mathrm{i} & 0 & 0 & 0 & 0\\
0 & 0 & 0 & 0 & -b\mathrm{i} & 1 & 0 & 0\\
0 & 0 & 0 & 0 & 0 & -b\mathrm{i} & 1 & 0\\
0 & 0 & 0 & 0 & 0 & 0 & -b\mathrm{i} & 1\\
0 & 0 & 0 & 0 & 0 & 0 & 0 & -b\mathrm{i}
\end{array}\right],\label{eq:specir4cH}
\end{equation}

\begin{equation}
\mathscr{Z}=\left[\begin{array}{rrrrrrrr}
b^{11}\mathrm{i} & 3\,b^{10} & -3\,b^{9}\mathrm{i} & -b^{8} & -b^{11}\mathrm{i} & 3\,b^{10} & 3\,b^{9}\mathrm{i} & -b^{8}\\
-3\,b^{10} & 14\,b^{9}\mathrm{i} & 23\,b^{8} & -16\,b^{7}\mathrm{i} & -3\,b^{10} & -14\,b^{9}\mathrm{i} & 23\,b^{8} & 16\,b^{7}\mathrm{i}\\
-3\,b^{9}\mathrm{i} & -19\,b^{8} & 48\,b^{7}\mathrm{i} & 56\,b^{6} & 3\,b^{9}\mathrm{i} & -19\,b^{8} & -48\,b^{7}\mathrm{i} & 56\,b^{6}\\
b^{8} & -8\,b^{7}\mathrm{i} & -28\,b^{6} & 56\,b^{5}\mathrm{i} & b^{8} & 8\,b^{7}\mathrm{i} & -28\,b^{6} & -56\,b^{5}\mathrm{i}\\
-b^{4} & 4\,b^{3}\mathrm{i} & 6\,b^{2} & -4\,b\mathrm{i} & -b^{4} & -4\,b^{3}\mathrm{i} & 6\,b^{2} & 4\,b\mathrm{i}\\
b^{5}\mathrm{i} & 5\,b^{4} & -10\,b^{3}\mathrm{i} & -10\,b^{2} & -b^{5}\mathrm{i} & 5\,b^{4} & 10\,b^{3}\mathrm{i} & -10\,b^{2}\\
b^{6} & -6\,b^{5}\mathrm{i} & -15\,b^{4} & 20\,b^{3}\mathrm{i} & b^{6} & 6\,b^{5}\mathrm{i} & -15\,b^{4} & -20\,b^{3}\mathrm{i}\\
-b^{7}\mathrm{i} & -7\,b^{6} & 21\,b^{5}\mathrm{i} & 35\,b^{4} & b^{7}\mathrm{i} & -7\,b^{6} & -21\,b^{5}\mathrm{i} & 35\,b^{4}
\end{array}\right].\label{eq:specir4cZ}
\end{equation}
Applying to the case $n=4$ the general formulas (\ref{eq:specirn2h})
for the eigenfrequencies $\omega_{j}=\pm\frac{1}{\sqrt{L_{j}C_{j}}}$
of $LC$-loops as they were decoupled we obtain
\begin{equation}
\omega_{1}=\pm\frac{b}{2},\quad\omega_{2}=\pm\frac{\sqrt{6}}{3}b,\quad\omega_{3}=\pm\frac{\sqrt{6}}{2}b,\quad\omega_{4}=\pm2b.\label{eq:specir4d}
\end{equation}

\subsection{Special circuit for Hamiltonian quadruple of complex eigenvalues
and $n=2$}

The special circuit for $4$ f-loops is shown in Fig. \ref{fig:pri-cir4}.
The circuit polynomial in this special case is
\begin{gather}
\chi\left(s\right)=\left[\left(s-\zeta\right)\left(s-\bar{\zeta}\right)\left(s+\zeta\right)\left(s+\bar{\zeta}\right)\right]^{2}=\label{eq:chiquad1}\\
=\left(a^{2}+b^{2}+2bs+s^{2}\right)^{2}\left(a^{2}+b^{2}-2bs+s^{2}\right)^{2},\quad\zeta=a+b\mathrm{i},\nonumber 
\end{gather}
where $a\neq0$ and $b\neq0$. The values of the circuit elements
are provided in Table \ref{tab:spe-cir4com} (parameter $r$ refers
to the roots of the characteristic polynomial $\chi\left(s\right)$,
that is $\chi\left(r\right)=0$).. We have computed all significant
matrices but the number of their entries combined with the length
of their expressions are too large to be displayed here.
\begin{table}[h]
\centering{}\caption{Circuit elements values, $n=2$}
\label{tab:spe-cir4com}%
\begin{tabular}{|c||c||c||c||c|}
\hline 
$k$ & $1$ & $2$ & $3$ & $4$\tabularnewline[0.1cm]
\hline 
\hline 
$L_{k}$ & $\frac{1}{\left(a^{2}+b^{2}\right)^{4}}$ & $-\frac{1}{4\,\left(a^{2}-b^{2}\right)\left(a^{2}+b^{2}\right)^{2}}$ & $\frac{1}{2\left(3\,a^{4}+3\,b^{4}-a^{2}b^{2}\right)}$ & $-\frac{1}{4\,\left(a^{2}-b^{2}\right)}$\tabularnewline[0.1cm]
\hline 
\hline 
$C_{k}$ & $4\,\left(a^{2}-b^{2}\right)\left(a^{2}+b^{2}\right)^{2}$ & $-2\left(3\,a^{4}+3\,b^{4}-a^{2}b^{2}\right)$ & $4\,\left(a^{2}-b^{2}\right)$ & $-1$\tabularnewline[0.1cm]
\hline 
\hline 
$G_{k}$ & $\frac{1}{4\,\left(a^{2}-b^{2}\right)\left(a^{2}+b^{2}\right)^{2}}$ & $-\frac{1}{2\left(2a^{2}b^{2}-3\,a^{4}-3\,b^{4}\right)}$ & $\frac{1}{4\,\left(a^{2}-b^{2}\right)}$ & \tabularnewline[0.1cm]
\hline 
\hline 
$r;\;n$ & $a\pm b\mathrm{i},\;-a\pm b\mathrm{i}$;$\;2$ &  &  & \tabularnewline[0.1cm]
\hline 
\end{tabular}
\end{table}

\subsection{Special circuits for two real or pure imaginary eigenvalues for $n\protect\geq2$\label{subsec:spe-cirn}}

We consider here the special circuits associated with the characteristic
polynomials $\chi\left(s\right)$ defined by equation (\ref{eq:specirk1a})
for arbitrary $n\geq2$. Namely, in case of $\chi\left(s\right)=\left(s^{2}-a^{2}\right)^{n}$
using the binomial formula and expression (\ref{eq:Lpricir1d}) for
general characteristic polynomial we obtain the following expressions
for coefficients
\begin{equation}
a_{j}=\left(-1\right)^{j}\binom{n}{j}a^{2\left(n-j\right)},\quad1\leq j\leq n,\label{eq:specirn1a}
\end{equation}
where $\binom{n}{j}$ is the binomial coefficient defined in equation
(\ref{eq:cafoH1g}). Then equations (\ref{eq:Lpricir1f})-(\ref{eq:Lpricir1g})
we obtain the following formulas for the circuit elements
\begin{equation}
L_{j}=\frac{1}{\left(-1\right)^{j-1}a_{j-1}}=\frac{1}{\binom{n}{j-1}a^{2\left(n-j+1\right)}},\quad1\leq j\leq n;\label{eq:specirn1b}
\end{equation}
\begin{equation}
C_{j}=\left(-1\right)^{j-1}a_{j}=-\binom{n}{j}a^{2\left(n-j\right)},\quad1\leq j\leq n-1,\quad C_{n}=-1\label{eq:specirn1c}
\end{equation}
\begin{equation}
G_{j}=\frac{1}{\left(-1\right)^{j-1}a_{j}}=-\frac{1}{\binom{n}{j}a^{2\left(n-j\right)}},\quad1\leq j\leq n-1.\label{eq:specirn1d}
\end{equation}
The equations (\ref{eq:specirn1b}) and (\ref{eq:specirn1b}) imply
\begin{equation}
L_{j}C_{j}=-\frac{\binom{n}{j}}{\binom{n}{j-1}a^{2}}=-\frac{n-j+1}{ja^{2}},\quad1\leq j\leq n.\label{eq:specirn1e}
\end{equation}
The formulas (\ref{eq:specirn1e}) imply in turn the following expressions
for the eigenfrequencies of involved $LC$-loops when decoupled
\begin{equation}
\omega_{j}=\pm\frac{1}{\sqrt{L_{j}C_{j}}}=\pm\sqrt{\frac{j}{n-j+1}}a\mathrm{i},\quad1\leq j\leq n.\label{eq:specirn1h}
\end{equation}
The eigenfrequencies $\omega_{j}$ in equations (\ref{eq:specirn1h})
are evidently pure imaginary.

The values of circuit elements and other quantities associated with
the characteristic polynomial $\chi\left(s\right)=\left(s^{2}+b^{2}\right)^{n}$
can be readily obtained from equations (\ref{eq:specirn1a})-(\ref{eq:specirn1h})
by plugging into them $a=b\mathrm{i}$. That yields
\begin{equation}
a_{j}=\left(-1\right)^{n}\binom{n}{j}b^{2\left(n-j\right)},\quad1\leq j\leq n.\label{eq:specirn2a}
\end{equation}
\begin{equation}
L_{j}=\frac{1}{\left(-1\right)^{j-1}a_{j-1}}=\frac{\left(-1\right)^{n-j+1}}{\binom{n}{j-1}b^{2\left(n-j+1\right)}},\quad1\leq j\leq n;\label{eq:specirn2b}
\end{equation}
\begin{equation}
C_{j}=\left(-1\right)^{j-1}a_{j}=\left(-1\right)^{n-j+1}\binom{n}{j}b^{2\left(n-j\right)},\quad1\leq j\leq n-1,\quad C_{n}=-1;\label{eq:specirn2c}
\end{equation}
\begin{equation}
G_{j}=\frac{1}{\left(-1\right)^{j-1}a_{j}}=\frac{\left(-1\right)^{n-j+1}}{\binom{n}{j}b^{2\left(n-j\right)}},\quad1\leq j\leq n-1.\label{eq:specirn2d}
\end{equation}
\begin{equation}
L_{j}C_{j}=\frac{\binom{n}{j}}{\binom{n}{j-1}b^{2}}=\frac{n-j+1}{jb^{2}},,\quad1\leq j\leq n.\label{eq:specirn2e}
\end{equation}
Then the eigenfrequencies of $LC$-loops as they were decoupled are
\begin{equation}
\omega_{j}=\pm\frac{1}{\sqrt{L_{j}C_{j}}}=\pm\sqrt{\frac{j}{n-j+1}}b,\quad1\leq j\leq n.\label{eq:specirn2h}
\end{equation}

\section{Jordan form of a circuit composed of two $LC$-loops and a gyrator}

In previous sections our primary goal was to introduce and study some
circuits with evolution matrices exhibiting nontrivial Jordan block.
The goal of this section is somewhat different. We want to study here
our simplest circuit composed of two $LC$-loops coupled by a gyrator
as shown in Fig. \ref{fig:pri-cir2} without imposing initially any
a priori assumptions on the circuit parameters $L_{1}$, $C_{1}$,
$L_{2}$, $C_{2}$ and $G_{1}$ except for that they are all real
and non-zero. The Lagrangian $\mathcal{L}$ for such a circuit is
described by equation (\ref{eq:specir21b}) and its evolution equations
are the corresponding EL equations (\ref{eq:pricir21c}). These equations
can be readily recast into the following matrix form
\begin{equation}
A\left(\partial_{t}\right)q=0,\quad q=\left[\begin{array}{c}
q_{1}\\
q_{2}
\end{array}\right],\quad A\left(s\right)=\left[\begin{array}{cc}
s^{2}+\frac{1}{L_{1}C_{1}} & -\frac{sG_{1}}{L_{1}}\\
\frac{sG_{1}}{L_{2}} & s^{2}+\frac{1}{L_{2}C_{2}}
\end{array}\right],\label{eq:Adqs1a}
\end{equation}
where $A\left(s\right)$ is evidently a $2\times2$ monic matrix polynomial
of $s$ of the degree 2, namely
\begin{equation}
A\left(s\right)=s^{2}\left[\begin{array}{cc}
1 & 0\\
0 & 1
\end{array}\right]+s\left[\begin{array}{cc}
0 & -\frac{G_{1}}{L_{1}}\\
\frac{G_{1}}{L_{2}} & 0
\end{array}\right]+\left[\begin{array}{cc}
\frac{1}{L_{1}C_{1}} & 0\\
0 & \frac{1}{L_{2}C_{2}}
\end{array}\right].\label{eq:Adqa1b}
\end{equation}
Then matrix polynomial eigenvalue problem associated with the matrix
differential equation (\ref{eq:Adqs1a}) is
\begin{equation}
A\left(s\right)q=\left[\begin{array}{cc}
s^{2}+\frac{1}{L_{1}C_{1}} & -\frac{sG_{1}}{L_{1}}\\
\frac{sG_{1}}{L_{2}} & s^{2}+\frac{1}{L_{2}C_{2}}
\end{array}\right]\left[\begin{array}{c}
q_{1}\\
q_{2}
\end{array}\right]=0.\label{eq:Adqs1c}
\end{equation}
The matrix polynomial eigenvalue problem (\ref{eq:Adqs1c}) is evidently
nonlinear. According to matrix polynimals theory reviewed in Section
\ref{sec:mat-poly} the second-order vector differential equation
(\ref{eq:Adqs1a}) can be reduced to the standard first-order vector
differential equation
\begin{equation}
\partial_{t}\mathsf{q}=\mathscr{C}\mathsf{q},\quad\mathscr{C}=\left[\begin{array}{rrrr}
0 & 0 & 1 & 0\\
0 & 0 & 0 & 1\\
-\frac{1}{L_{1}C_{1}} & 0 & 0 & \frac{G_{1}}{L_{1}}\\
0 & -\frac{1}{L_{2}C_{2}} & -\frac{G_{1}}{L_{2}} & 0
\end{array}\right],\quad\mathsf{q}=\left[\begin{array}{c}
q\\
\partial_{t}q
\end{array}\right],\label{eq:Adqs1d}
\end{equation}
where $\mathscr{C}$ is the $4\times4$ companion matrix for the matrix
polynomial $A\left(s\right)$. Then the standard eigenvalue problem
corresponding to the matrix polynomial eigenvalue problem (\ref{eq:Adqa1b})
is
\begin{equation}
\left(s\mathbb{I}-\mathscr{C}\right)\mathsf{q}=0,\quad\mathsf{q}=\left[\begin{array}{c}
q\\
sq
\end{array}\right].\label{eq:Adqs1e}
\end{equation}
The characteristic polynomial associated with the matrix polynomial
$A\left(s\right)$ and its linearized version $s\mathbb{I}-\mathscr{C}$
is
\begin{equation}
\chi\left(s\right)=\det\left\{ A\left(s\right)\right\} =\det\left\{ s\mathbb{I}-\mathscr{C}\right\} =s^{4}+\left(\frac{G_{1}^{2}}{L_{1}L_{2}}+\frac{1}{L_{2}C_{2}}+\frac{1}{L_{1}C_{1}}\right)s^{2}+\frac{1}{L_{1}C_{1}L_{2}C_{2}}.\label{eq:Adqs2a}
\end{equation}
Consequently, the eigenvalues associated with the eigenvalue problems
(\ref{eq:Adqs1c}) and (\ref{eq:Adqs1e}) can be found from the characteristic
equation
\begin{equation}
\chi\left(s\right)=s^{4}+\left(\frac{G_{1}^{2}}{L_{1}L_{2}}+\frac{1}{L_{2}C_{2}}+\frac{1}{L_{1}C_{1}}\right)s^{2}+\frac{1}{L_{1}C_{1}L_{2}C_{2}}=0.\label{eq:Adqs2as}
\end{equation}

Our primary goal here is to identify all real non-zero values of the
circuit parameters $L_{1}$, $C_{1}$, $L_{2}$, $C_{2}$ and $G_{1}$
for which the matrix $\mathscr{C}$ defined by equation (\ref{eq:Adqs1d})
has nontrivial Jordan form. Our general studies of matrix polynomials
in Section \ref{sec:mat-poly}, particularly Theorem \ref{thm:Jord-block}
imply the following statement.
\begin{thm}[Jordan form of the companion matrix]
\label{thm:Jord2LC} Let $s_{0}$ be an eigenvalue of the companion
matrix $\mathscr{C}$ defined by equation (\ref{eq:Adqs1d}) such
that its algebraic multiplicity $m\left(s_{0}\right)$$\geq2$. Then
(i) $s_{0}\neq0$; (ii) $-s_{0}$ is also an eigenvalue of $\mathscr{C}$;
(iii) $s_{0}$ is either real or pure imaginary; (iv) $m\left(s_{0}\right)$$=m\left(-s_{0}\right)=2$;
(v) the Jordan form $\mathscr{J}$ of the matrix $\mathscr{C}$ is
\begin{equation}
\mathscr{J}=\left[\begin{array}{rrrr}
s_{0} & 1 & 0 & 0\\
0 & s_{0} & 0 & 0\\
0 & 0 & -s_{0} & 1\\
0 & 0 & 0 & -s_{0}
\end{array}\right].\label{eq:msJ1a}
\end{equation}
That is because of the special form the companion matrix $\mathscr{C}$
the eigenvalue degeneracy for $\mathscr{C}$ implies that its Jordan
form $\mathscr{J}$ consists of two Jordan blocks of the size 2.
\end{thm}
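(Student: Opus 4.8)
The plan is to reduce the entire statement to elementary properties of the even biquadratic characteristic polynomial $\chi(s)=s^{4}+ps^{2}+q$ given in (\ref{eq:Adqs2a}), where $p=\frac{G_{1}^{2}}{L_{1}L_{2}}+\frac{1}{L_{2}C_{2}}+\frac{1}{L_{1}C_{1}}$ and $q=\frac{1}{L_{1}C_{1}L_{2}C_{2}}$ are real, together with the matrix-polynomial dictionary of Section \ref{sec:mat-poly}. Parts (i)--(iv) follow from the shape of $\chi$ alone; the companion structure of $\mathscr{C}$ enters only in part (v).

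First I would dispose of (i) and (ii): since all five parameters are real and nonzero, $\chi(0)=q\neq0$, so $s_{0}\neq0$; and $\chi$ is even, so $\chi(-s_{0})=\chi(s_{0})=0$ and $-s_{0}$ is again an eigenvalue. Next, setting $u=s^{2}$ and $g(u)=u^{2}+pu+q$, I would observe that because $s_{0}\neq0$ the factors $s-s_{0}$ and $s+s_{0}$ of $s^{2}-s_{0}^{2}$ are coprime, so the multiplicity of $s_{0}$ as a root of $\chi(s)=g(s^{2})$ equals the multiplicity of $u_{0}:=s_{0}^{2}$ as a root of $g$. As $g$ has degree $2$, the hypothesis $m(s_{0})\geq2$ forces $u_{0}$ to be the unique double root of $g$, whence $g(u)=(u-u_{0})^{2}$, $u_{0}=-\frac{p}{2}\in\mathbb{R}$, and $m(s_{0})=2$; the identity $u_{0}=(-s_{0})^{2}$ then gives $m(-s_{0})=2$ as well, which is (iv). Since $q=u_{0}^{2}\neq0$ we have $u_{0}\neq0$, so $u_{0}>0$ makes $s_{0}$ real and $u_{0}<0$ makes $s_{0}$ pure imaginary, which is (iii). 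At this point $\chi(s)=(s^{2}-u_{0})^{2}=(s-s_{0})^{2}(s+s_{0})^{2}$.

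The heart of the matter is (v). Here $\mathscr{C}$ from (\ref{eq:Adqs1d}) is the companion linearization of the $2\times2$ monic matrix polynomial $A(s)$ in (\ref{eq:Adqa1b}), so by Theorem \ref{thm:Jord-block} the Jordan blocks of $\mathscr{C}$ are precisely the elementary divisors of $A(s)$, read off from its Smith form $\mathrm{diag}(d_{1},d_{2})$ with $d_{1}\mid d_{2}$ and $d_{1}d_{2}=\det A(s)=\chi$. The decisive computation is that the first invariant factor $d_{1}$, namely the gcd of the entries of $A(s)$, is a nonzero constant: the off-diagonal entries $\mp\frac{sG_{1}}{L_{k}}$ vanish only at $s=0$ because $G_{1}\neq0$, whereas the diagonal entries $s^{2}+\frac{1}{L_{k}C_{k}}$ do not vanish at $s=0$ because $\frac{1}{L_{k}C_{k}}\neq0$, so no common factor survives. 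Hence the Smith form is $\mathrm{diag}(1,\chi)$, the elementary divisors are $(s-s_{0})^{2}$ and $(s+s_{0})^{2}$, and $\mathscr{C}$ is cyclic with exactly one block $J_{2}(s_{0})$ and one block $J_{2}(-s_{0})$, which is the form (\ref{eq:msJ1a}). I expect this last step to be the \emph{main obstacle}: one must correctly invoke the companion-matrix/elementary-divisor correspondence for matrix polynomials and carry out the gcd computation establishing $d_{1}=1$; once cyclicity is secured, (v) is immediate from the factorization of $\chi$ already obtained.
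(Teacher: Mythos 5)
Your proposal is correct, and its load-bearing computation coincides with the paper's; the packaging differs in two places worth noting. For (iii)--(iv) the paper argues via realness of the coefficients: $\bar{s}_{0}$ is also a root, so if $s_{0}$ had nonzero real and imaginary parts the four distinct numbers $s_{0},-s_{0},\bar{s}_{0},-\bar{s}_{0}$ would exhaust the degree-4 polynomial and contradict $m\left(s_{0}\right)\geq2$; your substitution $u=s^{2}$, with the coprimality of $s-s_{0}$ and $s+s_{0}$ transferring multiplicities to the quadratic $g\left(u\right)$, is a genuinely different and arguably tighter route, since it delivers (iii), (iv) and the factorization $\chi\left(s\right)=\left(s^{2}-u_{0}\right)^{2}$ in one stroke. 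For (v) the substance is identical: the paper shows $A\left(s\right)$ is never the zero matrix (off-diagonal entries vanish only at $s=0$ because $G_{1}\neq0$, and then the diagonal would force $\frac{1}{L_{1}C_{1}}=0$), deduces $\dim\left\{ \ker\left\{ A\left(s_{0}\right)\right\} \right\} =1$, and invokes Theorem \ref{thm:Jord-block}; your computation that the gcd of the entries of $A\left(s\right)$ is a nonzero constant is exactly the same verification, merely recast in Smith-form language. One caution: you attribute to Theorem \ref{thm:Jord-block} the statement that the Jordan blocks of $\mathscr{C}$ are the elementary divisors of $A\left(s\right)$; as stated in the paper that theorem is phrased in terms of geometric multiplicity, not Smith forms (the Smith-form correspondence is instead implicit in the equivalence (\ref{eq:CBA1f})). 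This is a misattribution rather than a gap, because your gcd computation immediately gives the hypothesis the theorem actually requires: a nonzero $2\times2$ matrix with vanishing determinant has rank 1, so $\dim\left\{ \ker\left\{ A\left(s_{0}\right)\right\} \right\} =1$, and the theorem as written then yields exactly one block $J_{2}\left(s_{0}\right)$ and one block $J_{2}\left(-s_{0}\right)$.
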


\begin{proof}
The eigenvalue $s_{0}$ satisfies $\chi\left(s_{0}\right)=0$. Since
in view of equation (\ref{eq:Adqs2a}) $\chi\left(0\right)=\frac{1}{L_{1}C_{1}L_{2}C_{2}}\neq0$
we infer that $s_{0}\neq0$. The characteristic equation (\ref{eq:Adqs2as})
implies that $\chi\left(-s_{0}\right)=\chi\left(s_{0}\right)=0$ and
hence $-s_{0}$ is an eigenvalue. Notice that since all coefficients
of the characteristic equation (\ref{eq:Adqs2as}) are real then number
$\bar{s}_{0}$ which complex-conjugate to $s_{0}$ is also an eigenvalue
since $\chi\left(\bar{s}_{0}\right)=\overline{\chi\left(s_{0}\right)}=0$.
If $s_{0}$ would be a complex number with non-zero real and imaginary
parts then there would be four distinct eigenvalues $s_{0}$, $-s_{0}$,
$\bar{s}_{0}$ and $-\bar{s}_{0}$ for the the forth-degree characteristic
equation. That would make it impossible for the algebraic multiplicity
of $s_{0}$ to satisfy $m\left(s_{0}\right)$$\geq2$ which is a condition
of the theorem. Hence we have to infer that $s_{0}$ is either real
or pure imaginary. Consequently $s_{0}$ and $-s_{0}$ are the only
eigenvalues and are the roots of the characteristic polynomial $\chi\left(s\right)$.
Since $\chi\left(s\right)$ involves only even degrees of $s$ we
also have $m\left(s_{0}\right)=m\left(-s_{0}\right)$ implying $m\left(s_{0}\right)=m\left(-s_{0}\right)=2$.

Notice now that 
\begin{equation}
A\left(s\right)\neq0\text{ for any complex }s\label{eq:msJ1b}
\end{equation}
for otherwise based on entries of matrix $A\left(s\right)$ defined
by equation (\ref{eq:Adqs1a}) we will have to infer consequently
that $s=0$ and then $\frac{1}{L_{1}C_{1}}=0$, but the later is impossible.
Notice also
\begin{equation}
\dim\left\{ \ker\left\{ A\left(s_{0}\right)\right\} \right\} =1\label{eq:msJ1c}
\end{equation}
Indeed, since $\det\left\{ A\left(s_{0}\right)\right\} =0$ we have
$\dim\left\{ \ker\left\{ A\left(s_{0}\right)\right\} \right\} \geq1$.
On the other hand in view of relation (\ref{eq:msJ1b}) $A\left(s_{0}\right)\neq0$
implying that $\dim\left\{ \ker\left\{ A\left(s_{0}\right)\right\} \right\} <2$.
Hence we may conclude that equation (\ref{eq:msJ1c}) holds. Using
equation (\ref{eq:msJ1c}) and the statement of Theorem \ref{thm:Jord-block}
we can infer that matrix $\mathscr{C}$ has exactly one Jordan block
associated with $s_{0}$ of the size $m\left(s_{0}\right)=2$ and
the same statement holds for $-s_{0}$. Consequently, the Jordan form
of matrix $\mathscr{C}$ satisfies equation (\ref{eq:msJ1a}) and
this completes the proof of the theorem.
\end{proof}

\subsection{Characteristic equation and eigenvalue degeneracy}

The further analytical developments suggest to introduce the following
variables
\begin{equation}
\xi_{1}=\frac{1}{L_{1}C_{1}},\quad\xi_{2}=\frac{1}{L_{2}C_{2}},\quad g=G_{1}^{2},\quad h=s^{2},\label{eq:Adqs2b}
\end{equation}
and refer to positive $g$ as the \emph{gyration parameter}. Then
the companion matrix $\mathscr{C}$ defined by equations (\ref{eq:Adqs1d})
and its characteristic function $\chi\left(s\right)$ as in equation
(\ref{eq:Adqs2as}) take respectively the following forms
\begin{equation}
\mathscr{C}=\left[\begin{array}{rrrr}
0 & 0 & 1 & 0\\
0 & 0 & 0 & 1\\
-\xi_{1} & 0 & 0 & \frac{G_{1}}{L_{1}}\\
0 & -\xi_{2} & -\frac{G_{1}}{L_{2}} & 0
\end{array}\right],\label{eq:Adqs2c}
\end{equation}
\begin{equation}
\chi\left(s\right)=\chi_{h}=h^{2}+\left(\xi_{1}+\xi_{2}+\frac{g}{L_{1}L_{2}}\right)h+\xi_{1}\xi_{2},\quad h=s^{2}.\label{eq:Adqs2d}
\end{equation}
Being interested in degenerate eigenvalues satisfying equation $\chi\left(s\right)=\chi_{h}=0$
we turn to the discriminant $\Delta_{h}$ of the quadratic polynomial
$\chi_{h}$ defined by equation (\ref{eq:Adqs2d}), namely
\begin{equation}
\Delta_{h}=\frac{g^{2}}{L_{1}^{2}L_{2}^{2}}+\frac{2\left(\xi_{1}+\xi_{2}\right)g}{L_{1}L_{2}}+\left(\xi_{1}-\xi_{2}\right)^{2}.\label{eq:Adqs2e}
\end{equation}
Recall that the solutions to the quadratic equation $\chi_{h}=0$
are
\begin{equation}
h_{\pm}=\frac{-\left(\xi_{1}+\xi_{2}+\frac{g}{L_{1}L_{2}}\right)\pm\sqrt{\Delta_{h}}}{2}.\label{eq:Adqs2eh}
\end{equation}
Then the corresponding four solutions $s$ to the characteristic equation
(\ref{eq:Adqs2as}), that is the eigenvalues, are
\begin{equation}
s=\pm\sqrt{h_{+}},\pm\sqrt{h_{-}},\label{eq:Adqs2ehs}
\end{equation}
where $h_{\pm}$ satisfy equations (\ref{eq:Adqs2eh}).

Turning back to $h_{\pm}$ in (\ref{eq:Adqs2eh}) we notice that the
eigenvalue degeneracy condition turns into equation $\Delta_{h}=0$.
This equation can be viewed a constraint on the coefficients of the
quadratic in $h$ polynomial $\chi_{h}$ and ultimately on the circuit
parameters, namely
\begin{equation}
L_{1}^{2}L_{2}^{2}\Delta_{h}=g^{2}+2\left(\xi_{1}+\xi_{2}\right)gL_{1}L_{2}+\left(\xi_{1}-\xi_{2}\right)^{2}L_{1}^{2}L_{2}^{2}=0.\label{eq:Adqs3a}
\end{equation}
Equation (\ref{eq:Adqs3a}) is evidently a quadratic equation for
$g$. Being given remaining circuit coefficients $\xi_{1}$, $\xi_{2}$,
$L_{1}$ and $L_{2}$ this quadratic in $g$ equation has exactly
two solutions
\begin{equation}
\dot{g}_{\delta}=\left(-\xi_{1}-\xi_{2}+2\delta\sqrt{\xi_{1}\xi_{2}}\right)L_{1}L_{2},\quad\delta=\pm1.\label{eq:Adqs3b}
\end{equation}
We refer to $\dot{g}_{\delta}$ in equations (\ref{eq:Adqs3b}) as\emph{
special values of the gyration parameter $g$}. For the two special
values $\dot{g}$ we get from equations (\ref{eq:Adqs2eh}) the corresponding
two degenerate roots
\begin{equation}
\dot{h}=-\frac{\xi_{1}+\xi_{2}+\frac{\dot{g}}{L_{1}L_{2}}}{2}=\pm\sqrt{\xi_{1}\xi_{2}}.\label{eq:Adqs3bh}
\end{equation}
Since $G_{1}$ is real then $g=G_{1}^{2}$ is real as well. The expression
(\ref{eq:Adqs3b}) for $g$ is real-valued if and only if
\begin{equation}
\xi_{1}\xi_{2}>0,\text{ or equivalently }\mathrm{\frac{\xi_{1}}{\left|\xi_{1}\right|}=\frac{\xi_{2}}{\left|\xi_{2}\right|}}=\sigma,\label{eq:Adqs3c}
\end{equation}
where we introduced a binary variable $\sigma$ taking values $\pm1$.
We refer to $\sigma$ as the \emph{circuit sign index}. Relations
(\ref{eq:Adqs3c}) imply in particular that the equality of signs
$\mathrm{sign}\,\left\{ \xi_{1}\right\} =\mathrm{sign}\,\left\{ \xi_{2}\right\} $
is a necessary condition for the eigenvalue degeneracy condition $\Delta_{h}=0$
provided that $g$ has to be real-valued.

It follows then from relations (\ref{eq:Adqs3b}) and (\ref{eq:Adqs3c})
that the special values of the gyration parameter $g_{\delta}$ can
be recast as
\begin{gather}
\dot{g}_{\delta}=-\sigma\left(\sqrt{\left|\xi_{1}\right|}+\delta\sqrt{\left|\xi_{2}\right|}\right)^{2}L_{1}L_{2},\quad\delta=\pm1,\label{eq:Adqs3d}
\end{gather}
where $\sqrt{\xi}>0$ for $\xi>0$. Recall that $g=G_{1}^{2}>0$ and
to provide for that we must have in right-hand side of equations (\ref{eq:Adqs3d})
\begin{equation}
-\sigma L_{1}L_{2}>0\text{, or equivalently }\frac{L_{1}L_{2}}{\left|L_{1}L_{2}\right|}=-\sigma.\label{eq:Adqs3e}
\end{equation}
Relations (\ref{eq:Adqs3c}) and (\ref{eq:Adqs3e}) on the signs of
the involved parameters can be combined into the \emph{circuit sign
constraints}
\begin{equation}
\mathrm{sign}\,\left\{ \xi_{1}\right\} =\mathrm{sign}\,\left\{ \xi_{2}\right\} =-\mathrm{sign}\,\left\{ L_{1}L_{2}\right\} =\mathrm{sign}\,\left\{ \sigma\right\} .\label{eq:Adqs3es}
\end{equation}
\emph{Notice that the sign constraints (\ref{eq:Adqs3es}) involving
the circuit index $\sigma$ defined by (\ref{eq:Adqs3c}) are necessary
for the eigenvalue degeneracy condition $\Delta_{h}=0$. }Combing
equations (\ref{eq:Adqs3d}) and (\ref{eq:Adqs3e}) we obtain
\begin{gather}
\dot{g}_{\delta}=\left(\sqrt{\left|\xi_{1}\right|}+\delta\sqrt{\left|\xi_{2}\right|}\right)^{2}\left|L_{1}L_{2}\right|,\quad\delta=\pm1,\text{ assuming the circuit sign constraints}.\label{eq:Adqs3g}
\end{gather}
Since $g=G_{1}^{2}>0$ the special values of the gyrator resistance
$\dot{G}_{1}$ corresponding to the special values $\dot{g}_{\delta}$
as in equation (\ref{eq:Adqs3g}) are
\begin{gather}
\dot{G}_{1}=\sigma_{1}\sqrt{\dot{g}_{\delta}}=\sigma_{1}\left(\sqrt{\left|\xi_{1}\right|}+\delta\sqrt{\left|\xi_{2}\right|}\right)\sqrt{\left|L_{1}L_{2}\right|},\text{ assuming the circuit sign constraints, }\label{eq:Adqs3h}
\end{gather}
where the binary variable $\sigma_{1}$ takes values $\pm1$.

Using representation (\ref{eq:Adqs3g}) for $\dot{g}_{\delta}$ under
the circuit sign constraints (\ref{eq:Adqs3es}) we can recast the
expression for the degenerate root $\dot{h}$ in equations (\ref{eq:Adqs3bh})
as follows
\begin{equation}
\dot{h}_{\delta}=\sigma\delta\sqrt{\left|\xi_{1}\right|}\sqrt{\left|\xi_{2}\right|},\text{for }g=g_{\delta}=\left(\sqrt{\left|\xi_{1}\right|}+\delta\sqrt{\left|\xi_{2}\right|}\right)^{2}\left|L_{1}L_{2}\right|,\quad\delta=\pm1,\label{eq:Adqs3f}
\end{equation}
where $\sigma$ is the circuit sign index defined by equations (\ref{eq:Adqs3c})
and $\sqrt{\xi}>0$ for $\xi>0$.

The important elements of the above analysis are summarized in the
following statement.
\begin{lem}[the signs of the circuit parameters]
\label{lem:cirsign} Let the circuit be as depicted in Fig. \ref{fig:pri-cir2}
and all its parameters $L_{1}$, $C_{1}$, $L_{2}$, $C_{2}$ and
$G_{1}$ be real and non-zero. Then the companion matrix $\mathscr{C}$
satisfying equations (\ref{eq:Adqs1d}) and (\ref{eq:Adqs2c}) has
a degenerate eigenvalue if and only if (i) the sign constraints (\ref{eq:Adqs3es})
hold and (ii) $g=G_{1}^{2}$ satisfies equations (\ref{eq:Adqs3g}).
Under the circuit sign constraints (\ref{eq:Adqs3es}) the degenerate
solution $\dot{h}_{\delta}$ to the quadratic equation $\chi_{h}=0$
(see equation (\ref{eq:Adqs2d})) is determined by equation (\ref{eq:Adqs3f}).

The sign constraints (\ref{eq:Adqs3es}) are equivalent to
\begin{equation}
\mathrm{sign\,}\left\{ C_{1}\right\} =-\mathrm{sign\,}\left\{ L_{2}\right\} ,\quad\mathrm{sign\,}\left\{ L_{1}\right\} =-\mathrm{sign\,}\left\{ C_{2}\right\} .\label{eq:Adqs4b}
\end{equation}
\end{lem}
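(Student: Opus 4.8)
The plan is to recognize that the statement is essentially a consolidation of the computations carried out in equations (\ref{eq:Adqs2b})--(\ref{eq:Adqs3f}), so the proof reduces to organizing a chain of equivalences and then performing one short sign computation. The central claim to verify is that $\mathscr{C}$ has a degenerate (repeated) eigenvalue if and only if the discriminant $\Delta_h$ of (\ref{eq:Adqs2e}) vanishes; everything else follows by reading off already-derived formulas.

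First I would establish that first equivalence. The eigenvalues of $\mathscr{C}$ are $s=\pm\sqrt{h_+},\pm\sqrt{h_-}$, where $h_\pm$ are the roots of the quadratic $\chi_h$ in (\ref{eq:Adqs2d}). Since the product of the roots is $h_+h_-=\xi_1\xi_2\neq0$, neither root can vanish, and hence the four values $\pm\sqrt{h_+},\pm\sqrt{h_-}$ are pairwise distinct precisely when $h_+\neq h_-$. Consequently $\mathscr{C}$ possesses a repeated eigenvalue if and only if $h_+=h_-$, i.e. if and only if $\Delta_h=0$. When $\Delta_h=0$ one has $\chi(s)=(s^2-\dot h)^2$ with $\dot h=\pm\sqrt{\xi_1\xi_2}\neq0$, so $s=\pm\sqrt{\dot h}$ are two distinct double roots, matching the conclusion $m(s_0)=m(-s_0)=2$ of Theorem \ref{thm:Jord2LC}.

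Next I would read $\Delta_h=0$ as the quadratic equation (\ref{eq:Adqs3a}) in the gyration parameter $g$, whose solutions are the $\dot g_\delta$ of (\ref{eq:Adqs3b}). Because $G_1$ is real, $g=G_1^2$ must be real and positive, and I would extract the two sign conditions already recorded: reality of $\dot g_\delta$ forces $\xi_1\xi_2>0$, equivalently $\mathrm{sign}\{\xi_1\}=\mathrm{sign}\{\xi_2\}=\sigma$ as in (\ref{eq:Adqs3c}); positivity of $g$, after the rewriting (\ref{eq:Adqs3d}), forces $-\sigma L_1L_2>0$ as in (\ref{eq:Adqs3e}). Together these are the circuit sign constraints (\ref{eq:Adqs3es}), under which $\dot g_\delta$ takes the form (\ref{eq:Adqs3g}) and the degenerate root is $\dot h_\delta$ of (\ref{eq:Adqs3f}). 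The converse is immediate: if (\ref{eq:Adqs3es}) and (\ref{eq:Adqs3g}) hold then $g=\dot g_\delta$ solves $\Delta_h=0$, so $\mathscr{C}$ is degenerate. This proves the asserted equivalence and the formula for $\dot h_\delta$.

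Finally, for the reformulation (\ref{eq:Adqs4b}) I would use $\xi_j=1/(L_jC_j)$ to write $\mathrm{sign}\{\xi_1\}=\mathrm{sign}\{L_1\}\,\mathrm{sign}\{C_1\}$ and $\mathrm{sign}\{\xi_2\}=\mathrm{sign}\{L_2\}\,\mathrm{sign}\{C_2\}$. Substituting into $\mathrm{sign}\{\xi_1\}=-\mathrm{sign}\{L_1L_2\}$ and cancelling the common factor $\mathrm{sign}\{L_1\}$ gives $\mathrm{sign}\{C_1\}=-\mathrm{sign}\{L_2\}$, while the analogous step on the $\xi_2$ constraint, cancelling $\mathrm{sign}\{L_2\}$, gives $\mathrm{sign}\{C_2\}=-\mathrm{sign}\{L_1\}$; these are exactly (\ref{eq:Adqs4b}), and since every step is reversible the two sets of constraints are equivalent. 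The one point requiring genuine care is the first equivalence: one must check that passing from the quartic in $s$ to the quadratic in $h$ neither creates nor destroys degeneracies, which is precisely why the nonvanishing $\xi_1\xi_2\neq0$ (ruling out $h_\pm=0$) is essential; the rest is bookkeeping of prior identities and elementary sign algebra.
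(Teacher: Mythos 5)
Your proposal is correct and follows essentially the same route as the paper: the paper's own proof simply points back to the derivation of the discriminant condition $\Delta_{h}=0$, the quadratic in $g$, and the reality/positivity requirements on $g=G_{1}^{2}$ that yield (\ref{eq:Adqs3c}), (\ref{eq:Adqs3e}), and (\ref{eq:Adqs3g}), and declares the equivalence of (\ref{eq:Adqs3es}) and (\ref{eq:Adqs4b}) straightforward. Your only addition is to make explicit (via $\xi_{1}\xi_{2}\neq0$ ruling out $h_{\pm}=0$) that degeneracy of the quartic in $s$ is equivalent to $\Delta_{h}=0$, a point the paper leaves implicit.
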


\begin{proof}
The statements of lemma before equations (\ref{eq:Adqs4b}) have been
already argued. The verification of the equivalency of equations (\ref{eq:Adqs3es})
and (\ref{eq:Adqs4b}) is straightforward.
\end{proof}
Notice that when Lemma \ref{lem:cirsign} provides sharp criteria
for the companion companion matrix $\mathscr{C}$ to have a degenerate
eigenvalue Theorem \ref{thm:Jord2LC} states that if the companion
companion matrix $\mathscr{C}$ has such an eigenvalue then its Jordan
form $\mathscr{J}$ is formed by two Jordan blocks as in equation
(\ref{eq:msJ1a}). Consequently the following statement hold that
combines statements of Theorem \ref{thm:Jord2LC} and Lemma \ref{lem:cirsign}.
\begin{thm}[Jordan form under degeneracy]
\label{thm:jordeg} Let the circuit be as depicted in Fig. \ref{fig:pri-cir2}
and let all its parameters $L_{1}$, $C_{1}$, $L_{2}$, $C_{2}$
and $G_{1}$ be real and non-zero. Then the companion matrix $\mathscr{C}$
satisfying equations (\ref{eq:Adqs1d}) and (\ref{eq:Adqs2c}) has
the Jordan form
\begin{equation}
\mathscr{J}=\left[\begin{array}{rrrr}
s_{0} & 1 & 0 & 0\\
0 & s_{0} & 0 & 0\\
0 & 0 & -s_{0} & 1\\
0 & 0 & 0 & -s_{0}
\end{array}\right],\label{eq:Adqs4c}
\end{equation}
if and only if the circuit parameters satisfy the degeneracy conditions
described in Lemma \ref{lem:cirsign}. Then for $g=g_{\delta}$ we
have
\begin{equation}
\dot{h}_{\delta}=\sigma\delta\sqrt{\left|\xi_{1}\right|}\sqrt{\left|\xi_{2}\right|},\text{ for  }g=g_{\delta}=\left(\sqrt{\left|\xi_{1}\right|}+\delta\sqrt{\left|\xi_{2}\right|}\right)^{2}\left|L_{1}L_{2}\right|,\label{eq:Adqs4ca}
\end{equation}
\begin{equation}
\pm s_{0}=\pm\sqrt{\sigma\delta\sqrt{\left|\xi_{1}\right|}\sqrt{\left|\xi_{2}\right|}},\text{ for  }g=g_{\delta}=\left(\sqrt{\left|\xi_{1}\right|}+\delta\sqrt{\left|\xi_{2}\right|}\right)^{2}\left|L_{1}L_{2}\right|,\label{eq:Adqs4cb}
\end{equation}
where $\sqrt{\xi}>0$ for $\xi>0$, $\delta=\pm1$ and $\sigma$ is
the circuit sign index defined by equations (\ref{eq:Adqs3c}). According
to formula (\ref{eq:Adqs4cb}) degenerate eigenvalues $\pm s_{0}$
depend on the product $\sigma\delta$ and $\left|\xi_{1}\right|,\left|\xi_{2}\right|$
and consequently they are either real or pure imaginary depending
on whether $\delta=\sigma$ or $\delta=-\sigma$.

Notice that in the special case when $\left|\xi_{1}\right|=\left|\xi_{2}\right|$
the parameter $g_{\delta}$ takes only one non-zero value, namely
\begin{equation}
g=g_{1}=4\left|\xi_{1}\right|\left|L_{1}L_{2}\right|,\quad L_{1}L_{2}=-\sigma\left|L_{1}L_{2}\right|,\label{eq:Adqs4d}
\end{equation}
whereas $g_{-1}=0$ which is inconsistent with our assumption $G_{1}\neq0$.
Evidently for $g=0$ the circuit breaks into two independent $LC$-circuits
and in this case the relevant Jordan form is a diagonal $4\times4$
matrix with eigenvalues $\pm\sqrt{\xi_{1}}$ and $\pm\sqrt{\xi_{2}}$
.
\end{thm}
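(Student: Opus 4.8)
The plan is to assemble this statement directly from the two preceding results, Theorem~\ref{thm:Jord2LC} and Lemma~\ref{lem:cirsign}, so that very little new computation is required. First I would dispose of the equivalence for the Jordan form. The displayed matrix (\ref{eq:Adqs4c}) visibly carries an eigenvalue $s_0$ of algebraic multiplicity $2$, so if $\mathscr{C}$ is similar to it then $\mathscr{C}$ has a degenerate eigenvalue; Lemma~\ref{lem:cirsign} then tells us this is equivalent to the sign constraints (\ref{eq:Adqs3es}) together with $g=G_1^2$ obeying (\ref{eq:Adqs3g}). For the converse I would assume those degeneracy conditions, use Lemma~\ref{lem:cirsign} to produce a degenerate eigenvalue of $\mathscr{C}$, and then invoke Theorem~\ref{thm:Jord2LC} to conclude that the Jordan form is forced to be the two-block structure (\ref{eq:msJ1a}), which is exactly (\ref{eq:Adqs4c}). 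This settles the ``if and only if.''

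Next I would record the explicit degenerate data. Formula (\ref{eq:Adqs4ca}) is merely a restatement of (\ref{eq:Adqs3f}), already derived in the analysis leading to Lemma~\ref{lem:cirsign}, so I would simply cite it. To obtain (\ref{eq:Adqs4cb}) I would return to the substitution $h=s^2$ from (\ref{eq:Adqs2b}): at degeneracy the two roots $h_\pm$ of the quadratic $\chi_h$ coalesce into the single value $\dot h_\delta$, so the four eigenvalues (\ref{eq:Adqs2ehs}) collapse to the pair $\pm s_0=\pm\sqrt{\dot h_\delta}$, and substituting (\ref{eq:Adqs4ca}) gives (\ref{eq:Adqs4cb}).

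The real-versus-pure-imaginary dichotomy then follows from the sign of $\dot h_\delta$. Since $\sqrt{|\xi_1|}\sqrt{|\xi_2|}>0$, the sign of $\dot h_\delta=\sigma\delta\sqrt{|\xi_1|}\sqrt{|\xi_2|}$ is the sign of the product $\sigma\delta$: for $\delta=\sigma$ one has $\dot h_\delta>0$ and $s_0$ real, whereas for $\delta=-\sigma$ one has $\dot h_\delta<0$ and $s_0$ pure imaginary. For the special case $|\xi_1|=|\xi_2|$ I would substitute into (\ref{eq:Adqs3g}): the choice $\delta=-1$ gives $\left(\sqrt{|\xi_1|}-\sqrt{|\xi_2|}\right)^2|L_1L_2|=0$, which is ruled out by $G_1\neq0$ (hence $g=G_1^2\neq0$), leaving only $\delta=1$ and the single value $g=g_1=4|\xi_1||L_1L_2|$ of (\ref{eq:Adqs4d}). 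Finally, the concluding remark about $g=0$ is immediate: setting $G_1=0$ in (\ref{eq:Adqs2c}) decouples $\mathscr{C}$ into two independent $LC$ blocks, so $\chi_h$ factors as $(h+\xi_1)(h+\xi_2)$ and the matrix is diagonalizable with no nontrivial Jordan block.

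Since every ingredient is already in place, I anticipate no genuine obstacle; the only point demanding care is keeping the sign conventions for $\sigma$, $\delta$ and the branch of $\sqrt{\cdot}$ consistent across (\ref{eq:Adqs3f}), (\ref{eq:Adqs4ca}) and (\ref{eq:Adqs4cb}) so that the identification $\pm s_0$ matches the block placement in the Jordan form (\ref{eq:Adqs4c}).
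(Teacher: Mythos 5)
Your proposal is correct and follows essentially the same route as the paper: the paper presents Theorem \ref{thm:jordeg} precisely as the combination of Theorem \ref{thm:Jord2LC} (degeneracy forces the two-block Jordan structure (\ref{eq:msJ1a})) with Lemma \ref{lem:cirsign} (the sharp sign-constraint and gyration-parameter criteria), with (\ref{eq:Adqs4ca})--(\ref{eq:Adqs4cb}) read off from (\ref{eq:Adqs3f}) and $s=\pm\sqrt{h}$, exactly as you do. Your handling of the special case $\left|\xi_{1}\right|=\left|\xi_{2}\right|$ and of the decoupled $g=0$ circuit likewise matches the paper's reasoning.
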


\begin{rem}[instability and marginal stability]
\label{rem:stability} Notice according to formula (\ref{eq:Adqs4cb})
degenerate eigenvalues $\pm s_{0}$ are real for $\delta=\sigma$
and hence they correspond to exponentially growing and decaying in
time solutions indicating instability. For $\delta=-\sigma$ the degenerate
eigenvalues $\pm s_{0}$ are pure imaginary corresponding to oscillatory
solutions indicating that there is at least marginal stability.
\end{rem}

To get a graphical illustration for complex-valued circuit eigenvalues
as a function of the gyration parameter $g$ use the following data
\begin{equation}
\left|\xi_{1}\right|=1,\quad\left|\xi_{2}\right|=2,\quad\left|L_{1}\right|=1,\quad\left|L_{2}\right|=2,\quad\sigma=1,\label{eq:xiLdat1a}
\end{equation}
 and that corresponds to
\begin{equation}
\xi_{1}=1,\quad\xi_{2}=2,\quad L_{1}=\pm1,\quad L_{2}=\mp2.\label{eq:xiLdat1b}
\end{equation}
It follows then from representation (\ref{eq:Adqs3g}) for $\dot{g}_{\delta}$
that the corresponding special values $\dot{g}_{\delta}$ are
\begin{equation}
\dot{g}_{-1}=2\left(1-\sqrt{2}\right)^{2}\cong0.3431457498,\quad\dot{g}_{1}=2\left(1+\sqrt{2}\right)^{2}\cong11.65685425.\label{eq:xiLdat1c}
\end{equation}
\begin{figure}[h]
\centering{}\includegraphics[scale=0.5]{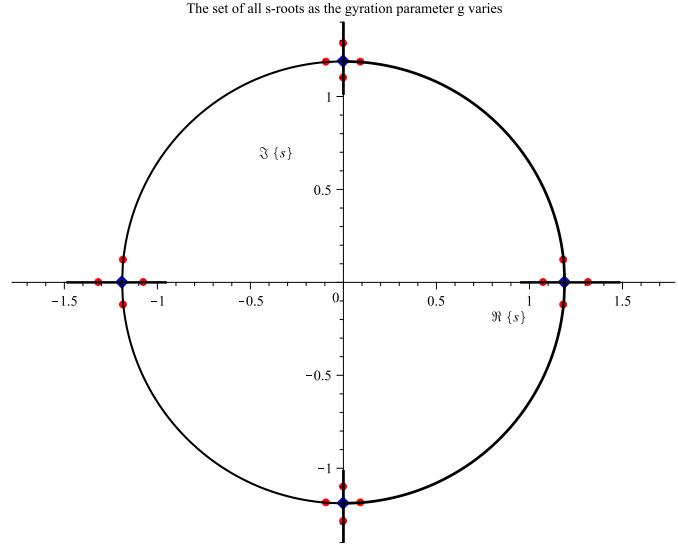}\caption{\label{fig:eigenvals} The plot shows the set $S_{\mathrm{eig}}$
of all complex valued eigenvalues $s$ defined by equations (\ref{eq:Adqs2ehs})
for the data in equations (\ref{eq:xiLdat1a}), (\ref{eq:xiLdat1b})
when the gyration parameter $g$ varies in interval containing special
values $\dot{g}_{-1}$ and $\dot{g}_{1}$ defined in relations (\ref{eq:xiLdat1c}).
The horizontal and vertical axes represent the real and the imaginary
parts $\Re\left\{ s\right\} $ and $\Im\left\{ s\right\} $ of eigenvalues
$s$. $S_{\mathrm{eig}}$ consists of the circle centered in the origin
of radius $\sqrt[4]{\left|\xi_{1}\right|\left|\xi_{2}\right|}$ and
four intersecting its intervals lying on real and imaginary axes.
The circular part of the set $S_{\mathrm{eig}}$ corresponds to all
eigenvalues for $\dot{g}_{-1}\protect\leq g\protect\leq$$\dot{g}_{1}$.
The degenerate eigenvalues $\pm s_{0}$ corresponding to $\dot{g}_{-1}$
and $\dot{g}_{1}$ and defined by equations (\ref{eq:Adqs4cb}) are
shown as solid diamond (blue) dots. Two of them are real, positive
and negative, numbers and another two are pure imaginary, with positive
and negative imaginary parts. The 16 solid circular (red) dots are
associated with 4 quadruples of eigenvalues corresponding to 4 different
values of the gyration parameter $g$ chosen to be slightly larger
or smaller than the special values $\dot{g}_{-1}$ and $\dot{g}_{1}$.
Let us take a look at any of the degenerate eigenvalues identified
by solid diamond (blue) dots. If $g$ is slightly different from its
special values $\dot{g}_{-1}$ and $\dot{g}_{1}$ then each degenerate
eigenvalue point splits into a pair of points identified by solid
circular (red) dots. They are either two real or two pure imaginary
points if $g$ is outside the interval $\left[\dot{g}_{-1},\dot{g}_{1}\right]$,
or alternatively they are two points lying on the circle, if $g$
is inside the interval $\left[\dot{g}_{-1},\dot{g}_{1}\right]$.}
\end{figure}

To explain the rise of the circular part of the set $S_{\mathrm{eig}}$in
Fig. \ref{fig:eigenvals} we recast the characteristic equation (\ref{eq:Adqs2d})
as follows
\begin{equation}
H+\frac{1}{H}=R,\quad R=\frac{\sigma}{\sqrt{\left|\xi_{1}\right|\left|\xi_{2}\right|}}\left[\frac{g}{\left|L_{1}\right|\left|L_{2}\right|}-\left(\left|\xi_{1}\right|+\left|\xi_{2}\right|\right)\right],\quad H=\frac{h}{\sqrt{\left|\xi_{1}\right|\left|\xi_{2}\right|}}.\label{eq:charHR1a}
\end{equation}
Notice that
\begin{equation}
R=2\delta\sigma,\text{for }g=\dot{g}_{\delta}=\left(\sqrt{\left|\xi_{1}\right|}+\delta\sqrt{\left|\xi_{2}\right|}\right)^{2}\left|L_{1}L_{2}\right|,\quad\delta=\pm1,\label{eq:charHR1b}
\end{equation}
where $\sigma=\pm1$ is the circuit sign index. Since $R$ depends
linearly on $g$ relations (\ref{eq:charHR1a}) and (\ref{eq:charHR1b})
imply
\begin{equation}
\left|R\right|\leq2,\text{ for  }\dot{g}_{-1}\leq g\leq\dot{g}_{1};\quad\left|R\right|>2,\text{ for  }g<\dot{g}_{-1}\text{ and }g>\dot{g}_{1}.\label{eq:charHR1c}
\end{equation}
It is an elementary fact that solutions $H$ to equation (\ref{eq:charHR1a})
satisfy the following relations:
\begin{equation}
H=\exp\left\{ \mathrm{i}\theta\right\} ,\text{ for }\left|R\right|\leq2,\text{ and }\cos\left(\theta\right)=\frac{R}{2},\quad0\leq\theta\leq\pi,\label{eq:charHR1d}
\end{equation}
\begin{equation}
H>0,\text{ for }R>2,\text{ and }H<0,\text{ for }R<-2.\label{eq:charHR1e}
\end{equation}
It is also evident from the form of equation (\ref{eq:charHR1a})
that if $H$ is its solution then $H^{-1}$ is a solution as well,
that is the two solutions to equation (\ref{eq:charHR1a}) always
come in pairs of the form $\left\{ H,H^{-1}\right\} $.

Since the eigenvalues $s$ satisfy $s=\pm\sqrt{h}$ the established
above properties of $h=\sqrt{\left|\xi_{1}\right|\left|\xi_{2}\right|}H$
can recast for $s$ as follows.
\begin{thm}[quadruples of eigenvalues]
\label{thm:quad} For every $g>0$ every solution $s$ to the characteristic
equation (\ref{eq:Adqs2as}) is of the form (\ref{eq:Adqs2ehs}) and
the number of solutions is exactly four counting their multiplicity.
Every such a quadruple of solutions is of the following form
\begin{equation}
\left\{ s,\frac{\sqrt{\left|\xi_{1}\right|\left|\xi_{2}\right|}}{s},-s,-\frac{\sqrt{\left|\xi_{1}\right|\left|\xi_{2}\right|}}{s}\right\} ,\label{eq:charHR2a}
\end{equation}
where $s$ is a solution to the characteristic equation (\ref{eq:Adqs2as}.
Then for $\dot{g}_{-1}\leq g\leq\dot{g}_{1}$ the quadruple of solutions
belongs to the circle $\left|s\right|=\sqrt[4]{\left|\xi_{1}\right|\left|\xi_{2}\right|}$
such that
\begin{equation}
s=\delta_{1}\sqrt[4]{\left|\xi_{1}\right|\left|\xi_{2}\right|}\exp\left\{ \mathrm{i}\delta_{2}\theta\right\} ,\cos\left(\theta\right)=\frac{R}{2},\quad0\leq\theta\leq\pi,\delta_{1},\delta_{2}=\pm1,\label{eq:charHR2b}
\end{equation}
where $R$ is defined in relations (\ref{eq:charHR1a}). If $g<\dot{g}_{-1}$
or $g>\dot{g}_{1}$ the quadruple of solutions consists of either
real numbers and pure imaginary numbers depending if $R>2$ or $R<-2$
respectively. In view of relations (\ref{eq:charHR1a}) and $s=\pm\sqrt{h}$
where $h=\sqrt{\left|\xi_{1}\right|\left|\xi_{2}\right|}H$ every
quadruple of solutions as in expression (\ref{eq:charHR2a}) is invariant
with respect to the complex conjugation transformation.
\end{thm}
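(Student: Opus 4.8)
The plan is to recast everything already established for the auxiliary quantity $h=s^{2}$ and its normalized version $H=h/\sqrt{|\xi_{1}||\xi_{2}|}$ into statements about the genuine eigenvalues $s=\pm\sqrt{h}$. Almost all of the analytic content sits in relations (\ref{eq:charHR1a})--(\ref{eq:charHR1e}), so the argument is essentially a careful translation through the map $s\mapsto s^{2}=h$. Throughout I work under the standing sign constraints (\ref{eq:Adqs3es}), which is the regime in which $R$, the special values $\dot g_{\delta}$, and the normalization $H$ are defined.

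First I would settle the counting and the form (\ref{eq:Adqs2ehs}). The characteristic polynomial $\chi(s)$ in (\ref{eq:Adqs2as}) is monic of degree $4$, so it has exactly four roots counted with multiplicity. Since $\chi$ involves only even powers of $s$, every root arises as $s=\pm\sqrt{h}$ with $h$ solving the quadratic $\chi_{h}=0$ of (\ref{eq:Adqs2d}); the two roots $h_{\pm}$ of (\ref{eq:Adqs2eh}) are both nonzero because $\chi(0)=\xi_{1}\xi_{2}\neq0$. This yields exactly the four eigenvalues $\pm\sqrt{h_{+}},\pm\sqrt{h_{-}}$ of (\ref{eq:Adqs2ehs}). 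Next I would establish the quadruple structure (\ref{eq:charHR2a}): by Vieta applied to $\chi_{h}$ the product of its roots is $h_{+}h_{-}=\xi_{1}\xi_{2}$, and under the sign constraints $\xi_{1}\xi_{2}=|\xi_{1}||\xi_{2}|>0$. Writing $s=\sqrt{h_{+}}$ and $c=\sqrt{|\xi_{1}||\xi_{2}|}$ gives $h_{-}=c^{2}/h_{+}$, so the two-element set $\{\pm\sqrt{h_{-}}\}$ coincides with $\{\pm c/s\}$. It is essential to argue at the level of these sets, which are insensitive to the branch of the square root; the full root set is then $\{s,\,c/s,\,-s,\,-c/s\}$, which is exactly (\ref{eq:charHR2a}).

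Finally I would read off the geometry from $H+1/H=R$. For $\dot g_{-1}\le g\le\dot g_{1}$ we have $|R|\le2$ by (\ref{eq:charHR1c}), so by (\ref{eq:charHR1d}) every solution $H$ lies on the unit circle; therefore $|h|=c$ and $|s|^{2}=|h|=c$, giving $|s|=\sqrt{c}=\sqrt[4]{|\xi_{1}||\xi_{2}|}$, the circle in (\ref{eq:charHR2b}). The four points on this circle are obtained as $s=\pm\sqrt{h}$ from the two conjugate values $h=c\,e^{\pm\mathrm{i}\theta}$, which accounts for the signs $\delta_{1},\delta_{2}$ and the angular parameter. For $g<\dot g_{-1}$ or $g>\dot g_{1}$ we have $|R|>2$, and (\ref{eq:charHR1e}) gives $H>0$ when $R>2$ and $H<0$ when $R<-2$; since $h=cH$, this makes $h$ a positive real (so $s=\pm\sqrt{h}$ is real) or a negative real (so $s=\pm\sqrt{h}$ is pure imaginary), exactly as claimed.

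For the conjugation invariance the cleanest route is to note that $\chi$ has real coefficients, so its root set is automatically closed under $s\mapsto\bar s$; equivalently, on the circle $s\bar s=|s|^{2}=c$ forces $\bar s=c/s$, so complex conjugation permutes the four members of (\ref{eq:charHR2a}). The one point demanding genuine care is the bookkeeping of the complex square root when passing from $h$ to $s$, and I expect this to be the main, though modest, obstacle; it is dispatched by always reasoning with the full sets $\{\pm\sqrt{h_{\pm}}\}$, where the branch ambiguity cancels, rather than with individually chosen square roots.
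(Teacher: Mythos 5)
Your proposal is correct and is essentially the paper's own argument: the paper proves the theorem by the discussion immediately preceding it, recasting the characteristic equation as $H+1/H=R$ (relations (\ref{eq:charHR1a})--(\ref{eq:charHR1e})) and then translating through $s=\pm\sqrt{h}$, $h=\sqrt{\left|\xi_{1}\right|\left|\xi_{2}\right|}\,H$. Your Vieta step $h_{+}h_{-}=\xi_{1}\xi_{2}$ is the same fact as the paper's pairing $\left\{ H,H^{-1}\right\} $ of solutions of $H+1/H=R$, your appeal to the real coefficients of $\chi$ for conjugation invariance is a harmless (if anything cleaner) variant, and you are right that the whole statement tacitly lives under the sign constraints (\ref{eq:Adqs3es}), since $R$ and $\dot{g}_{\pm1}$ are only defined there.

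One point you glide over deserves to be said out loud. Passing from $h=\sqrt{\left|\xi_{1}\right|\left|\xi_{2}\right|}\,e^{\pm\mathrm{i}\theta}$ to $s=\pm\sqrt{h}$ gives $s=\delta_{1}\sqrt[4]{\left|\xi_{1}\right|\left|\xi_{2}\right|}\exp\left\{ \mathrm{i}\delta_{2}\theta/2\right\} $ with $\cos\theta=R/2$ --- the half-angle --- and not $\exp\left\{ \mathrm{i}\delta_{2}\theta\right\} $ as displayed in (\ref{eq:charHR2b}). The displayed formula cannot be literally correct: when $R=-2$ (so $\theta=\pi$, $H=-1$, $h=-\sqrt{\left|\xi_{1}\right|\left|\xi_{2}\right|}$) the degenerate eigenvalues are pure imaginary, consistent with (\ref{eq:Adqs4cb}), whereas (\ref{eq:charHR2b}) as written would make them real. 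This factor-of-two slip is a typo in the paper's statement rather than a gap in your mathematics, but your phrase that the construction ``accounts for \ldots{} the angular parameter'' conceals exactly this discrepancy; a complete write-up should either prove the corrected half-angle formula or redefine $\theta$ by $\cos\left(2\theta\right)=R/2$.
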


The following remark discusses in some detail the transition of eigenvalues
lying on the circle $\left|s\right|=\sqrt[4]{\left|\xi_{1}\right|\left|\xi_{2}\right|}$
having non-zero real and imaginary parts into either real or pure
imaginary numbers as the value of the gyration parameter $g$ passes
through its special values $\dot{g}_{-1}$ or $\dot{g}_{1}$ at which
the eigenvalues degenerate.
\begin{rem}[transition at degeneracy points]
 According to formula (\ref{eq:Adqs4cb}) there is total of four
degenerate eigenvalues $\pm s_{0}$, namely $\pm$$\sqrt[4]{\left|\xi_{1}\right|\left|\xi_{2}\right|}$
and $\pm\mathrm{i}$$\sqrt[4]{\left|\xi_{1}\right|\left|\xi_{2}\right|}$
(depicted as solid diamond (blue) dots in Fig. \ref{fig:eigenvals})
that are associated with the two special values of the gyration parameter
$g_{\pm1}=\left(\sqrt{\left|\xi_{1}\right|}\pm\sqrt{\left|\xi_{2}\right|}\right)^{2}\left|L_{1}L_{2}\right|$.
For any value of the gyration parameter $g$ different than its two
special values there are exactly four distinct eigenvalues $s$ forming
a quadruple as in expression (\ref{eq:charHR2a}). If $\dot{g}_{-1}<g<\dot{g}_{1}$
and $g$ gets close to either $\dot{g}_{-1}$ or $\dot{g}_{1}$ the
corresponding four distinct eigenvalues on the circle $\left|s\right|=\sqrt[4]{\left|\xi_{1}\right|\left|\xi_{2}\right|}$
get close to either $\pm$$\sqrt[4]{\left|\xi_{1}\right|\left|\xi_{2}\right|}$
or $\pm\mathrm{i}$$\sqrt[4]{\left|\xi_{1}\right|\left|\xi_{2}\right|}$
as depicted in Fig. \ref{fig:eigenvals}) by solid circle (red) dots.
As $g$ approaches the special values $\dot{g}_{-1}$ or $\dot{g}_{1}$,
reaches them and gets out of the interval $\left[\dot{g}_{-1},\dot{g}_{1}\right]$
the corresponding solid circle (red) dots approach the relevant points
$\pm$$\sqrt[4]{\left|\xi_{1}\right|\left|\xi_{2}\right|}$ or $\pm\mathrm{i}$$\sqrt[4]{\left|\xi_{1}\right|\left|\xi_{2}\right|}$
, merge at them and then split again passing to respectively real
and imaginary axes as illustrated by Fig. \ref{fig:eigenvals}.
\end{rem}

\subsection{Eigenvectors and the Jordan basis}

Theorem \ref{thm:Jord2LC} provides a general statement that the degeneracy
the companion matrix $\mathscr{C}$ defined by equation (\ref{eq:Adqs1d})
implies that its Jordan form consists of 2 Jordan blocks as in equation
(\ref{eq:msJ1a}). We would to extend that statement with a construction
of the corresponding Jordan basis. With that in mind we introduce
the following matrix evidently related to the companion matrix $\mathscr{C}$
\begin{equation}
\mathsf{C}=\mathsf{C}\left(\xi_{1},\xi_{2},b_{1},b_{2}\right)=\left[\begin{array}{rrrr}
0 & 0 & 1 & 0\\
0 & 0 & 0 & 1\\
-\xi_{1} & 0 & 0 & b_{1}\\
0 & -\xi_{2} & b_{2} & 0
\end{array}\right],\label{eq:Adqs4da}
\end{equation}
namely
\begin{equation}
\mathscr{C}=\mathsf{C}\left(\xi_{1},\xi_{2},b_{1},b_{2}\right),\quad b_{1}=\frac{G_{1}}{L_{1}},\quad b_{2}=-\frac{G_{1}}{L_{2}}.\label{eq:Adqs4db}
\end{equation}
Notice that the change of the sign of $G_{1}$ of the gyration capacitance
or change of the sign the parameters $b_{1},b_{2}$ in matrix $C$
yield a matrix that is similar to the original matrix, that is if
$C$ is a matrix defined by equation (\ref{eq:Adqs4da}) we have
\begin{equation}
\mathsf{C}\left(\xi_{1},\xi_{2},-b_{1},-b_{2}\right)=\Lambda^{-1}\mathsf{C}\left(\xi_{1},\xi_{2},b_{1},b_{2}\right)\Lambda,\quad\Lambda=\left[\begin{array}{rrrr}
1 & 0 & 0 & 0\\
0 & -1 & 0 & 0\\
0 & 0 & 1 & 0\\
0 & 0 & 0 & -1
\end{array}\right].\label{eq:Adqs4dc}
\end{equation}
Equations (\ref{eq:Adqs4db}) and (\ref{eq:Adqs4dc}) readily imply
the following statement.
\begin{lem}[alteration of the gyration resistance]
\label{lem:altgyr} Let the circuit be as described in Theorem \ref{thm:jordeg}.
Then the sign alteration of the gyration resistance $G_{1}$ yields
a circuit with the evolution matrix $\mathscr{C}\left(-G_{1}\right)$
that is similar to the evolution matrix $\mathscr{C}\left(G_{1}\right)$
of the original circuit, that is
\begin{equation}
\mathscr{C}\left(-G_{1}\right)=\Lambda^{-1}\mathscr{C}\left(G_{1}\right)\Lambda,\quad\Lambda=\left[\begin{array}{rrrr}
1 & 0 & 0 & 0\\
0 & -1 & 0 & 0\\
0 & 0 & 1 & 0\\
0 & 0 & 0 & -1
\end{array}\right].\label{eq:Adqs4dd}
\end{equation}
\end{lem}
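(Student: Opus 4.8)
The plan is to recognize that the desired identity (\ref{eq:Adqs4dd}) is nothing more than a restatement of the purely algebraic similarity (\ref{eq:Adqs4dc}), once the auxiliary parameters are expressed through the physical gyration resistance $G_1$. No genuinely new computation is needed; the work is entirely in tracking how the entries of $\mathsf{C}$ depend on $G_1$.

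First I would recall from (\ref{eq:Adqs4db}) that the circuit companion matrix is $\mathscr{C}\left(G_1\right)=\mathsf{C}\left(\xi_1,\xi_2,b_1,b_2\right)$ with $b_1=G_1/L_1$ and $b_2=-G_1/L_2$, whereas the remaining entries $\xi_1=1/\left(L_1C_1\right)$ and $\xi_2=1/\left(L_2C_2\right)$ are independent of $G_1$. Replacing $G_1$ by $-G_1$ therefore leaves $\xi_1,\xi_2$ fixed and sends $b_1\mapsto-b_1$ and $b_2\mapsto-b_2$ simultaneously, so that $\mathscr{C}\left(-G_1\right)=\mathsf{C}\left(\xi_1,\xi_2,-b_1,-b_2\right)$.

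Second I would apply the similarity (\ref{eq:Adqs4dc}), which asserts exactly that $\mathsf{C}\left(\xi_1,\xi_2,-b_1,-b_2\right)=\Lambda^{-1}\mathsf{C}\left(\xi_1,\xi_2,b_1,b_2\right)\Lambda$ for $\Lambda=\mathrm{diag}\left(1,-1,1,-1\right)$. Chaining the two identities yields $\mathscr{C}\left(-G_1\right)=\Lambda^{-1}\mathscr{C}\left(G_1\right)\Lambda$, which is precisely (\ref{eq:Adqs4dd}); I would note that the degeneracy hypotheses inherited from Theorem \ref{thm:jordeg} play no role here, since the relation is structural and holds for all real nonzero parameters.

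Should a self-contained justification of (\ref{eq:Adqs4dc}) be preferred, the only thing to verify is that conjugating by the involution $\Lambda$ (so that $\Lambda^{-1}=\Lambda$) multiplies the $\left(i,j\right)$ entry of any matrix by $\lambda_i\lambda_j$, where $\lambda=\left(1,-1,1,-1\right)$. Applied to $\mathsf{C}$, this fixes the unit entries in positions $\left(1,3\right),\left(2,4\right)$ and the entries $-\xi_1,-\xi_2$ in positions $\left(3,1\right),\left(4,2\right)$, all of which have $\lambda_i\lambda_j=1$, while it flips exactly the two gyration couplings $b_1,b_2$ in positions $\left(3,4\right),\left(4,3\right)$, where $\lambda_i\lambda_j=-1$. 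There is no real obstacle in this argument; the single point requiring care is confirming that the sign pattern of $\Lambda$ is chosen so that it flips precisely the off-diagonal gyration entries and nothing else.
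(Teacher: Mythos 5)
Your proof is correct and follows essentially the same route as the paper, which simply states that equations (\ref{eq:Adqs4db}) and (\ref{eq:Adqs4dc}) "readily imply" the lemma. Your additional entrywise check that conjugation by $\Lambda$ multiplies the $(i,j)$ entry by $\lambda_i\lambda_j$, fixing the $\xi$-entries and flipping only the gyration couplings, just makes explicit the verification the paper leaves to the reader, and your remark that the degeneracy hypotheses are not actually used is also accurate.
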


The above considerations suggest to introduce some two special form
matrices intimately related to the companion matrix $\mathscr{C}$
defined by expression (\ref{eq:Adqs2c}). It is a tedious but straightforward
exercise to verify that the following statements hold for these matrices.
\begin{lem}[Jordan form of a special matrix]
\label{lem:Cjord} Let $\mathsf{C}_{\pm}$ be a $4\times4$ matrices
of the form
\begin{equation}
\mathsf{C}_{\pm}=\mathsf{C}_{\pm}\left(\zeta_{1},\zeta_{2},b_{1},b_{2}\right)=\left[\begin{array}{rrrr}
0 & 0 & 1 & 0\\
0 & 0 & 0 & 1\\
-\zeta_{1}^{2} & 0 & 0 & b_{1}\\
0 & -\zeta_{2}^{2} & b_{2} & 0
\end{array}\right],\text{where }b_{1}b_{2}=\left(\zeta_{1}\pm\zeta_{2}\right)^{2}\neq0,\label{eq:matCb1a}
\end{equation}
where $\zeta_{1},\zeta_{2}$ and $b_{1},b_{2}$ are complex numbers.
Then matrices $\mathsf{C}_{\pm}$ can be recast as
\begin{equation}
\mathsf{C}_{\pm}=\mathsf{C}_{\pm}\left(\zeta_{1},\zeta_{2},\alpha\right)=\left[\begin{array}{rrrr}
0 & 0 & 1 & 0\\
0 & 0 & 0 & 1\\
-\zeta_{1}^{2} & 0 & 0 & \alpha\left(\zeta_{1}\pm\zeta_{2}\right)\\
0 & -\zeta_{2}^{2} & \frac{\zeta_{1}\pm\zeta_{2}}{\alpha} & 0
\end{array}\right],\quad\alpha=\frac{b_{1}}{\zeta_{1}\pm\zeta_{2}}.\label{eq:matCb1b}
\end{equation}
The Jordan forms $J_{\pm}$ of the corresponding matrices $\mathsf{C}_{\pm}$
are
\begin{equation}
J_{\pm}=J_{\pm}\left(\zeta_{1},\zeta_{2}\right)=Z_{\pm}^{-1}\mathsf{C}_{\pm}Z_{\pm}=\left[\begin{array}{rrrr}
\sqrt{\pm\zeta_{1}\zeta_{2}} & 1 & 0 & 0\\
0 & \sqrt{\pm\zeta_{1}\zeta_{2}} & 0 & 0\\
0 & 0 & -\sqrt{\pm\zeta_{1}\zeta_{2}} & 1\\
0 & 0 & 0 & -\sqrt{\pm\zeta_{1}\zeta_{2}}
\end{array}\right],\label{eq:matCb1c}
\end{equation}
where $\sqrt{\pm\zeta_{1}\zeta_{2}}$ is one the values of the square
root of $\pm\zeta_{1}\zeta_{2}$, and matrices $Z_{\pm}$ are
\begin{equation}
Z_{\pm}=Z_{\pm}\left(\zeta_{1},\zeta_{2},\alpha\right)=\left[\begin{array}{rrrr}
-\frac{\zeta_{1}}{4\sqrt{\pm\zeta_{1}\zeta_{2}}} & \frac{1}{2\left(\zeta_{1}\pm\zeta_{2}\right)} & \frac{\zeta_{1}}{4\sqrt{\pm\zeta_{1}\zeta_{2}}} & \frac{1}{2\left(\zeta_{1}\pm\zeta_{2}\right)}\\
\mp\frac{\zeta_{1}}{4\zeta_{2}\alpha} & \pm\frac{\zeta_{1}}{4\zeta_{2}\alpha\sqrt{\pm\zeta_{1}\zeta_{2}}} & \mp\frac{\zeta_{1}}{4\zeta_{2}\alpha} & -\frac{\zeta_{1}}{4\zeta_{2}\alpha\sqrt{\pm\zeta_{1}\zeta_{2}}}\\
-\frac{\zeta_{1}}{4} & -\frac{\zeta_{1}\left(\zeta_{1}\mp\zeta_{2}\right)}{4\left(\zeta_{1}\pm\zeta_{2}\right)\sqrt{\pm\zeta_{1}\zeta_{2}}} & -\frac{\zeta_{1}}{4} & \frac{\zeta_{1}\left(\zeta_{1}\mp\zeta_{2}\right)}{4\left(\zeta_{1}\pm\zeta_{2}\right)\sqrt{\pm\zeta_{1}\zeta_{2}}}\\
-\frac{\zeta_{1}^{2}}{4\alpha\sqrt{\pm\zeta_{1}\zeta_{2}}} & 0 & \frac{\zeta_{1}^{2}}{4\alpha\sqrt{\pm\zeta_{1}\zeta_{2}}} & 0
\end{array}\right].\label{eq:matCb1d}
\end{equation}
Notice that the columns of matrices $Z_{\pm}$ form the Jordan bases
of the corresponding matrices $\mathsf{C}_{\pm}$ , and the first
and the third columns $Z_{\pm}$ are the eigenvectors of the corresponding
matrices $\mathsf{C}_{\pm}$ with respective eigenvalues $\sqrt{\pm\zeta_{1}\zeta_{2}}$
and $-\sqrt{\pm\zeta_{1}\zeta_{2}}$.

Notice also that the matrices $\mathsf{C}_{\pm}$ and $Z_{\pm}$ can
be factorized as follows
\begin{equation}
\mathsf{C}_{\pm}\left(\zeta_{1},\zeta_{2},\alpha\right)=\Lambda_{\alpha}^{-1}\mathsf{C}_{\pm}\left(\zeta_{1},\zeta_{2},1\right)\Lambda_{\alpha},\quad Z_{\pm}\left(\zeta_{1},\zeta_{2},\alpha\right)=\Lambda_{\alpha}^{-1}Z_{\pm}\left(\zeta_{1},\zeta_{2},1\right)\label{eq:matCb1e}
\end{equation}
\begin{equation}
\mathsf{C}_{\pm}\left(\zeta_{1},\zeta_{2},-b_{1},-b_{2}\right)=\Lambda_{-1}^{-1}\mathsf{C}_{\pm}\left(\zeta_{1},\zeta_{2},b_{1},b_{2}\right)\Lambda_{-1},\label{eq:matCb1f}
\end{equation}
where matrix $\Lambda_{\alpha}$ is the following diagonal matrix
\begin{equation}
\Lambda_{\alpha}=\left[\begin{array}{rrrr}
1 & 0 & 0 & 0\\
0 & \alpha & 0 & 0\\
0 & 0 & 1 & 0\\
0 & 0 & 0 & \alpha
\end{array}\right].\label{eq:matCb1g}
\end{equation}
\end{lem}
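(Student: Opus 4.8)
The plan is to verify the stated identities directly, arranging the work so that the genuinely $4\times4$ computation is performed only once. First I would record the reparametrization (\ref{eq:matCb1b}): the constraint $b_1b_2=(\zeta_1\pm\zeta_2)^2$ together with the definition $\alpha=b_1/(\zeta_1\pm\zeta_2)$ forces $b_1=\alpha(\zeta_1\pm\zeta_2)$ and $b_2=(\zeta_1\pm\zeta_2)/\alpha$, which is immediate. Next I would eliminate $\alpha$ once and for all through the conjugation identities (\ref{eq:matCb1e}) and (\ref{eq:matCb1f}). Conjugating a matrix by the diagonal $\Lambda_\alpha=\mathrm{diag}(1,\alpha,1,\alpha)$ multiplies the $(i,j)$ entry by $\lambda_j/\lambda_i$ with $\lambda=(1,\alpha,1,\alpha)$; reading this off, the only entries that change are $(3,4)$ and $(4,3)$, scaled by $\alpha$ and $\alpha^{-1}$, which is exactly the passage from $\mathsf{C}_\pm(\zeta_1,\zeta_2,1)$ to $\mathsf{C}_\pm(\zeta_1,\zeta_2,\alpha)$, and the identical bookkeeping with $\Lambda_{-1}$ gives (\ref{eq:matCb1f}). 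Since similarity preserves the Jordan form and $Z_\pm(\zeta_1,\zeta_2,\alpha)=\Lambda_\alpha^{-1}Z_\pm(\zeta_1,\zeta_2,1)$ carries a Jordan basis to a Jordan basis, it then suffices to establish (\ref{eq:matCb1c})--(\ref{eq:matCb1d}) in the case $\alpha=1$.

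For $\alpha=1$ I would first pin down the spectrum. Identifying $\mathsf{C}_\pm$ with the two-loop companion matrix via $\xi_1=\zeta_1^2$, $\xi_2=\zeta_2^2$, $G_1^2/(L_1L_2)=-b_1b_2$, formula (\ref{eq:Adqs2a}) gives $\chi(s)=s^4+(\zeta_1^2+\zeta_2^2-b_1b_2)s^2+\zeta_1^2\zeta_2^2$, and substituting $b_1b_2=(\zeta_1\pm\zeta_2)^2$ collapses the middle coefficient to $\mp2\zeta_1\zeta_2$, so that $\chi(s)=(s^2\mp\zeta_1\zeta_2)^2$. Hence (assuming, as the entries of $Z_\pm$ in (\ref{eq:matCb1d}) tacitly require, $\zeta_1\zeta_2\neq0$) the only eigenvalues are $\pm s_0$ with $s_0=\sqrt{\pm\zeta_1\zeta_2}\neq0$, each of algebraic multiplicity two. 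For the block structure I would reuse the geometric-multiplicity argument from the proof of Theorem \ref{thm:Jord2LC}, which needs no real-ness of the parameters: the associated matrix polynomial $A(s)$ of (\ref{eq:Adqs1a}), now with diagonal part $s^2+\zeta_j^2$ and off-diagonal entries $-sb_1,-sb_2$, satisfies $A(s_0)\neq0$ because $b_1b_2\neq0$ precludes both $s_0=0$ and $b_1=b_2=0$; therefore $\dim\ker A(s_0)=1$, and Theorem \ref{thm:Jord-block} supplies a single Jordan block at each of $\pm s_0$. This already yields the two $2\times2$ blocks of (\ref{eq:matCb1c}) and spares a by-hand proof of non-diagonalizability.

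It remains to confirm that the columns of $Z_\pm(\zeta_1,\zeta_2,1)$ form a Jordan basis, i.e. that $\mathsf{C}_\pm Z_\pm=Z_\pm J_\pm$ with $\det Z_\pm\neq0$. Writing $z_1,\dots,z_4$ for the columns, this is the system of chain relations $\mathsf{C}_\pm z_1=s_0z_1$, $\mathsf{C}_\pm z_2=s_0z_2+z_1$, $\mathsf{C}_\pm z_3=-s_0z_3$, $\mathsf{C}_\pm z_4=-s_0z_4+z_3$. Rather than guess the columns, I would derive the eigenvectors $z_1,z_3$ from the kernels of $A(\pm s_0)$ --- by the companion structure (\ref{eq:Adqs1e}) an eigenvector has the form $(q_1,q_2,sq_1,sq_2)^{\mathrm{T}}$ with $A(s)(q_1,q_2)^{\mathrm{T}}=0$ --- and obtain the generalized vectors $z_2,z_4$ by solving $(\mathsf{C}_\pm-s_0\mathbb{I})z_2=z_1$ and $(\mathsf{C}_\pm+s_0\mathbb{I})z_4=z_3$, which are solvable precisely because the block at each eigenvalue has size two; normalizing then reproduces the columns listed in (\ref{eq:matCb1d}). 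A final evaluation of $\det Z_\pm$ confirms invertibility, completing $J_\pm=Z_\pm^{-1}\mathsf{C}_\pm Z_\pm$. The one real obstacle here is clerical: maintaining the paired upper/lower sign convention ($\pm$ with $\mp$) and a fixed branch of $\sqrt{\pm\zeta_1\zeta_2}$ consistently across all four column checks, since a single sign slip propagates through the chain. With $\alpha$ already removed and the block structure already in hand, this last step is exactly the ``tedious but straightforward'' verification the statement advertises.
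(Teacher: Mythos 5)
Your proposal is correct, but it is organized quite differently from the paper's treatment: the paper offers no argument at all, merely declaring the lemma ``a tedious but straightforward exercise'' in direct verification, i.e.\ checking $\mathsf{C}_{\pm}Z_{\pm}=Z_{\pm}J_{\pm}$ and $\det Z_{\pm}\neq0$ entry by entry for general $\alpha$. You instead front-load the structure. First, the diagonal conjugation by $\Lambda_{\alpha}$ --- which the paper records only as a closing remark in (\ref{eq:matCb1e})--(\ref{eq:matCb1g}) --- is promoted to the opening step and used to eliminate $\alpha$ entirely; your bookkeeping of the conjugation (entries $(3,4)$ and $(4,3)$ scaled by $\alpha$ and $\alpha^{-1}$, all others fixed) is right, as is the observation that $Z_{\pm}(\zeta_{1},\zeta_{2},\alpha)=\Lambda_{\alpha}^{-1}Z_{\pm}(\zeta_{1},\zeta_{2},1)$ transports a Jordan basis. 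Second, you obtain the block structure (\ref{eq:matCb1c}) abstractly before writing any basis: the determinant identity $\det\{s\mathbb{I}-\mathsf{C}_{\pm}\}=\det\{A(s)\}=(s^{2}+\zeta_{1}^{2})(s^{2}+\zeta_{2}^{2})-s^{2}b_{1}b_{2}=(s^{2}\mp\zeta_{1}\zeta_{2})^{2}$, then $A(\pm s_{0})\neq0$ (its off-diagonal entries $-s_{0}b_{1}$, $-s_{0}b_{2}$ are nonzero), hence $\dim\ker A(\pm s_{0})=1$, and Corollary \ref{cor:dim-ker} with Theorem \ref{thm:Jord-block} force exactly one Jordan block of size $2$ at each of $\pm s_{0}$ --- precisely the argument of Theorem \ref{thm:Jord2LC}, which, as you correctly note, nowhere uses realness of the parameters. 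This organization buys three things the paper's route does not: the surviving hand computation shrinks to four chain relations at $\alpha=1$ plus one determinant; the generalized eigenvectors are solved for from $(\mathsf{C}_{\pm}-s_{0}\mathbb{I})z_{2}=z_{1}$ and $(\mathsf{C}_{\pm}+s_{0}\mathbb{I})z_{4}=z_{3}$ (solvable because each block has size two) rather than guessed; and it surfaces the tacit hypothesis $\zeta_{1}\zeta_{2}\neq0$, which is genuinely needed --- if, say, $\zeta_{2}=0$ and $\zeta_{1}\neq0$, then $0$ is the only eigenvalue, the kernel of $\mathsf{C}_{\pm}$ is one-dimensional, and the Jordan form is a single block $J_{4}(0)$ rather than the two $2\times2$ blocks of (\ref{eq:matCb1c}), while the entries of $Z_{\pm}$ in (\ref{eq:matCb1d}) are undefined. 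The only mild caveat is that your appeal to the circuit parametrization $\xi_{j}=\zeta_{j}^{2}$, $G_{1}^{2}/(L_{1}L_{2})=-b_{1}b_{2}$ is unnecessary scaffolding: formula (\ref{eq:Adqs2a}) is an algebraic identity for the companion matrix of the quadratic matrix polynomial $A(s)$ (Theorem \ref{thm:matpol-eigvec}), so it applies verbatim over $\mathbb{C}$ without pretending the complex data come from a real circuit.
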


In the case when the gyration resistance takes its special values
$\dot{G}_{1}$ as in equation (\ref{eq:Adqs3h}) the companion matrix
$\mathscr{C}$ defined by expression (\ref{eq:Adqs2c}) can be cast
as matrix $\mathsf{C}_{\pm}$ in equation (\ref{eq:matCb1a}). Indeed
using equation (\ref{eq:Adqs3h}) we readily obtain
\begin{gather}
-\frac{\dot{G}_{1}^{2}}{L_{1}L_{2}}=\sigma\left(\sqrt{\left|\xi_{1}\right|}+\delta\sqrt{\left|\xi_{2}\right|}\right)^{2}=\left(\zeta_{1}+\delta\zeta_{2}\right)^{2},\text{ where }\sqrt{\left|\xi\right|}>0,\label{eq:matCb1h}\\
\zeta_{j}=\left\{ \begin{array}{ccc}
\sqrt{\left|\xi_{j}\right|} & \text{for} & \sigma=1\\
\mathrm{i}\sqrt{\left|\xi_{j}\right|} & \text{for} & \sigma=-1
\end{array}\right.,\quad j=1,2.\nonumber 
\end{gather}
Combing equation (\ref{eq:matCb1h}) with Lemma \ref{lem:Cjord} we
arrive at the following statement.
\begin{thm}[degeneracy and the Jordan form]
\label{thm:degJor} Let the circuit be as depicted in Fig. \ref{fig:pri-cir2}
and let all its parameters $L_{1}$, $C_{1}$, $L_{2}$, $C_{2}$
and $G_{1}$ be real and non-zero. Then the companion matrix $\mathscr{C}$
defined by expression (\ref{eq:Adqs2c}) has a degenerate eigenvalue
if and only if (i) the sign constraints (\ref{eq:Adqs3es}) are satisfied
and (ii) its gyration parameter $g$ takes its two special values
\begin{equation}
\dot{g}_{\delta}=\left(\sqrt{\left|\xi_{1}\right|}+\delta\sqrt{\left|\xi_{2}\right|}\right)^{2}\left|L_{1}L_{2}\right|,\quad\delta=\pm1,\label{eq:gxiLL1a}
\end{equation}
and the corresponding special values of the gyration resistance
\begin{gather}
\dot{G}_{1}=\sigma_{1}\sqrt{\dot{g}_{\delta}}=\sigma_{1}\left(\sqrt{\left|\xi_{1}\right|}+\delta\sqrt{\left|\xi_{2}\right|}\right)\sqrt{\left|L_{1}L_{2}\right|},\label{eq:Adqs3h-1}
\end{gather}
where binary variable $\sigma_{1}$ takes values $\pm1$. For these
special values of the gyration resistance matrix $\mathscr{C}$ has
exactly two degenerate eigenvalues $\pm s_{0}$ of the multiplicity
two satisfying the following equations
\begin{equation}
\pm s_{0}=\pm\sqrt{\sigma\delta\sqrt{\left|\xi_{1}\right|}\sqrt{\left|\xi_{2}\right|}}\label{eq:gxiLL1c}
\end{equation}
where $\sqrt{\xi}>0$ for $\xi>0$, $\delta=\pm1$ and $\sigma=\pm1$
is the circuit sign index defined by equations (\ref{eq:Adqs3c}).
In addition to that matrix $\mathscr{C}$ can represented as matrix
$\mathsf{C}_{\pm}$ in equations (\ref{eq:matCb1b}),with $\zeta_{1},\zeta_{2}$
described by equations (\ref{eq:matCb1h}), and
\[
\alpha=\left\{ \begin{array}{ccc}
\frac{\left|L_{1}\right|}{L_{1}}\sigma_{1}\sqrt{\frac{\left|L_{2}\right|}{\left|L_{1}\right|}} & \text{for} & \sigma=1,\\
\frac{\left|L_{1}\right|}{L_{1}}\sigma_{1}\mathrm{i}\sqrt{\frac{\left|L_{2}\right|}{\left|L_{1}\right|}} & \text{for} & \sigma=-1.
\end{array}\right.
\]
Consequently, all statements of Lemma \ref{lem:Cjord} for matrix
$\mathsf{C}_{\pm}$ hold including its Jordan form $J_{\pm}$ as in
equations (\ref{eq:matCb1c}) and expressions (\ref{eq:matCb1d})
for the Jordan basis as columns of matrix $Z_{\pm}$.
\end{thm}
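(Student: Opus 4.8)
The plan is to assemble the statement from pieces already in hand rather than to prove anything from scratch: the degeneracy criterion and the explicit root formulas are exactly the content of Lemma \ref{lem:cirsign} and Theorem \ref{thm:jordeg}, while the Jordan form and basis are imported wholesale from Lemma \ref{lem:Cjord} once I match the companion matrix $\mathscr{C}$ against the model matrix $\mathsf{C}_{\pm}$. Concretely, the ``if and only if'' for a degenerate eigenvalue, the two special values $\dot{g}_{\delta}$ in (\ref{eq:gxiLL1a}), the special resistances $\dot{G}_{1}$ in (\ref{eq:Adqs3h-1}), and the degenerate eigenvalues $\pm s_{0}$ in (\ref{eq:gxiLL1c}) are restatements of results already established, so for those I would simply cite Lemma \ref{lem:cirsign} and Theorem \ref{thm:jordeg} and move on. The only genuinely new work is the identification $\mathscr{C}=\mathsf{C}_{\pm}$ together with the evaluation of the parameter $\alpha$.

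First I would define $\zeta_{1},\zeta_{2}$ by (\ref{eq:matCb1h}), recording the key algebraic fact that $\zeta_{j}^{2}=\sigma\left|\xi_{j}\right|=\xi_{j}$ in both cases $\sigma=\pm1$; this makes the lower-left $2\times2$ block $\mathrm{diag}\left(-\zeta_{1}^{2},-\zeta_{2}^{2}\right)$ of $\mathsf{C}_{\pm}$ in (\ref{eq:matCb1a}) coincide with $\mathrm{diag}\left(-\xi_{1},-\xi_{2}\right)$ in $\mathscr{C}$ from (\ref{eq:Adqs2c}). Next I would match the two remaining off-diagonal entries by setting $b_{1}=G_{1}/L_{1}$ and $b_{2}=-G_{1}/L_{2}$ as in (\ref{eq:Adqs4db}), so that $b_{1}b_{2}=-\dot{G}_{1}^{2}/\left(L_{1}L_{2}\right)$. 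The crux is then relation (\ref{eq:matCb1h}), which says precisely that at the special resistance this product equals $\left(\zeta_{1}+\delta\zeta_{2}\right)^{2}$; choosing the branch $\pm=\delta$ in the model turns this into the defining constraint $b_{1}b_{2}=\left(\zeta_{1}\pm\zeta_{2}\right)^{2}\neq0$ of Lemma \ref{lem:Cjord}, so that $\mathscr{C}$ is literally an instance of $\mathsf{C}_{\pm}$.

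With the identification in place, I would compute $\alpha=b_{1}/\left(\zeta_{1}\pm\zeta_{2}\right)=\left(\dot{G}_{1}/L_{1}\right)/\left(\zeta_{1}+\delta\zeta_{2}\right)$ and substitute the value of $\dot{G}_{1}$ from (\ref{eq:Adqs3h-1}). The factor $\sqrt{\left|\xi_{1}\right|}+\delta\sqrt{\left|\xi_{2}\right|}$ then cancels against the corresponding factor inside $\zeta_{1}+\delta\zeta_{2}$, leaving $\alpha=\sigma_{1}\sqrt{\left|L_{1}L_{2}\right|}/L_{1}$ up to a factor $1$ or $1/\mathrm{i}$ inherited from the definition of $\zeta_{j}$; rewriting $\sqrt{\left|L_{1}L_{2}\right|}/L_{1}=\left(\left|L_{1}\right|/L_{1}\right)\sqrt{\left|L_{2}\right|/\left|L_{1}\right|}$ produces exactly the two-case formula for $\alpha$ in the statement. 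Finally I would invoke Lemma \ref{lem:Cjord} in the form (\ref{eq:matCb1b})--(\ref{eq:matCb1d}) to read off the Jordan form $J_{\pm}$ and the Jordan basis given by the columns of $Z_{\pm}$, checking that its eigenvalue $\sqrt{\pm\zeta_{1}\zeta_{2}}=\sqrt{\delta\zeta_{1}\zeta_{2}}$ reduces to $\sqrt{\sigma\delta\sqrt{\left|\xi_{1}\right|}\sqrt{\left|\xi_{2}\right|}}$ and hence agrees with $s_{0}$ in (\ref{eq:gxiLL1c}).

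I expect the only real friction to be the sign and branch bookkeeping in the case $\sigma=-1$, where $\zeta_{j}=\mathrm{i}\sqrt{\left|\xi_{j}\right|}$ injects factors of $\mathrm{i}$ into both $\alpha$ and the eigenvalue, so that the reciprocal $1/\mathrm{i}=-\mathrm{i}$ must be reconciled against the stated sign. The reconciliation exploits the freedom still present in the construction, namely the sign $\sigma_{1}$ in $\dot{G}_{1}$, the branch $\pm$ paired with $\delta$, and the square root $\sqrt{\pm\zeta_{1}\zeta_{2}}$ left open in Lemma \ref{lem:Cjord}; all of these are genuinely free and can be fixed consistently. As the paper already flags for Lemma \ref{lem:Cjord}, the underlying verification is tedious but straightforward, so I would present these sign reconciliations compactly and let the structural identification $\mathscr{C}=\mathsf{C}_{\pm}$ carry the weight of the argument.
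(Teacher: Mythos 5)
Your proposal is correct in outline and is essentially the paper's own route: the paper likewise offers no free-standing proof of Theorem \ref{thm:degJor}, obtaining it by quoting Lemma \ref{lem:cirsign} (equivalently Theorem \ref{thm:jordeg}) for the degeneracy criterion, the special values and the eigenvalue formulas, and then asserting that combining the identity (\ref{eq:matCb1h}) with Lemma \ref{lem:Cjord} yields the rest. Your matching of $\mathscr{C}$ from (\ref{eq:Adqs2c}) with $\mathsf{C}_{\pm}$ --- via $\zeta_{j}^{2}=\xi_{j}$, $b_{1}=G_{1}/L_{1}$, $b_{2}=-G_{1}/L_{2}$, $b_{1}b_{2}=-\dot{G}_{1}^{2}/\left(L_{1}L_{2}\right)=\left(\zeta_{1}+\delta\zeta_{2}\right)^{2}$, with the branch $\pm$ of $\mathsf{C}_{\pm}$ tied to $\delta$ --- is exactly the computation the paper leaves implicit, and evaluating $\alpha=b_{1}/\left(\zeta_{1}+\delta\zeta_{2}\right)$ is the right way to make it explicit.

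The one flaw is in your closing reconciliation paragraph. For $\sigma=-1$, with the paper's convention $\zeta_{j}=\mathrm{i}\sqrt{\left|\xi_{j}\right|}$ from (\ref{eq:matCb1h}), your computation gives $\alpha=\sigma_{1}\sqrt{\left|L_{1}L_{2}\right|}/\left(\mathrm{i}L_{1}\right)=-\mathrm{i}\,\sigma_{1}\frac{\left|L_{1}\right|}{L_{1}}\sqrt{\left|L_{2}\right|/\left|L_{1}\right|}$, the opposite sign to the formula displayed in the theorem, and none of the three freedoms you invoke can absorb this sign: $\sigma_{1}$ occurs in both $\dot{G}_{1}$ and the stated $\alpha$, so flipping it flips both sides of the would-be identity; the branch $\pm$ is forced to equal $\delta$ because $\left(\zeta_{1}+\zeta_{2}\right)^{2}\neq\left(\zeta_{1}-\zeta_{2}\right)^{2}$ whenever $\zeta_{1}\zeta_{2}\neq0$; and the choice of the root $\sqrt{\pm\zeta_{1}\zeta_{2}}$ enters $J_{\pm}$ and $Z_{\pm}$ but not $\alpha$. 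The freedom that actually reconciles the two is the joint sign flip $\left(\zeta_{1},\zeta_{2},\alpha\right)\mapsto\left(-\zeta_{1},-\zeta_{2},-\alpha\right)$, under which $\mathsf{C}_{\pm}\left(\zeta_{1},\zeta_{2},\alpha\right)$ in (\ref{eq:matCb1b}) and the identity (\ref{eq:matCb1h}) are unchanged; the theorem's $\alpha$ is what one obtains with the convention $\zeta_{j}=-\mathrm{i}\sqrt{\left|\xi_{j}\right|}$. In other words, the mismatch is an internal inconsistency between the stated $\alpha$ and the stated $\zeta_{j}$ convention in the theorem itself, not something your listed freedoms can (or need to) fix: either flip the $\zeta_{j}$ convention when $\sigma=-1$, or keep (\ref{eq:matCb1h}) as written and replace the stated $\alpha$ by $-\frac{\left|L_{1}\right|}{L_{1}}\sigma_{1}\mathrm{i}\sqrt{\left|L_{2}\right|/\left|L_{1}\right|}$. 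With that one correction your argument goes through verbatim, including the check that $\sqrt{\delta\zeta_{1}\zeta_{2}}=\sqrt{\sigma\delta\sqrt{\left|\xi_{1}\right|}\sqrt{\left|\xi_{2}\right|}}$ agrees with (\ref{eq:gxiLL1c}).
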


\section{Circuit synthesis strategy and elements\label{sec:synth-elem}}

The first goal of our synthesis process is to construct a Hamiltonian
system governed by the evolution equation (\ref{eq:XHX1a}) with the
circuit matrix $\mathscr{H}$ having the prescribed Jordan canonical
form subject to natural constraints. Consequently, the circuit matrix
$\mathscr{H}$ has to be a \emph{Hamiltonian matrix}, that is a matrix
obtained from a quadratic Hamiltonian $\mathcal{H}$ with real coefficients.
The spectrum $\mathrm{spec}\,\left(\mathscr{H}\right)$, that is the
set of all distinct eigenvalues, of a Hamiltonian matrix must have
the following property
\begin{gather}
\text{if }\zeta\in\mathrm{spec}\,\left(\mathscr{H}\right),\text{ then }-\zeta,\:\bar{\zeta},\:-\bar{\zeta}\in\mathrm{spec}\,\left(\mathscr{H}\right),\label{eq:XMX1s}\\
\text{the multiplicity of all four numbers is the same,}\nonumber 
\end{gather}
where $\bar{\zeta}$ stands for complex-conjugate to complex number
$\zeta$, see Section \ref{subsec:ham-mat} for details. We refer
to the property (\ref{eq:XMX1s}) as the \emph{Hamiltonian spectral
symmetry}. Suppose that $a\neq0$ and $b\neq0$ are real numbers.
Notice then that the set $\left\{ \zeta,-\zeta,\:\bar{\zeta},\:-\bar{\zeta}\right\} $
consists of (i) two numbers $\left\{ a,-a\right\} $ if $\zeta=a$;
(ii) two numbers $\left\{ b\mathrm{i},-b\mathrm{i}\right\} $ if $\zeta=b\mathrm{i}$;
(ii) four numbers
\begin{equation}
\left\{ a+b\mathrm{i},-a-b\mathrm{i},a-b\mathrm{i},-a+b\mathrm{i}\right\} ,\label{eq:XMX1ab}
\end{equation}
if $\zeta=a+b\mathrm{i}$.

To achieve the desired Jordan form for the circuit matrix $\mathscr{H}$
we introduce the characteristic polynomial $\chi_{\mathscr{H}}\left(s\right)=\chi_{\mathscr{J}}\left(s\right)$
and find its coefficients. Having coefficients $a_{k}$ of the polynomial
$\chi_{\mathscr{H}}\left(s\right)$ as in equations (\ref{eq:Lpricir1d})
and (\ref{eq:Lpricir1e}) we define the Hamiltonian $\mathcal{H}_{a}$
by the following explicit expression

\begin{gather}
\mathcal{H}_{a}=\sum_{k=1}^{n}p_{k+1}q_{k}+\frac{1}{2}\sum_{k=1}^{n}\left(-1\right)^{k-1}a_{k-1}p_{k}^{2}+\frac{1}{2}q_{n}^{2},\label{eq:Hamapq1a}\\
-\infty<a_{k}<\infty,\quad0\leq k\leq n-1;\quad a_{0}\neq0,\label{eq:Hamapq1aa}
\end{gather}
Notice that the system parameters $a_{k}$ can be negative and positive.
The particular choice of signs in expression (\ref{eq:Hamapq1a})
is a matter of convenience. \emph{Hamiltonian $\mathcal{H}_{a}$ defined
by equations (\ref{eq:Hamapq1a}) is fundamental to the synthesis
of all special circuits we construct and we refer to it as principal
Hamiltonian.}

The principal Hamiltonian matrix $\mathscr{H}_{a}$ that corresponds
to the principal Hamiltonian $\mathcal{H}_{a}$ has the following
properties (see Section \ref{sec:priHam} for details):
\begin{itemize}
\item the corresponding to $\mathcal{H}_{a}$ Hamiltonian matrix $\mathscr{H}_{a}$
has the polynomial $\chi\left(s\right)$ defined by equations (\ref{eq:Lpricir1d})
as its characteristic polynomial $\chi_{\mathscr{H}_{a}}\left(s\right)$,
and, consequently, the set of the distinct roots $s_{j}$ of the polynomial
is exactly the set of all distinct eigenvalues of the circuit matrix
$\mathscr{H}_{a}$, that is $\mathrm{spec}\,\left(\mathscr{H}_{a}\right)=\left\{ s_{j}\right\} $; 
\item since $a_{0}\neq0$ we have $s_{j}\neq0$ for every $j$;
\item the spectrum $\mathrm{spec}\,\left(\mathscr{H}_{a}\right)$ satisfies
Hamiltonian spectral symmetry condition (\ref{eq:XMX1s}); 
\item the circuit matrix $\mathscr{H}_{a}$ is\emph{ cyclic (nonderogatory)},
that is it the geometric multiplicity of every eigenvalue $s_{j}$
is exactly one, and every $s_{j}$ is associated with the single Jordan
block $J_{n_{j}}\left(s_{j}\right)$ of the size $n_{j}$ which is
the algebraic multiplicity of eigenvalue $s_{j}$ ; in other words
there is always a single Jordan block for each distinct eigenvalue;
the cyclicity property is an integral part of the construction yielding
simpler Jordan forms;
\item if a non-zero $\zeta\in\mathrm{spec}\,\left(\mathscr{H}_{a}\right)$
is real or pure imaginary then the Jordan form $\mathscr{J}_{a}$
of the Hamiltonian matrix $\mathscr{H}_{a}$ has two Jordan blocks
$J_{n}\left(\zeta\right)$ and $J_{n}\left(-\zeta\right)$ of the
matching size $n$ where $n$ is the multiplicity of $\zeta$ as the
a root of the polynomial $\chi\left(s\right)$.
\item if $\zeta\in\mathrm{spec}\,\left(\mathscr{H}_{a}\right)$ and $\zeta=a+b\mathrm{i}$
with $a\neq0$ and $b\neq0$ then the Jordan form $\mathscr{J}_{a}$
of the Hamiltonian matrix $\mathscr{H}_{a}$ has four Jordan blocks
$J_{n}\left(\pm a\pm b\mathrm{i}\right)$ and $J_{n}\left(\pm a\mp b\mathrm{i}\right)$
of the matching size $n$ where $n$ is the multiplicity of $\zeta$
as the a root of the polynomial $\chi\left(s\right)$.
\end{itemize}
Making particular choices of $a_{k}$ for the principal Hamiltonian
$\mathcal{H}_{a}$ allows to achieve the desired Jordan forms. With
that in mind we introduce the following specific polynomials for real
numbers non-zero numbers $a$ and $b$:
\begin{equation}
\chi\left(s\right)=\left(s^{2}-a^{2}\right)^{n},\quad\left(s^{2}+b^{2}\right)^{n},\label{eq:XMX2a}
\end{equation}
\begin{equation}
\chi\left(s\right)=\left[s^{4}+2\left(b^{2}-a^{2}\right)s^{2}+\left(a^{2}+b^{2}\right)^{2}\right]^{n}.\label{eq:XMH2b}
\end{equation}
Notice that polynomials in equations (\ref{eq:XMX2a}) have respectively
two real roots $\pm a$ and two pure imaginary roots $\pm b\mathrm{i}$
of multiplicities $n$, whereas the polynomial in equation (\ref{eq:XMH2b})
has four roots $\pm a\pm b\mathrm{i}$ and $\pm a\mp b\mathrm{i}$
of multiplicities $n$. The Jordan forms $\mathscr{J}_{a}$ of system
matrices $\mathscr{H}_{a}$ associated with the polynomials in equations
(\ref{eq:XMX2a}) and (\ref{eq:XMH2b}) are respectively

\begin{gather}
\mathscr{J}_{a}=\left[\begin{array}{rr}
J_{n}\left(\zeta\right) & 0\\
0 & J_{n}\left(-\zeta\right)
\end{array}\right],\quad\mathscr{J}_{a}=\left[\begin{array}{rrrr}
J_{n}\left(\zeta\right) & 0 & 0 & 0\\
0 & J_{n}\left(-\zeta\right) & 0 & 0\\
0 & 0 & J_{n}\left(\bar{\zeta}\right) & 0\\
0 & 0 & 0 & J_{n}\left(-\bar{\zeta}\right)
\end{array}\right],\quad\zeta=a,b\mathrm{i},\label{eq:XMX2c}
\end{gather}
where Jordan block $J_{n}\left(\zeta\right)$ is defined by equation
(\ref{eq:Joblo1b}).

We summarize now the important points of the analysis in Sections
\ref{sec:pri-cir} and \ref{sec:spe-cir} in the following statement.
\begin{thm}[principal circuit]
 Suppose that the principal circuit depicted in Fig. \ref{fig:pri-cir}
has its element values defined by equations (\ref{eq:Lpricir1f})-(\ref{eq:Lpricir1g}).
Then the dynamics of the principal circuit is governed by the principal
Lagrangian $\mathcal{L}$ defined by equation (\ref{eq:Lpricir1a})
and the principal Hamiltonian $\mathcal{H}$ defined by\emph{ defined
by equations (\ref{eq:Hamapq1a}). The corresponding EL} equations
(\ref{eq:Lpricir1b}) and (\ref{eq:Lpricir1c}) represent the Kirchhoff
voltage law, whereas the Kirchhoff current is enforced by the selection
of $n$ involved f-loops and currents $\partial_{t}q_{k}$.

The relevant Hamiltonian matrix $\mathscr{H}$ is cyclic (non-derogatory),
and its characteristic polynomial $\chi\left(s\right)$ is defined
by equations (\ref{eq:Lpricir1d}). The Jordan form $\mathscr{J}$
of matrix $\mathscr{H}$ is completely determined by $\chi\left(s\right)$.
In particular, each distinct root $s_{j}$ of $\chi\left(s\right)$
of the multiplicity $n_{j}$ is represented in $\mathscr{J}$ by the
single Jordan block $J_{n_{j}}\left(s_{j}\right)$ of the matching
size $n_{j}$.

For particular choices of the monic polynomial $\chi\left(s\right)$
as described in equations (\ref{eq:XMX2a}) and (\ref{eq:XMH2b})
one obtains circuits associated with the Jordan forms represented
respectively in equations (\ref{eq:XMX2c}).
\end{thm}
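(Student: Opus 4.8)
The plan is to assemble the statement from constructions already in place, treating its three assertions in turn. I would begin with the Lagrangian, the Euler--Lagrange (EL) equations, and the Kirchhoff laws. Forming the EL equations $\frac{d}{dt}\frac{\partial\mathcal{L}}{\partial(\partial_t q_k)}-\frac{\partial\mathcal{L}}{\partial q_k}=0$ directly from the principal Lagrangian (\ref{eq:Lpricir1a}) reproduces exactly the equations (\ref{eq:Lpricir1b})--(\ref{eq:Lpricir1c}); a term-by-term comparison of each summand with the voltage--current relations of the inductor, capacitor, and gyrator (Section \ref{sec:e-net}) identifies it as a voltage drop, so these EL equations are the Kirchhoff voltage law for the $n$ f-loops, while the Kirchhoff current law is automatic from the choice of loop currents $\partial_t q_k$, as already argued following (\ref{eq:Lpricir1c}). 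That $\mathcal{L}$ coincides, up to the same EL equations, with the Legendre transform of the principal Hamiltonian $\mathcal{H}$ in (\ref{eq:Hamapq1a}) under the element values (\ref{eq:Lpricir1f})--(\ref{eq:Lpricir1g}) is the content of Sections \ref{sec:priHam} and \ref{subsec:Lag}, which I would simply invoke.

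The core is the second assertion, on cyclicity and the Jordan form of $\mathscr{H}$. The cleanest route uses the similarity (\ref{eq:XHX1c}), $\mathscr{C}=T^{-1}\mathscr{H}T$, established in Section \ref{sec:priHam}: since the companion matrix $\mathscr{C}$ of the monic polynomial $\chi(s)$ is always cyclic and has $\chi$ as its characteristic polynomial, the similar matrix $\mathscr{H}$ is cyclic as well and satisfies $\chi_{\mathscr{H}}(s)=\chi_{\mathscr{C}}(s)=\chi(s)$, with $\chi$ given by (\ref{eq:Lpricir1d}). I would then invoke the general fact about cyclic (non-derogatory) matrices recalled in Section \ref{sec:intro} and justified in Section \ref{sec:co-mat}: for such a matrix every eigenvalue has geometric multiplicity one, so each distinct root $s_j$ of $\chi$ of algebraic multiplicity $n_j$ contributes a single Jordan block $J_{n_j}(s_j)$, whence the Jordan form is completely determined by $\chi$. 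As an independent confirmation of cyclicity one may note, as in (\ref{eq:pricir2e}), that every solution of the EL equations obeys the scalar equation $\chi(\partial_t)q_k=0$ (Section \ref{sec:dif-jord}).

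The third assertion is then a direct specialization. Factoring the polynomials in (\ref{eq:XMX2a}) gives $(s^2-a^2)^n=(s-a)^n(s+a)^n$ and $(s^2+b^2)^n=(s-b\mathrm{i})^n(s+b\mathrm{i})^n$, so the distinct roots are $\pm a$ (respectively $\pm b\mathrm{i}$), each of multiplicity $n$; the cyclic-matrix result then yields the two-block Jordan form in the left member of (\ref{eq:XMX2c}). Likewise the degree-$4n$ polynomial (\ref{eq:XMH2b}) factors as the $n$-th power of $(s-\zeta)(s+\zeta)(s-\bar{\zeta})(s+\bar{\zeta})$ with $\zeta=a+b\mathrm{i}$, $a\neq0$, $b\neq0$, giving four distinct roots $\pm a\pm b\mathrm{i}$ of multiplicity $n$ and hence the four-block form on the right of (\ref{eq:XMX2c}).

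The main obstacle lies not in any single step but in the similarity $\mathscr{H}\sim\mathscr{C}$ on which the entire second assertion rests: one must exhibit the explicit intertwiner $T$ and verify $\mathscr{H}T=T\mathscr{C}$ together with $\det T\neq0$ for the general $n$-loop circuit, not merely for the cases $n=2,3,4$ computed above. This is precisely the computation carried out in Section \ref{sec:priHam}; the present summary theorem is legitimate exactly because that intertwining — rather than a fresh eigenvalue analysis for each $n$ — transfers both the cyclicity and the characteristic polynomial from the transparently structured companion matrix $\mathscr{C}$ to $\mathscr{H}$.
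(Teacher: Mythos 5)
Your proposal is correct and follows essentially the same route as the paper: the theorem is stated there as a summary whose justification is exactly the material you assemble, namely the EL/Kirchhoff discussion of Section \ref{sec:pri-cir}, the Legendre-transform link between $\mathcal{L}$ and $\mathcal{H}$, the similarity $\mathscr{C}=T^{-1}\mathscr{H}T$ with $\det T=a_{0}^{n}\neq0$ constructed for general $n$ in Section \ref{sec:priHam}, the cyclicity and Jordan structure of companion matrices from Appendix B, and the factorization of the special polynomials (\ref{eq:XMX2a}), (\ref{eq:XMH2b}). You also correctly identify the explicit intertwiner $T$ of Section \ref{sec:priHam} as the load-bearing step, which is precisely where the paper places it.
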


\section{The principal Hamiltonian and Lagrangian\label{sec:priHam}}

Suppose that the system configuration is described by time-dependent
$n$-dimensional vector-column $q$ and its dynamics is governed by
a Hamiltonian $\mathcal{H}=\mathcal{H}\left(p,q\right)$ where $p$
is the system momentum which is an $n$-dimensional vector-column
just as the configuration vector $q$. Suppose now that the Hamiltonian
$\mathcal{H}$ is defined by equations (\ref{eq:Hamapq1a}). To present
the system information in a compact matrix form we recast the representation
of the Hamiltonian (\ref{eq:Hamapq1a}) as
\begin{gather}
\mathcal{H}_{a}=\frac{1}{2}X^{\mathrm{T}}M_{\mathrm{H}}X,\quad X=\left[\begin{array}{l}
q\\
p
\end{array}\right],\quad M_{\mathrm{H}}=\left[\begin{array}{rr}
-\pi_{n} & K_{n}\\
K_{n}^{\mathrm{T}} & D_{a}
\end{array}\right],\label{eq:Hamapq1b}
\end{gather}
where $D_{a}$ and $\pi_{n}$ are diagonal $n\times n$ matrices defined
by

\begin{equation}
D_{a}=\left[\begin{array}{ccccc}
a_{0} & 0 & \cdots & 0 & 0\\
0 & -a_{1} & 0 & \cdots & 0\\
0 & 0 & \ddots & \cdots & \vdots\\
\vdots & \vdots & \ddots & \left(-1\right)^{n-3}a_{n-2} & 0\\
0 & 0 & \cdots & 0 & \left(-1\right)^{n-2}a_{n-1}
\end{array}\right],\quad\pi_{n}=\left[\begin{array}{ccccc}
0 & 0 & \cdots & 0 & 0\\
0 & 0 & 0 & \cdots & 0\\
0 & 0 & \ddots & \cdots & \vdots\\
\vdots & \vdots & \ddots & 0 & 0\\
0 & 0 & \cdots & 0 & 1
\end{array}\right],\label{eq:Hamapq1c}
\end{equation}
and $K_{n}$ is $n\times n$ nilpotent matrix defined by
\begin{equation}
K_{n}=\left[\begin{array}{ccccc}
0 & 1 & \cdots & 0 & 0\\
0 & 0 & 1 & \cdots & 0\\
0 & 0 & \ddots & \cdots & \vdots\\
\vdots & \vdots & \ddots & 0 & 1\\
0 & 0 & \cdots & 0 & 0
\end{array}\right].\label{eq:Hamapq1dn}
\end{equation}
We also make use of the Jordan block $J_{n}\left(\zeta\right)$ of
the size $n$ defined by
\begin{gather}
J_{n}\left(\zeta\right)=\zeta\mathbb{I}_{n}+K_{n}=\left[\begin{array}{ccccc}
\zeta & 1 & \cdots & 0 & 0\\
0 & \zeta & 1 & \cdots & 0\\
0 & 0 & \ddots & \cdots & \vdots\\
\vdots & \vdots & \ddots & \zeta & 1\\
0 & 0 & \cdots & 0 & \zeta
\end{array}\right],\label{eq:Hamapq1e}
\end{gather}
where $\mathbb{I}_{n}$ is $n\times n$ identity matrix.

The evolution equations for the principal Hamiltonian $\mathcal{H}_{a}$,
defined be equations (\ref{eq:Hamapq1b}), are
\begin{gather}
\partial_{t}X=\mathscr{H}_{a}X,\quad\mathscr{H}_{a}=\mathbb{J}M_{\mathrm{H}},\quad\mathbb{J}=\left[\begin{array}{rr}
0 & \mathbf{\mathbb{I}}\\
-\mathbf{\mathbb{I}} & 0
\end{array}\right],\label{eq:Hamapq1f}
\end{gather}
where the system state vector $X$ and matrix $M_{\mathrm{H}}$ are
defined by (\ref{eq:Hamapq1a}), and consequently

\begin{equation}
\mathscr{H}_{a}=\mathbb{J}M_{\mathrm{H}}=\left[\begin{array}{rr}
K_{n}^{\mathrm{T}} & -D_{a}\\
\pi_{n} & -K_{n}
\end{array}\right].\label{eq:Hamapq1g}
\end{equation}
With an eigenvalue problem in mind we introduce matrix
\begin{equation}
s\mathbb{I}_{2n}-\mathscr{H}_{a}=\left[\begin{array}{rr}
-J_{n}^{\mathrm{T}}\left(-s\right) & -D_{a}\\
-\pi_{n} & J_{n}\left(s\right)
\end{array}\right],\label{eq:Hamapq1z}
\end{equation}
and find then the corresponding characteristic function is equal to
\begin{gather}
\chi_{a}\left(s\right)=\det\left\{ s\mathbb{I}_{2n}-\mathscr{H}_{a}\right\} =s^{2n}+\left(-1\right)^{n}\sum_{k=1}^{n}a_{n-k}s^{2\left(n-k\right)}=\label{eq:Hamapq2a}\\
=s^{2n}+\left(-1\right)^{n}\left(a_{n-1}s^{2\left(n-1\right)}+a_{n-2}s^{2\left(n-2\right)}+\cdots a_{0}\right).\nonumber 
\end{gather}
To see that representation (\ref{eq:Hamapq2a}) for $\chi_{a}\left(s\right)$
holds we apply formula (\ref{eq:Block1f}) to the right-hand side
of equation (\ref{eq:Hamapq1z}) and obtain
\begin{gather}
\chi_{a}\left(s\right)=\det\left\{ s\mathbb{I}_{2n}-\mathscr{H}_{a}\right\} =\det\left\{ J_{n}^{\mathrm{T}}\left(-s\right)\right\} \det\left\{ -J_{n}\left(s\right)+\pi_{n}\left[J_{n}^{-1}\left(-s\right)\right]^{\mathrm{T}}D_{a}\right\} .\label{eq:Hamapq2aa}
\end{gather}
We use then equations (\ref{eq:Hamapq1c}) and (\ref{eqJnqp1e}) to
evaluate the right-hand side of equation (\ref{eq:Hamapq2aa}) and
arrive at the formula (\ref{eq:Hamapq2a}).

We introduce now the so-called \emph{companion} to the polynomial
$\chi_{a}\left(s\right)$, see Section \ref{sec:co-mat}, which is
$2n\times2n$ matrix defined by
\begin{gather}
\mathscr{C}_{a}=\left[\begin{array}{rrrrrrr}
0 & 1 & 0 & \cdots & \cdots & 0 & 0\\
0 & 0 & 1 & 0 & 0 & 0 & 0\\
0 & 0 & 0 & \ddots & \ddots & 0 & 0\\
\vdots & \vdots & \vdots & \ddots & \ddots & \vdots & \vdots\\
0 & 0 & 0 & \ddots & 0 & 1 & 0\\
0 & 0 & 0 & 0 & 0 & 0 & 1\\
c_{0} & 0 & c_{1} & 0 & \cdots & c_{n-1} & 0
\end{array}\right],\quad c_{k}=\left(-1\right)^{n-1}a_{k},\quad0\leq k\leq n.\label{eq:Hamapq2b}
\end{gather}
Notice that the eigenvalue problem for the companion matrix $\mathscr{C}_{a}$
has the following explicit form solution, see Section \ref{sec:co-mat}, 

\begin{gather}
\mathscr{C}_{a}Y\left(s\right)=sY\left(s\right),\quad Y\left(s\right)=\left[\begin{array}{c}
1\\
s\\
s^{2}\\
\vdots\\
s^{2n-2}\\
s^{2n-1}
\end{array}\right],\quad Y\left[k\right]=s^{k-1},\quad1\leq k\leq2n,\quad\chi_{a}\left(s\right)=0,\label{eq:Hamapq2c}
\end{gather}
where evidently vector polynomial $Y\left(s\right)$ is uniquely determined
by the corresponding eigenvalue $s$. If all eigenvalues $s_{j}$,
$1\leq j\leq2n$ of the companion matrix $\mathscr{C}_{a}$ are different
the set of the corresponding eigenvectors $Y\left(s_{j}\right)$ form
a basis that diagonalize matrix $\mathscr{C}_{a}$. In the general
case we introduce an $2n\times2n$ matrix $\mathscr{Y}_{a}$ as the
generalized Vandermonde matrix defined by equations (\ref{eq:compas2d}),
(\ref{eq:compas2f}). Then according to Proposition \ref{prop:cycJ}
we have
\begin{equation}
\mathscr{C}_{a}=\mathscr{Y}_{a}\mathscr{J}_{a}\mathscr{Y}_{a}^{-1},\label{eq:Hamapq2ca}
\end{equation}
where $\mathscr{J}_{a}$ is the Jordan form of the companion matrix
$\mathscr{C}_{a}$. We refer to $\mathscr{Y}_{a}$ as \emph{Jordan
basis matrix} for matrix $\mathscr{C}_{a}$. In the special case of
distinct eigenvalues matrix $\mathscr{Y}_{a}$ turns into the standard
Vandermonde matrix defined by equation (\ref{eq:compas3c}), that
is a matrix formed by column-vectors $Y\left(s_{j}\right)$ as in
equation (\ref{eq:Hamapq2c}).

Notice also that it follows from equations (\ref{eq:Hamapq2a}) and
(\ref{eq:Hamapq2b}) that
\begin{gather}
\det\left\{ \mathscr{H}_{a}\right\} =\det\left\{ \mathscr{C}_{a}\right\} =\left(-1\right)^{n}a_{0}.\label{eq:Hamapq2d}
\end{gather}

Let us turn now to the eigenvalue problem for the system matrix $\mathscr{H}_{a}$.
In view of equation (\ref{eq:Hamapq1z}) an eigenvector $Z$ of $\mathscr{H}_{a}$
satisfies
\begin{gather}
\left[\begin{array}{rr}
-J_{n}^{\mathrm{T}}\left(-s\right) & -D_{a}\\
-\pi_{n} & J_{n}\left(s\right)
\end{array}\right]Z\left(s\right)=0,\quad Z\left(s\right)=\left[\begin{array}{r}
q\left(s\right)\\
p\left(s\right)
\end{array}\right],\label{eq:Jnqp1a}
\end{gather}
or equivalently
\begin{gather}
J_{n}^{\mathrm{T}}\left(-s\right)q\left(s\right)+D_{a}p\left(s\right)=0,\label{eq:Jnqp1b}\\
-\pi_{n}q\left(s\right)+J_{n}\left(s\right)p\left(s\right)=0.\label{eq:Jnqp1c}
\end{gather}
Notice first that $\pi_{n}q\left(s\right)\neq0$ otherwise we consequently
obtain $p\left(s\right)=0$ from equation (\ref{eq:Jnqp1c}) and then
$q\left(s\right)=0$ from equation (\ref{eq:Jnqp1b}) implying $Z\left(s\right)=0$
contradicting that $Z\left(s\right)$ is an eigenvector. Using that
we normalize $q\left(s\right)$ by the following assumption
\begin{gather}
\pi_{n}q\left(s\right)=e_{n}\left[e_{n}^{T}q\left(s\right)\right]=s^{2n}e_{n},\text{ or equivalently }q_{n}\left(s\right)=e_{n}^{T}q\left(s\right)=s^{2n}.\label{eq:Jnqp1d}
\end{gather}
This particular choice of normalization makes the components of eigenvectors
to be polynomials of $s$ rather then rational functions. Combing
the explicit formula
\begin{equation}
\left[J_{n}\left(s\right)\right]^{-1}=\left[\begin{array}{ccccc}
\frac{1}{s} & -\frac{1}{s^{2}} & \frac{1}{s^{3}} & \cdots & \frac{\left(-1\right)^{n-1}}{s^{n}}\\
0 & \frac{1}{s} & -\frac{1}{s^{2}} & \cdots & \frac{\left(-1\right)^{n-2}}{s^{n-1}}\\
0 & 0 & \ddots & \cdots & \vdots\\
\vdots & \vdots & \ddots & \frac{1}{s} & -\frac{1}{s^{2}}\\
0 & 0 & \cdots & 0 & \frac{1}{s}
\end{array}\right],\label{eqJnqp1e}
\end{equation}
with equations (\ref{eq:Jnqp1c}) and (\ref{eqJnqp1e}) we readily
obtain
\begin{equation}
p\left(s\right)=s^{2n}\left(J_{n}\left(s\right)\right)^{-1}e_{n}=\left[\begin{array}{c}
\left(-1\right)^{n-1}s^{n}\\
\left(-1\right)^{n}s^{n+1}\\
\vdots\\
-s^{2n-2}\\
s^{2n-1}
\end{array}\right].\label{eq:Jnqp1f}
\end{equation}
Then plugging expression (\ref{eq:Jnqp1f}) into equation (\ref{eq:Jnqp1b})
yields
\begin{gather}
q\left(s\right)=\left[-J_{n}^{\mathrm{T}}\left(-s\right)\right]^{-1}D_{a}p==s^{2n}\left[-J_{n}^{\mathrm{T}}\left(-s\right)\right]^{-1}D_{a}\left(J_{n}\left(s\right)\right)^{-1}e_{n}.\label{eq:Jnqp1g}
\end{gather}
Using equations (\ref{eqJnqp1e}), (\ref{eq:Jnqp1f}) and (\ref{eq:Jnqp1g})
we obtain the following expressions for the components of $q\left(s\right)$
and $p\left(s\right)$
\begin{gather}
q_{j}\left(s\right)=\left(-1\right)^{n-1}\sum_{k=1}^{j}a_{k-1}s^{n-j+2\left(k-1\right)},\quad p_{j}\left(s\right)=\left(-1\right)^{n+j}s^{n+j-1},\quad1\leq j\leq n.\label{eq:Jnqp2a}
\end{gather}
Consequently we get the following representation for eigenvector $Z\left(s\right)$
\begin{gather}
\mathscr{H}_{a}Z\left(s\right)=sZ\left(s\right),\quad Z\left(s\right)=\left[\begin{array}{r}
q\left(s\right)\\
p\left(s\right)
\end{array}\right]=\left[\begin{array}{c}
q_{1}\left(s\right)\\
\vdots\\
q_{n}\left(s\right)\\
p_{1}\left(s\right)\\
\vdots\\
p_{n}\left(s\right)
\end{array}\right]=\sum_{k=0}^{2n-1}Z_{k}s^{k},\label{eq:Jnqp2c}
\end{gather}
where $q\left(s\right)$ and $p\left(s\right)$ are defined by equations
(\ref{eq:Jnqp2a}).

Notice that according to equations (\ref{eq:Jnqp2a}) and (\ref{eq:Jnqp2c})
the eigenvector $Z\left(s\right)$ of the system matrix $\mathscr{H}_{a}$
is uniquely determined by the corresponding eigenvalue $s$. Evidently,
$Z\left(s\right)$ is a vector polynomial of $s$ with vector coefficients
$Z_{k}$ which are determined by expressions (\ref{eq:Jnqp2a}) for
vectors $q\left(s\right)$ and $p\left(s\right)$. 

Comparing equations (\ref{eq:Jnqp2c}) and (\ref{eq:Hamapq2c}) we
arrive with the following relationship between eigenvectors $Z\left(s\right)$
and $Y\left(s\right)$
\begin{gather}
Z\left(s\right)=T_{a}Y\left(s\right),\quad T_{a}=\left[Z_{0}|Z_{1}|\ldots|Z_{2n-1}\right],\quad\mathrm{col}\,\left(T_{a},k\right)=Z_{k-1},\quad1\leq k\leq2n-1.\label{eq:Jnqp2d}
\end{gather}
Notice that $2n\times2n$ matrix $T_{a}$ in equations (\ref{eq:Hamapq2c})
is defined by its columns which are the vector coefficients $Z_{k}$
of the vector polynomial $Z\left(s\right)$. Just as the system matrix
$\mathscr{H}_{a}$ and the companion matrix $\mathscr{C}_{a}$ matrix
$T_{a}$ is completely defined by the system parameters $a_{k}$ and
hence by the polynomial $\chi_{a}\left(s\right)$. An analysis show
that $T_{a}$ is $2\times2$ upper triangular block matrix, with blocks
of the dimension $n\times n$, and based on that one can establish
that
\begin{equation}
\det\left\{ T_{a}\right\} =a_{0}^{n}.\label{eq:Jnqp2e}
\end{equation}
The significance of matrix $T_{a}$ is that it provides for the similarity
relation between between the system matrix $\mathscr{H}_{a}$ and
its companion matrix $\mathscr{C}_{a}$, that is
\begin{equation}
\mathscr{C}_{a}=T_{a}^{-1}\mathscr{H}_{a}T_{a}.\label{eq:Jnqp2f}
\end{equation}
Equations (\ref{eq:Hpqchi3c})-(\ref{eq:Hpqchi3e}) and (\ref{eq:Hpqchi4c})-(\ref{eq:Hpqchi4e})
show examples of matrices $\mathscr{H}_{a}$, $\mathscr{C}_{a}$ and
$T_{a}$ for the cases $n=3,4$. 

Notice then if we introduce $2n\times2n$ matrix
\begin{equation}
\mathscr{Z}_{a}=T_{a}\mathscr{Y}_{a}\label{eq:Jnqp2g}
\end{equation}
use it in combination with equations (\ref{eq:Hamapq2ca}) we obtain
\begin{equation}
\mathscr{H}_{a}=\mathscr{Z}_{a}\mathscr{J}_{a}\mathscr{Z}_{a}^{-1},\label{eq:Jnqp2h}
\end{equation}
where $\mathscr{J}_{a}$ is the Jordan form of the companion matrix
$\mathscr{C}_{a}$ and hence of the system matrix $\mathscr{H}_{a}$
as well. We refer to $\mathscr{Z}_{a}$ as \emph{Jordan basis matrix}
for matrix $\mathscr{H}_{a}$.

The \emph{principal Lagrangian} $\mathcal{L}_{a}$ obtained from the
principal Hamiltonian $\mathcal{H}_{a}$ by the Legendre transformation
is

\begin{gather}
\mathcal{L}_{a}=\sum_{k=1}^{n}\frac{\left(-1\right)^{k-1}}{a_{k-1}}v_{k+1}q_{k}+\frac{1}{2}\sum_{k=1}^{n}\frac{\left(-1\right)^{k-1}}{a_{k-1}}v_{k}^{2}+\frac{1}{2}\sum_{k=1}^{n}\frac{\left(-1\right)^{k}}{a_{k}}q_{k}^{2}+\frac{1}{2}q_{n}^{2},\label{eq:Lagham1a}\\
v_{k}=\partial_{t}q_{k},\quad1\leq k\leq n.\nonumber 
\end{gather}
An equivalent to $\mathcal{L}_{a}$ version of it with the skew-symmetric
gyroscopic part is the following Lagrangian

\begin{gather}
\mathcal{L}_{a}^{\prime}=\frac{1}{2}\sum_{k=1}^{n}\frac{\left(-1\right)^{k-1}}{a_{k-1}}\left(v_{k+1}q_{k}-v_{k}q_{k+1}\right)+\frac{1}{2}\sum_{k=1}^{n}\frac{\left(-1\right)^{k-1}}{a_{k-1}}v_{k}^{2}+\frac{1}{2}\sum_{k=1}^{n}\frac{\left(-1\right)^{k}}{a_{k}}q_{k}^{2}+\frac{1}{2}q_{n}^{2},\label{eq:Lagham1b}\\
v_{k}=\partial_{t}q_{k},\quad1\leq k\leq n.\nonumber 
\end{gather}
The equivalency between two Lagrangians defined by equations (\ref{eq:Lagham1a})
and (\ref{eq:Lagham1b}) is understood as that the corresponding EL
equations are same, see Section \ref{subsec:Lag}.

\section{Examples of the significant matrices for the principal Hamiltonian\label{sec:expriHam}}

We show in this section explicit form of matrices $\mathscr{H}_{a}$,
$\mathscr{C}_{a}$ and $T_{a}$ related to the principal Hamiltonian
defined by equations (\ref{eq:Hamapq1a}), (\ref{eq:Hamapq1aa}).
The expressions of these matrices are somewhat different for even
and odd $n$, and with that in mind we consider two case of $n=3$
and $n=4$.

\subsection{The principal Hamiltonian and significant matrices for $n=3$\label{subsec:priHam3}}

The principal Hamiltonian and the corresponding characteristic polynomials
for $n=4$ are respectively
\begin{equation}
\mathcal{H}=\sum_{k=1}^{3}p_{k+1}q_{k}+\frac{1}{2}\sum_{k=1}^{3}\left(-1\right)^{k-1}a_{k-1}p_{k}^{2}+\frac{1}{2}q_{3}^{2},\quad\chi\left(s\right)=s^{6}-a_{2}s^{4}-a_{1}s^{2}-a_{0}.\label{eq:Hpqchi3a}
\end{equation}
The significant matrices in this case are as follows:
\begin{equation}
\mathscr{H}=\left[\begin{array}{rrrrrr}
0 & 0 & 0 & a_{0} & 0 & 0\\
1 & 0 & 0 & 0 & -a_{1} & 0\\
0 & 1 & 0 & 0 & 0 & a_{2}\\
0 & 0 & 0 & 0 & -1 & 0\\
0 & 0 & 0 & 0 & 0 & -1\\
0 & 0 & 1 & 0 & 0 & 0
\end{array}\right],\quad\mathscr{C}=\left[\begin{array}{rrrrrr}
0 & 1 & 0 & 0 & 0 & 0\\
0 & 0 & 1 & 0 & 0 & 0\\
0 & 0 & 0 & 1 & 0 & 0\\
0 & 0 & 0 & 0 & 1 & 0\\
0 & 0 & 0 & 0 & 0 & 1\\
a_{0} & 0 & a_{1} & 0 & a_{2} & 0
\end{array}\right],\label{eq:Hpqchi3c}
\end{equation}
\begin{equation}
T=\left[\begin{array}{rrrrrr}
0 & 0 & a_{0} & 0 & 0 & 0\\
0 & a_{0} & 0 & a_{1} & 0 & 0\\
a_{0} & 0 & a_{1} & 0 & a_{2} & 0\\
0 & 0 & 0 & 1 & 0 & 0\\
0 & 0 & 0 & 0 & -1 & 0\\
0 & 0 & 0 & 0 & 0 & 1
\end{array}\right].\label{eq:Hpqchi3e}
\end{equation}

\subsection{The principal Hamiltonian and significant matrices for $n=4$\label{subsec:priHam4}}

The principal Hamiltonian and the corresponding characteristic polynomials
for $n=4$ are respectively
\begin{equation}
\mathcal{H}=\sum_{k=1}^{4}p_{k+1}q_{k}+\frac{1}{2}\sum_{k=1}^{4}\left(-1\right)^{k-1}a_{k-1}p_{k}^{2}+\frac{1}{2}q_{4}^{2},\label{eq:Hpqchi4a}
\end{equation}
\begin{equation}
\chi\left(s\right)=s^{8}+a_{3}s^{6}+a_{2}s^{4}+a_{1}s^{2}+a_{0}.\label{eq:Hpqchi4b}
\end{equation}
The significant matrices in this case are as follows:

\begin{equation}
\mathscr{H}=\left[\begin{array}{rrrrrrrr}
0 & 0 & 0 & 0 & a_{0} & 0 & 0 & 0\\
1 & 0 & 0 & 0 & 0 & -a_{1} & 0 & 0\\
0 & 1 & 0 & 0 & 0 & 0 & a_{2} & 0\\
0 & 0 & 1 & 0 & 0 & 0 & 0 & -a_{3}\\
0 & 0 & 0 & 0 & 0 & -1 & 0 & 0\\
0 & 0 & 0 & 0 & 0 & 0 & -1 & 0\\
0 & 0 & 0 & 0 & 0 & 0 & 0 & -1\\
0 & 0 & 0 & 1 & 0 & 0 & 0 & 0
\end{array}\right],\quad\mathscr{C}=\left[\begin{array}{rrrrrrrr}
0 & 1 & 0 & 0 & 0 & 0 & 0 & 0\\
0 & 0 & 1 & 0 & 0 & 0 & 0 & 0\\
0 & 0 & 0 & 1 & 0 & 0 & 0 & 0\\
0 & 0 & 0 & 0 & 1 & 0 & 0 & 0\\
0 & 0 & 0 & 0 & 0 & 1 & 0 & 0\\
0 & 0 & 0 & 0 & 0 & 0 & 1 & 0\\
0 & 0 & 0 & 0 & 0 & 0 & 0 & 1\\
-a_{0} & 0 & -a_{1} & 0 & -a_{2} & 0 & -a_{3} & 0
\end{array}\right],\label{eq:Hpqchi4c}
\end{equation}

\begin{equation}
T=\left[\begin{array}{rrrrrrrr}
0 & 0 & 0 & -a_{0} & 0 & 0 & 0 & 0\\
0 & 0 & -a_{0} & 0 & -a_{1} & 0 & 0 & 0\\
0 & -a_{0} & 0 & -a_{1} & 0 & -a_{2} & 0 & 0\\
-a_{0} & 0 & -a_{1} & 0 & -a_{2} & 0 & -a_{3} & 0\\
0 & 0 & 0 & 0 & -1 & 0 & 0 & 0\\
0 & 0 & 0 & 0 & 0 & 1 & 0 & 0\\
0 & 0 & 0 & 0 & 0 & 0 & -1 & 0\\
0 & 0 & 0 & 0 & 0 & 0 & 0 & 1
\end{array}\right].\label{eq:Hpqchi4e}
\end{equation}

\section{Lagrangian and Hamiltonian structures for linear systems\label{sec:LagHam}}

We provide here basic facts on the Lagrangian and Hamiltonian structures
for linear systems.

\subsection{Lagrangian\label{subsec:Lag}}

Lagrangian $\mathcal{L}$ for a linear system is a quadratic function
(bilinear form) of the system state $Q=\left[q_{r}\right]_{r=1}^{n}$
(column vector) and its time derivatives $\partial_{t}Q$, that is
\begin{gather}
\mathcal{L}=\mathcal{L}\left(Q,\partial_{t}Q\right)=\frac{1}{2}\left[\begin{array}{l}
Q\\
\partial_{t}Q
\end{array}\right]^{\mathrm{T}}M_{\mathrm{L}}\left[\begin{array}{l}
Q\\
\partial_{t}Q
\end{array}\right],\quad M_{\mathrm{L}}=\left[\begin{array}{rr}
-\eta & \theta^{\mathrm{T}}\\
\theta & \alpha
\end{array}\right],\label{eq:dlag1}
\end{gather}
where $\mathrm{T}$ denotes the matrix transposition operation, and
$\alpha,\eta$ and $\theta$ are $n\times n$-matrices with real-valued
entries. In addition to that, we assume matrices $\alpha,\eta$ to
be symmetric, that is 
\begin{equation}
\alpha=\alpha^{\mathrm{T}},\qquad\eta=\eta^{\mathrm{T}}.\label{eq:dlag2}
\end{equation}
Consequently,
\begin{equation}
\mathcal{L}=\frac{1}{2}\partial_{t}Q^{\mathrm{T}}\alpha\partial_{t}Q+\partial_{t}Q^{\mathrm{T}}\theta Q-\frac{1}{2}Q^{\mathrm{T}}\eta Q.\label{eq:dlag3a}
\end{equation}
Then by Hamilton's principle, the system evolution is governed by
the EL equations 
\begin{equation}
\frac{d}{dt}\left(\frac{\partial\mathcal{L}}{\partial\partial_{t}Q}\right)-\frac{\partial\mathcal{L}}{\partial Q}=0,\label{eq:dlag4}
\end{equation}
which, in view of equation (\ref{eq:dlag3a}) for the Lagrangian $\mathcal{L}$,
turns into the following second-order vector ordinary differential
equation (ODE):
\begin{equation}
\alpha\partial_{t}^{2}Q+\left(\theta-\theta^{\mathrm{T}}\right)\partial_{t}Q+\eta Q=0.\label{eq:dlag5}
\end{equation}
Notice that matrix $\theta$ enters equation (\ref{eq:dlag5}) through
its skew-symmetric component $\frac{1}{2}\left(\theta-\theta^{\mathrm{T}}\right)$
justifying as a possibility to impose the skew-symmetry assumption
on $\theta$, that is
\begin{equation}
\theta^{\mathrm{T}}=-\theta.\label{eq:dlag3aa}
\end{equation}
Indeed, the symmetric part $\theta_{s}=\frac{1}{2}\left(\theta+\theta^{\mathrm{T}}\right)$
of the matrix $\theta$ is associated with a term to the Lagrangian
which can be recast as is the complete (total) derivative, namely
$\frac{1}{2}\partial_{t}\left(Q^{\mathrm{T}}\theta_{s}Q\right)$.
It is a well known fact that adding to a Lagrangian the complete (total)
derivative of a function of $Q$ does not alter the the EL equations.
Namely, the EL equations are invariant under the Lagrangian gauge
transform $\mathcal{L}\rightarrow\mathcal{L}+\partial_{t}F\left(q,t\right)$,
\cite[2.9, 2.10]{Scheck}, \cite[I.2]{LanLifM}.

Under the assumption (\ref{eq:dlag3aa}) equation (\ref{eq:dlag5})
turns into its version with the skew-symmetric $\theta$
\begin{equation}
\alpha\partial_{t}^{2}Q+2\theta\partial_{t}Q+\eta Q=0,\text{ if }\theta^{\mathrm{T}}=-\theta.\label{eq:dlag5aa}
\end{equation}
It turns out though that our our principal Lagrangian that corresponds
to the principal Hamiltonian by the Legendre transformation does not
have skew-symmetric $\theta$ satisfying (\ref{eq:dlag3aa}). For
this reason we don't impose the condition of skew-symmetry on $\theta$. 

The EL equations are the second order ODE. The standard way to reduce
them to the equivalent first order ODE yields

\begin{gather}
\partial_{t}Y=\mathscr{L}Y,\quad Y=\left[\begin{array}{r}
Q\\
\partial_{t}Q
\end{array}\right],\label{eq:YLY1a}
\end{gather}
where
\begin{gather}
\mathscr{L}=\left[\begin{array}{rr}
0 & \mathbb{I}\\
-\alpha^{-1}\eta & -\alpha^{-1}\left(\theta-\theta^{\mathrm{T}}\right)
\end{array}\right]=\left[\begin{array}{rr}
\mathbb{I} & 0\\
0 & -\alpha^{-1}
\end{array}\right]\left[\begin{array}{rr}
0 & \mathbb{I}\\
\eta & \left(\theta-\theta^{\mathrm{T}}\right)
\end{array}\right].\label{eq:YLY1b}
\end{gather}

With the spectral analysis of equation (\ref{eq:dlag5}) in mind we
can recast it as
\begin{equation}
A\left(\partial_{t}\right)Q=0,\quad A\left(s\right)=\alpha s^{2}+2\theta s+\eta.\label{eq:dlag5a}
\end{equation}
where evidently $A\left(s\right)$ is the $n\times n$-matrix polynomial.

\subsection{Hamiltonian\label{subsec:Ham}}

An alternative to equations (\ref{eq:YLY1a}) and (\ref{eq:YLY1b})
way to replace the second-order vector ODE (\ref{eq:dlag5}) with
the first-order one with the Hamilton equations associated with the
Hamiltonian $\mathcal{H}$ defined by
\begin{gather}
\mathcal{H}=\mathcal{H}\left(P,Q\right)=P^{\mathrm{T}}\partial_{t}Q-\mathcal{L}\left(Q,\partial_{t}Q\right),\quad P=\frac{\partial\mathcal{L}}{\partial\partial_{t}Q}=\alpha\partial_{t}Q+\theta Q.\label{eq:dlag6}
\end{gather}
Notice that the second equation in (\ref{eq:dlag6}) implies the following
relations between the velocity and momentum vectors:
\begin{equation}
\partial_{t}Q=\alpha^{-1}\left(P-\theta Q\right),\quad P=\alpha\partial_{t}Q+\theta Q.\label{eq:dlag6a}
\end{equation}
Consequently
\begin{gather}
\mathcal{H}\left(P,Q\right)=\frac{1}{2}\left[\left(P-\theta Q\right)^{T}\alpha^{-1}\left(P-\theta Q\right)+Q^{T}\eta Q\right]=\frac{1}{2}\partial_{t}Q^{T}\alpha\partial_{t}Q+\frac{1}{2}Q^{T}\eta Q.\label{eq:dlag7}
\end{gather}
Notice also that equations (\ref{eq:dlag6a}) imply
\begin{gather}
\left[\begin{array}{l}
Q\\
P
\end{array}\right]=\left[\begin{array}{lr}
\mathbb{I} & 0\\
\theta & \alpha
\end{array}\right]\left[\begin{array}{r}
Q\\
\partial_{t}Q
\end{array}\right],\quad\left[\begin{array}{r}
Q\\
\partial_{t}Q
\end{array}\right]=\left[\begin{array}{rr}
\mathbb{I} & 0\\
-\alpha^{-1} & \theta\alpha^{-1}
\end{array}\right]\left[\begin{array}{l}
Q\\
P
\end{array}\right].\label{eq:dlag7p}
\end{gather}
$\mathcal{H}$ can be interpreted as the system energy which is a
conserved quantity, that is
\begin{equation}
\partial_{t}\mathcal{H}\left(P,Q\right)=0.\label{eq:dlag7b}
\end{equation}
The function $\mathcal{H}\left(P,Q\right)$ defined by (\ref{eq:dlag7})
can be recast into the following form
\begin{equation}
\mathcal{H}\left(P,Q\right)=\frac{1}{2}\left[\begin{array}{r}
Q\\
P
\end{array}\right]^{\mathrm{T}}M_{\mathrm{H}}\left[\begin{array}{r}
Q\\
P
\end{array}\right],\label{eq:dlag8}
\end{equation}
where $M_{\mathrm{H}}$ is the $2n\times2n$ matrix having the block
form
\begin{gather}
M_{\mathrm{H}}=\left[\begin{array}{rr}
\theta^{\mathrm{T}}\alpha^{-1}\theta+\eta & -\theta^{\mathrm{T}}\alpha^{-1}\\
-\alpha^{-1}\theta & \alpha^{-1}
\end{array}\right]=\left[\begin{array}{rr}
\mathbb{I} & -\theta^{\mathrm{T}}\\
0 & \mathbb{I}
\end{array}\right]\left[\begin{array}{ll}
\eta & 0\\
0 & \alpha^{-1}
\end{array}\right]\left[\begin{array}{ll}
\mathbf{\mathbb{I}} & 0\\
-\theta & \mathbf{\mathbb{I}}
\end{array}\right],\label{eq:dlag7a}
\end{gather}
where $\mathbb{I}$ is the identity $n\times n$-matrix. The Hamiltonian
form of the Euler-Lagrange equation (\ref{eq:dlag4}) reads
\begin{gather}
\partial_{t}u=\mathbb{J}M_{\mathrm{H}}u,\quad u=\left[\begin{array}{r}
Q\\
P
\end{array}\right],\quad\mathbb{J}=\left[\begin{array}{rr}
0 & \mathbf{\mathbb{I}}\\
-\mathbf{\mathbb{I}} & 0
\end{array}\right].\label{eq:dlag9}
\end{gather}
Matrix $\mathbb{J}$ defined in equations (\ref{eq:dlag9}) is called
\emph{unit imaginary matrix} and it satisfies \cite[3.1]{BernM}
\begin{equation}
\mathbb{J}=-\mathbb{J}^{\mathrm{T}}=-\mathbb{J}^{-1}.\label{eq:dlag9J}
\end{equation}
Notice that in view of equations (\ref{eq:dlag7a}), (\ref{eq:dlag9})
we have
\begin{gather}
\mathbb{J}M_{\mathrm{H}}==\left[\begin{array}{rr}
-\alpha^{-1}\theta & \alpha^{-1}\\
-\theta^{\mathrm{T}}\alpha^{-1}\theta-\eta & \theta^{\mathrm{T}}\alpha^{-1}
\end{array}\right].\label{eq:dlag9M}
\end{gather}
Then the corresponding to Hamilton vector equation (\ref{eq:dlag9})
matrix similar to the companion polynomial matrix $\mathsf{C}_{A}\left(s\right)=s\mathsf{B}-\mathsf{A}$
in (\ref{eq:CBA1a}) is
\begin{gather}
\mathsf{C}\left(\mathbb{J}M_{\mathrm{H}};s\right)=s\left[\begin{array}{lr}
\mathbf{\mathbb{I}} & 0\\
0 & \mathbf{\mathbb{I}}
\end{array}\right]-\mathbb{J}M_{\mathrm{H}}==\left[\begin{array}{rr}
s+\alpha^{-1}\theta & -\alpha^{-1}\\
\theta^{\mathrm{T}}\alpha^{-1}\theta+\eta & s-\theta^{\mathrm{T}}\alpha^{-1}
\end{array}\right].\label{eq:dlag9C}
\end{gather}
Let us introduce matrix
\begin{equation}
\mathscr{H}=\mathbb{J}M_{\mathrm{H}}.\label{eq:MJHam1a}
\end{equation}
Notice that in view of equations (\ref{eq:dlag7a}), (\ref{eq:dlag9J})
we have $M_{\mathrm{H}}^{\mathrm{T}}=M_{\mathrm{H}}$ and
\begin{gather}
\left[\mathscr{H}\right]^{\mathrm{T}}=-M_{\mathrm{H}}\mathbb{J}=-\mathbb{J}\left[-\mathbb{J}M_{\mathrm{H}}\right]\mathbb{J}=-\mathbb{J}^{-1}\left[\mathscr{H}\right]\mathbb{J},\label{eq:MJHam1b}
\end{gather}
implying that the transposed to $\mathscr{H}$ matrix $\left[\mathscr{H}\right]^{\mathrm{T}}$
is similar to $-\mathscr{H}$.

\subsection{Relationship between the Lagrangian and Hamiltonian\label{subsec:Lag-to-Ham}}

Notice that under assumption that $\alpha^{-1}$ exists according
to equations (\ref{eq:dlag1}) and (\ref{eq:dlag7a}) we have
\begin{gather}
M_{\mathrm{L}}=\left[\begin{array}{rr}
-\eta & \theta^{\mathrm{T}}\\
\theta & \alpha
\end{array}\right],\quad M_{\mathrm{H}}=\left[\begin{array}{rr}
\eta_{\mathrm{H}} & \theta_{\mathrm{H}}^{\mathrm{T}}\\
\theta_{\mathrm{H}} & \alpha_{\mathrm{H}}
\end{array}\right]=\left[\begin{array}{rr}
\theta^{\mathrm{T}}\alpha^{-1}\theta+\eta & -\theta^{\mathrm{T}}\alpha^{-1}\\
-\alpha^{-1}\theta & \alpha^{-1}
\end{array}\right],\label{eq:LaH1a}
\end{gather}
implying
\begin{gather}
\alpha_{\mathrm{H}}=\alpha^{-1},\quad\theta_{\mathrm{H}}=-\alpha^{-1}\theta,\quad\eta_{\mathrm{H}}=\theta^{\mathrm{T}}\alpha^{-1}\theta+\eta,\label{eq:LaH1b}
\end{gather}
 and
\begin{gather}
\alpha=\alpha_{\mathrm{H}}^{-1},\quad\theta=-\alpha_{\mathrm{H}}^{-1}\theta_{\mathrm{H}},\quad\eta=\eta_{\mathrm{H}}-\theta_{\mathrm{H}}^{\mathrm{T}}\alpha_{\mathrm{H}}^{-1}\theta_{\mathrm{H}}.\label{eq:LaH1c}
\end{gather}

\subsection{Lagrangian and Hamiltonian for higher order ODEs\label{subsec:Lag-Ham}}

If the Lagrangian $\mathcal{L}$ depends on higher order derivatives
as in
\[
\mathcal{L}=\frac{1}{2}\left[x_{n}^{2}+\sum_{m=0}^{n-1}a_{m}x_{m}^{2}\right],\quad x_{m}=\partial_{t}^{m}x,
\]
then the corresponding equations for its extremals are \cite[1.2.3, 3.1.4]{ArnGiv}
\[
\partial_{t}^{2n}x+\sum_{m=0}^{n-1}\left(-1\right)^{n-m}a_{m}\partial_{t}^{2m}x=0.
\]

\subsection{Positive energy case\label{subsec:pos-Ham}}

The main point of this section is that in the case when the energy
is non-negative, that is $\mathcal{H}\left(P,Q\right)\geq0$, then
the system spectral properties are ultimately determined by a self-adjoint,
and hence diagonalizable, operator $\Omega$ defined by equations
(\ref{eq:radis8}). The argument is as follows, \cite{FigWel14}.
Suppose that
\begin{equation}
\alpha=\alpha^{\mathrm{T}}\geq0,\quad\eta=\eta^{\mathrm{T}}\geq0.\label{eq:dlag7ab}
\end{equation}
Then representations (\ref{eq:dlag8}), (\ref{eq:dlag7a}) combined
with the inequalities (\ref{eq:dlag2}) and (\ref{eq:dlag7ab}) imply
\begin{equation}
\mathcal{H}\left(P,Q\right)\geq0\text{ and }M_{\mathrm{H}}=M_{\mathrm{H}}^{\mathrm{T}}\geq0.\label{eq:dlag8a}
\end{equation}
Notice that matrix $M_{\mathrm{H}}$ can be recast as 
\begin{equation}
M_{\mathrm{H}}=K^{\mathrm{T}}K,\label{eq:radis5}
\end{equation}
where the matrix $K$ is the block matrix
\begin{gather}
K=\left[\begin{array}{lr}
K_{\mathrm{q}} & 0\\
0 & K_{\mathrm{p}}
\end{array}\right]\left[\begin{array}{rr}
\mathbf{\mathbf{\mathbb{I}}} & 0\\
-\theta & \mathbf{\mathbf{\mathbb{I}}}
\end{array}\right]=\left[\begin{array}{lr}
K_{\mathrm{q}} & 0\\
-K_{\mathrm{p}}\theta & K_{\mathrm{p}}
\end{array}\right],\quad K_{\mathrm{q}}=\sqrt{\eta},\quad K_{\mathrm{p}}=\sqrt{\alpha}^{-1},\label{eq:radis6}
\end{gather}
which manifestly takes into account the gyroscopic term $\theta$.
Here $\sqrt{\alpha}$ and $\sqrt{\eta}$ denote the unique positive
semidefinite square roots of the matrices $\alpha$ and $\eta$, respectively.
\ In particular, it follows from the properties (\ref{eq:dlag2})
and the proof of \cite[S VI.4, Theorem VI.9]{ReSi1} that $K_{\mathrm{p}}$,
$K_{\mathrm{q}}$ are $n\times n$ matrices with real-valued entries
with the properties 
\begin{equation}
K_{\mathrm{p}}=K_{\mathrm{p}}^{\mathrm{T}}>0,\text{ \ \ }K_{\mathrm{q}}=K_{\mathrm{q}}^{\mathrm{T}}\geq0.\label{eq:radis6b}
\end{equation}
If we introduce now the force variable
\begin{equation}
v=Ku.\label{eq:radis7}
\end{equation}
then the evolution equation (\ref{eq:dlag9}) can be recast into the
following form
\begin{gather}
\partial_{t}v=-\mathrm{i}\Omega v,\quad\Omega=\Omega^{\ast}=\mathrm{i}K\mathbb{J}K^{\mathrm{T}}=\left[\begin{array}{cr}
0 & \mathrm{i}\Phi\\
-\mathrm{i}\Phi^{\mathrm{T}} & \Omega_{\mathrm{p}}
\end{array}\right],\quad\Omega_{\mathrm{p}}=-\mathrm{i}2K_{\mathrm{p}}\theta K_{\mathrm{p}}^{\mathrm{T}},\quad\Phi=K_{\mathrm{q}}K_{\mathrm{p}}^{\mathrm{T}}.\label{eq:radis8}
\end{gather}
where $\Omega$ is evidently a self-adjoint operator.

\subsection{Symplectic and Hamiltonian matrices basics\label{subsec:ham-mat}}

Hamiltonian matrices arise naturally as the matrices governing the
evolution of Hamiltonian systems, see Section \ref{subsec:Ham}.

Let $\mathbb{J}\in\mathbb{R}^{2n\times2n}$ be unit imaginary matrix
defined by equations (\ref{eq:dlag9}). It satisfies the identities
(\ref{eq:dlag9J}).
\begin{defn}[Symplectic matrix]
 A matrix $T\in\mathbb{R}^{2n\times2n}$ is called \emph{symplectic}
if it satisfies the following identity \cite[3.1]{Mey}:
\begin{equation}
T^{\mathrm{T}}\mathbb{J}T=\mathbb{J},\quad\mathbb{J}=\left[\begin{array}{rr}
0 & \mathbf{\mathbb{I}}_{n}\\
-\mathbf{\mathbb{I}}_{n} & 0
\end{array}\right].\label{eq:sympmat1a}
\end{equation}
It readily follows from equations (\ref{eq:sympmat1a}) and (\ref{eq:dlag9J})
that symplectic matrix $T$ is nonsingular and
\[
T^{-1}=-\mathbb{J}T^{\mathrm{T}}\mathbb{J}.
\]
It is also evident that $T$ is symplectic if and only if matrices
$T^{-1}$ and $T^{\mathrm{T}}$ are symplectic.
\end{defn}

Evidently symplectic matrices in $\mathbb{R}^{2n\times2n}$ form a
group.
\begin{defn}[Hamiltonian matrix]
 A matrix $M\in\mathbb{R}^{2n\times2n}$ is called \emph{Hamiltonian}
(or infinitesimally symplectic) if it satisfies the following identity
\cite[3.1]{Mey}:
\begin{equation}
\mathbb{J}^{-1}M^{\mathrm{T}}\mathbb{J}=-M,\quad\mathbb{J}=\left[\begin{array}{rr}
0 & \mathbf{\mathbb{I}}_{n}\\
-\mathbf{\mathbb{I}}_{n} & 0
\end{array}\right],\label{eq:Hammat1a}
\end{equation}
The Hamiltonian matrix property (\ref{eq:Hammat1a}) is evidently
equivalent to the symmetry of the matrix $\mathbb{J}M$, that is,
\begin{equation}
\left(\mathbb{J}M\right)^{\mathrm{T}}=\mathbb{J}M.\label{eq:Hammat1as}
\end{equation}
In other words, a Hamiltonian matrix $\mathscr{A}$ is a matrix of
the form
\begin{equation}
\mathscr{A}=\mathbb{J}A,\quad A^{\mathrm{T}}=A.\label{eq:Hammat1aa}
\end{equation}
\end{defn}

Since the definition of Hamiltonian matrix involves a transposed matrix
the following general statement it is of importance to know that a
matrix over the field of complex numbers is always similar to its
transposed \cite[3.2.3]{HorJohn}.
\begin{prop}[Similarity of a matrix and its transposed]
\label{prop:simtrans-1} Let $A\in\mathbb{C}^{n\times n}$. There
is exists a nonsingular complex symmetric matrix $S$ such that $A^{\mathrm{T}}=SAS^{-1}.$
\end{prop}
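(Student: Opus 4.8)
The plan is to reduce the statement to the case of a single Jordan block by means of the Jordan canonical form, which is available since we work over the algebraically closed field $\mathbb{C}$. First I would write $A=PJP^{-1}$, where $J$ is the Jordan form of $A$ and $P$ is nonsingular. Taking the transpose gives $A^{\mathrm{T}}=\left(P^{-1}\right)^{\mathrm{T}}J^{\mathrm{T}}P^{\mathrm{T}}$, so the entire problem is transferred to producing a \emph{symmetric} conjugator relating $J$ to $J^{\mathrm{T}}$.

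The key device is the exchange (reversal) matrix. For a single Jordan block $J_{k}\left(\lambda\right)=\lambda\mathbb{I}_{k}+K_{k}$, with $K_{k}$ the nilpotent shift matrix analogous to the one in equation (\ref{eq:Hamapq1dn}), I would introduce the anti-diagonal matrix $E_{k}$ with entries $\left(E_{k}\right)_{ij}=\delta_{i,k+1-j}$. This $E_{k}$ is symmetric and involutive, $E_{k}=E_{k}^{\mathrm{T}}=E_{k}^{-1}$, and conjugation by it reverses the ordering of both rows and columns, so that $E_{k}K_{k}E_{k}=K_{k}^{\mathrm{T}}$. Hence $E_{k}J_{k}\left(\lambda\right)E_{k}=\lambda\mathbb{I}_{k}+K_{k}^{\mathrm{T}}=J_{k}^{\mathrm{T}}\left(\lambda\right)$. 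Assembling the $E_{k}$ over the several Jordan blocks into a block-diagonal matrix $E$, which is again symmetric and involutive, yields the global identity $EJE=J^{\mathrm{T}}$.

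It then remains to assemble the conjugator for $A$ itself. Setting $S=\left(P^{-1}\right)^{\mathrm{T}}EP^{-1}$, I would observe that $S$ is symmetric because it is the congruence $B^{\mathrm{T}}EB$ of the symmetric matrix $E$ with $B=P^{-1}$, and nonsingular as a product of nonsingular matrices. A direct substitution using $EJE=J^{\mathrm{T}}$ together with $P^{-1}AP=J$ then gives $SAS^{-1}=\left(P^{-1}\right)^{\mathrm{T}}EJEP^{\mathrm{T}}=\left(P^{-1}\right)^{\mathrm{T}}J^{\mathrm{T}}P^{\mathrm{T}}=A^{\mathrm{T}}$, which is exactly the asserted relation. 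I do not expect any genuine obstacle: the only point demanding care is the entrywise verification of $E_{k}K_{k}E_{k}=K_{k}^{\mathrm{T}}$ and the bookkeeping that the block structure of $E$ is matched to that of $J$; everything else is routine linear algebra.
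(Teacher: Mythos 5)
Your proof is correct and follows essentially the same route as the paper, which simply remarks that the result "can be obtained from the matrix similarity to its Jordan canonical form." You have filled in the standard details of that reduction: the exchange matrix $E_{k}$ satisfying $E_{k}J_{k}\left(\lambda\right)E_{k}=J_{k}^{\mathrm{T}}\left(\lambda\right)$ blockwise, and the symmetric conjugator $S=\left(P^{-1}\right)^{\mathrm{T}}EP^{-1}$, all of which check out.
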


The following statement provides different equivalent descriptions
of a Hamiltonian matrix \cite[3.1]{Mey}:
\begin{prop}[Hamiltonian matrix]
 The following are equivalent: (i) $M$ is Hamiltonian, (ii) $M=\mathbb{J}A$
where $A$ is symmetric, and (iii) $\mathbb{J}A$ is symmetric. Moreover,
if $M$ and $K$ are Hamiltonian, then so are $M^{\mathrm{T}}$, $\alpha M$,
$\alpha\in\mathbb{R}$, $M\pm K$, and $\left[M,K\right]\equiv MK-KM$.
\end{prop}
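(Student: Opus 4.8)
The plan is to derive every assertion from the single algebraic characterization already recorded in equation (\ref{eq:Hammat1as}) --- that $M$ is Hamiltonian if and only if $\mathbb{J}M$ is symmetric --- together with the identities $\mathbb{J}^{\mathrm{T}}=-\mathbb{J}$ and $\mathbb{J}^{-1}=-\mathbb{J}$ from equation (\ref{eq:dlag9J}) (which in particular give $\mathbb{J}^{2}=-\mathbb{I}$). It is convenient to first restate the defining property in the bilinear form $\mathbb{J}M+M^{\mathrm{T}}\mathbb{J}=0$: since $(\mathbb{J}M)^{\mathrm{T}}=M^{\mathrm{T}}\mathbb{J}^{\mathrm{T}}=-M^{\mathrm{T}}\mathbb{J}$, the symmetry $(\mathbb{J}M)^{\mathrm{T}}=\mathbb{J}M$ is precisely the identity $\mathbb{J}M=-M^{\mathrm{T}}\mathbb{J}$. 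This identity is the workhorse for all the closure statements.

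For the three equivalences I would argue (i)$\Leftrightarrow$(iii) directly: multiplying the defining relation (\ref{eq:Hammat1a}) on the left by $\mathbb{J}$ and using $\mathbb{J}^{-1}=-\mathbb{J}$ turns $\mathbb{J}^{-1}M^{\mathrm{T}}\mathbb{J}=-M$ into the symmetry of $\mathbb{J}M$, and each step reverses. For (ii)$\Leftrightarrow$(iii), if $M=\mathbb{J}A$ with $A=A^{\mathrm{T}}$ then $\mathbb{J}M=\mathbb{J}^{2}A=-A$ is symmetric; conversely, given $\mathbb{J}M$ symmetric I would set $A=-\mathbb{J}M$, which is symmetric and satisfies $\mathbb{J}A=-\mathbb{J}^{2}M=M$. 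These are one-line verifications.

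For the closure properties I would work throughout with the form $\mathbb{J}N=-N^{\mathrm{T}}\mathbb{J}$. The scalar multiple $\alpha M$ and the sum $M\pm K$ are immediate from the linearity of the map $N\mapsto\mathbb{J}N+N^{\mathrm{T}}\mathbb{J}$. For $M^{\mathrm{T}}$ I would multiply the identity $\mathbb{J}M+M^{\mathrm{T}}\mathbb{J}=0$ on both the left and the right by $\mathbb{J}$ and invoke $\mathbb{J}^{2}=-\mathbb{I}$, which yields exactly $\mathbb{J}M^{\mathrm{T}}+M\mathbb{J}=0$, the Hamiltonian condition for $M^{\mathrm{T}}$.

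The one computation requiring care --- and which I regard as the crux, since it is what makes the Hamiltonian matrices a Lie algebra rather than merely a linear subspace --- is the commutator $[M,K]=MK-KM$. Here I would push each factor of $\mathbb{J}$ through the product by applying $\mathbb{J}M=-M^{\mathrm{T}}\mathbb{J}$ and $\mathbb{J}K=-K^{\mathrm{T}}\mathbb{J}$ twice: $\mathbb{J}MK=-M^{\mathrm{T}}\mathbb{J}K=M^{\mathrm{T}}K^{\mathrm{T}}\mathbb{J}$ and likewise $\mathbb{J}KM=K^{\mathrm{T}}M^{\mathrm{T}}\mathbb{J}$, so that $\mathbb{J}[M,K]=(M^{\mathrm{T}}K^{\mathrm{T}}-K^{\mathrm{T}}M^{\mathrm{T}})\mathbb{J}=-[M,K]^{\mathrm{T}}\mathbb{J}$. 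This is exactly $\mathbb{J}[M,K]+[M,K]^{\mathrm{T}}\mathbb{J}=0$, so $[M,K]$ is Hamiltonian and the proof is complete. The only subtlety is bookkeeping the sign flips produced by the two transpositions, which conveniently cancel.
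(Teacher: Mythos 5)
Your proof is correct, and in fact the paper contains no proof of this proposition to compare against: it is quoted verbatim (with the matrix renamed $M$) from Meyer \cite[3.1]{Mey}, and the surrounding text only remarks that (\ref{eq:Hammat1a}) is ``evidently equivalent'' to the symmetry condition (\ref{eq:Hammat1as}). Your argument is the standard verification and it supplies exactly the missing details, using nothing beyond the paper's own identities $\mathbb{J}^{\mathrm{T}}=-\mathbb{J}$, $\mathbb{J}^{-1}=-\mathbb{J}$, $\mathbb{J}^{2}=-\mathbb{I}$ from (\ref{eq:dlag9J}): the reformulation of (\ref{eq:Hammat1as}) as $\mathbb{J}M+M^{\mathrm{T}}\mathbb{J}=0$, the one-line equivalences (i)$\Leftrightarrow$(iii) and (ii)$\Leftrightarrow$(iii) (note you correctly read item (iii) as ``$\mathbb{J}M$ is symmetric''; the ``$\mathbb{J}A$'' in the paper's statement is a carry-over of Meyer's notation, where the Hamiltonian matrix itself is called $A$), linearity for $\alpha M$ and $M\pm K$, conjugation by $\mathbb{J}$ for $M^{\mathrm{T}}$, and the double sign flip $\mathbb{J}MK=M^{\mathrm{T}}K^{\mathrm{T}}\mathbb{J}$, $\mathbb{J}KM=K^{\mathrm{T}}M^{\mathrm{T}}\mathbb{J}$ giving $\mathbb{J}\left[M,K\right]=-\left[M,K\right]^{\mathrm{T}}\mathbb{J}$ for the commutator. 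All of these computations check out, so your proof is complete and self-contained.
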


The following representation holds for a Hamiltonian matrix $\mathcal{A}$
\cite[3.1]{BernM}:
\begin{prop}[Hamiltonian matrix]
\label{prop:hamrep} A matrix $\mathscr{A}\in\mathbb{C}^{2n\times2n}$
is a Hamiltonian matrix if and only if there exist matrices $A,B,C\in\mathbb{F}^{n\times n}$
such $B$ and $C$ are symmetric and
\begin{equation}
\mathscr{A}=\left[\begin{array}{rr}
A & B\\
C & -A^{\mathrm{T}}
\end{array}\right],\quad B=B^{\mathrm{T}},\quad C=C^{\mathrm{T}}.\label{eq:Hammat1b}
\end{equation}
The set of all Hamiltonian matrices forms a Lie algebra.
\end{prop}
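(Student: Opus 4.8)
The plan is to prove both directions by a direct block computation, using the equivalent characterization recorded just above in equation (\ref{eq:Hammat1as}): namely that $\mathscr{A}$ is Hamiltonian if and only if $\mathbb{J}\mathscr{A}$ is symmetric, $\left(\mathbb{J}\mathscr{A}\right)^{\mathrm{T}}=\mathbb{J}\mathscr{A}$. First I would write a general matrix in block form $\mathscr{A}=\left[\begin{array}{rr}A & B\\ C & D\end{array}\right]$ with $A,B,C,D$ of size $n\times n$, and compute, using the explicit $\mathbb{J}$ from (\ref{eq:Hammat1a}),
\[
\mathbb{J}\mathscr{A}=\left[\begin{array}{rr}C & D\\ -A & -B\end{array}\right],\qquad\left(\mathbb{J}\mathscr{A}\right)^{\mathrm{T}}=\left[\begin{array}{rr}C^{\mathrm{T}} & -A^{\mathrm{T}}\\ D^{\mathrm{T}} & -B^{\mathrm{T}}\end{array}\right].
\]

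Imposing the symmetry condition $\left(\mathbb{J}\mathscr{A}\right)^{\mathrm{T}}=\mathbb{J}\mathscr{A}$ then splits into the four block equations $C^{\mathrm{T}}=C$, $B^{\mathrm{T}}=B$, $D=-A^{\mathrm{T}}$, and $D^{\mathrm{T}}=-A$. The last two are equivalent, being transposes of one another, so the system collapses to exactly $B=B^{\mathrm{T}}$, $C=C^{\mathrm{T}}$ and $D=-A^{\mathrm{T}}$ with $A$ unconstrained, which is precisely the asserted form (\ref{eq:Hammat1b}). This proves the forward implication; running the same computation in reverse gives the converse, since any matrix of the stated block form yields a symmetric $\mathbb{J}\mathscr{A}$ and is therefore Hamiltonian by (\ref{eq:Hammat1as}).

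For the Lie-algebra claim I would simply invoke the closure properties already established in the preceding proposition. Closure under $M\pm K$ together with closure under scalar multiplication $\alpha M$ shows that the set of Hamiltonian matrices is a linear subspace; closure under the commutator $\left[M,K\right]=MK-KM$ shows that this subspace is closed under the Lie bracket. The bracket is automatically bilinear and antisymmetric, and the Jacobi identity holds for the commutator in any associative matrix algebra, so no additional verification is required and the set forms a Lie algebra.

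Since each step is a short block-matrix manipulation, there is no genuine obstacle here; the only point deserving care is confirming that the two off-diagonal block equations $D=-A^{\mathrm{T}}$ and $D^{\mathrm{T}}=-A$ are consistent rather than over-determining the system, so that $A$ indeed survives as a free $n\times n$ parameter and the dimension count $\tfrac{1}{2}n(n+1)+\tfrac{1}{2}n(n+1)+n^{2}=n(2n+1)$ of the Lie algebra comes out as expected.
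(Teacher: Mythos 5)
Your proposal is correct, and there is nothing in the paper to compare it against: the paper states this proposition without proof, quoting it from \cite[3.1]{BernM}. Your argument is the standard verification that the reference relies on — the block computation of $\mathbb{J}\mathscr{A}=\left[\begin{array}{rr}C & D\\ -A & -B\end{array}\right]$ against its transpose, via the equivalent characterization (\ref{eq:Hammat1as}), does collapse to exactly $B=B^{\mathrm{T}}$, $C=C^{\mathrm{T}}$, $D=-A^{\mathrm{T}}$ with $A$ free, and the Lie-algebra claim follows, as you say, from the closure properties ($\alpha M$, $M\pm K$, $\left[M,K\right]$) in the preceding proposition together with the fact that the commutator in an associative matrix algebra is automatically bilinear, antisymmetric, and satisfies the Jacobi identity.
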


In fact, a matrix over the field of complex numbers is always similar
to its transposed \cite[3.2.3]{HorJohn}.
\begin{prop}[Similarity of a matrix and its transposed]
\label{prop:simtrans} Let $A\in\mathbb{C}^{n\times n}$. There is
exists a nonsingular complex symmetric matrix $S$ such that $A^{\mathrm{T}}=SAS^{-1}.$
\end{prop}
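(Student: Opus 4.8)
The plan is to reduce $A$ to its Jordan canonical form and exploit the fact that each individual Jordan block is conjugate to its own transpose by a fixed \emph{symmetric} matrix, namely the reversal (anti-diagonal) matrix. First I would write $A=PJP^{-1}$, where $J=\bigoplus_{i}J_{m_{i}}\left(\lambda_{i}\right)$ is the Jordan form of $A$ and $P$ is invertible; such a decomposition is available over $\mathbb{C}$. Passing to transposes gives $A^{\mathrm{T}}=\left(P^{\mathrm{T}}\right)^{-1}J^{\mathrm{T}}P^{\mathrm{T}}$, so it suffices to produce a symmetric nonsingular matrix conjugating $J$ to $J^{\mathrm{T}}$ and then transport it back through $P$.

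For a single Jordan block $J_{m}\left(\lambda\right)=\lambda\mathbb{I}_{m}+K_{m}$ in the notation of equations (\ref{eq:Hamapq1dn}) and (\ref{eq:Hamapq1e}), I would introduce the $m\times m$ reversal matrix $E_{m}$ with $\left(E_{m}\right)_{ij}=1$ when $i+j=m+1$ and $0$ otherwise. This $E_{m}$ is symmetric and involutory, $E_{m}^{\mathrm{T}}=E_{m}$ and $E_{m}^{2}=\mathbb{I}_{m}$, and a direct index check shows $E_{m}K_{m}E_{m}=K_{m}^{\mathrm{T}}$, whence $E_{m}J_{m}\left(\lambda\right)E_{m}=\lambda\mathbb{I}_{m}+K_{m}^{\mathrm{T}}=J_{m}\left(\lambda\right)^{\mathrm{T}}$. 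Assembling these into the block-diagonal matrix $\hat{E}=\bigoplus_{i}E_{m_{i}}$, which is again symmetric and involutory, I obtain $\hat{E}J\hat{E}=J^{\mathrm{T}}$.

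I would then define $S=\left(P^{\mathrm{T}}\right)^{-1}\hat{E}P^{-1}$. Substituting $\hat{E}J\hat{E}^{-1}=J^{\mathrm{T}}$ (recall $\hat{E}^{-1}=\hat{E}$) and $J=P^{-1}AP$ into $SAS^{-1}=\left(P^{\mathrm{T}}\right)^{-1}\hat{E}\left(P^{-1}AP\right)\hat{E}^{-1}P^{\mathrm{T}}$ collapses the right-hand side to $\left(P^{\mathrm{T}}\right)^{-1}J^{\mathrm{T}}P^{\mathrm{T}}=A^{\mathrm{T}}$, giving the desired similarity. Finally I would verify symmetry directly: since $\hat{E}^{\mathrm{T}}=\hat{E}$ and $\left(P^{\mathrm{T}}\right)^{-1}=\left(P^{-1}\right)^{\mathrm{T}}$, one computes $S^{\mathrm{T}}=\left(P^{-1}\right)^{\mathrm{T}}\hat{E}\left(\left(P^{\mathrm{T}}\right)^{-1}\right)^{\mathrm{T}}=\left(P^{\mathrm{T}}\right)^{-1}\hat{E}P^{-1}=S$, and $S$ is nonsingular as a product of invertible factors.

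The only genuinely delicate point is the choice of $S$: an arbitrary similarity $A^{\mathrm{T}}=RAR^{-1}$ need not have $R$ symmetric, and the entire content of the statement is that symmetry can be arranged. The mechanism that makes it work is the sandwich $S=\left(P^{\mathrm{T}}\right)^{-1}\hat{E}P^{-1}$, in which the \emph{same} factor $P$ appears transposed on the left and untransposed on the right around a symmetric core; this is precisely what forces $S^{\mathrm{T}}=S$. The block-level identity $E_{m}K_{m}E_{m}=K_{m}^{\mathrm{T}}$ is the computational heart but is entirely mechanical.
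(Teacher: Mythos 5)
Your proof is correct and takes exactly the route the paper indicates: the paper's entire proof is the remark that the result ``can be obtained from the matrix similarity to its Jordan canonical form'' (deferring details to the cited reference), and your argument is the complete fleshing-out of that sketch. The per-block identity $E_{m}J_{m}\left(\lambda\right)E_{m}=J_{m}\left(\lambda\right)^{\mathrm{T}}$ via the reversal matrix and the symmetric sandwich $S=\left(P^{\mathrm{T}}\right)^{-1}\hat{E}P^{-1}$ are precisely the details the paper omits, and you verify both correctly.
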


The proof of Proposition \ref{prop:simtrans} can be obtained from
the matrix similarity to its Jordan canonical form.

Important spectral properties of Hamiltonian matrices and their canonical
forms are studied in \cite[2.2]{ArnGiv}, \cite{LauMey}, \cite[3.3, 4.6, 4.7]{Mey}.
As to the more detailed spectral properties of Hamiltonian matrices
the following statements holds.
\begin{prop}[Jordan structure of a real Hamiltonian matrix]
\label{prop:char-ham} The characteristic polynomial of a real Hamiltonian
matrix is an even polynomial. Thus if $\zeta$ is an eigenvalue of
a Hamiltonian matrix, then $-\zeta$, $\bar{\zeta}$ and $-\bar{\zeta}$
are also its eigenvalues with the same multiplicity. The entire Jordan
block structure is same for $\zeta$,$-\zeta$, $\bar{\zeta}$ and
$-\bar{\zeta}$.
\end{prop}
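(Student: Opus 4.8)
The plan is to reduce the whole statement to a single similarity relation. The defining identity of a Hamiltonian matrix, equation (\ref{eq:Hammat1a}), reads $\mathbb{J}^{-1}M^{\mathrm{T}}\mathbb{J}=-M$, so $M^{\mathrm{T}}$ is similar to $-M$. On the other hand, Proposition \ref{prop:simtrans} asserts that every square matrix is similar to its own transpose, $M^{\mathrm{T}}\sim M$. Chaining these gives $M\sim M^{\mathrm{T}}\sim-M$, that is, \emph{$M$ is similar to $-M$}. This one relation, together with the reality of $M$, will carry the entire proposition.

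First I would record the evenness of the characteristic polynomial. From $M\sim-M$ we have $\chi_{M}\left(s\right)=\det\left\{ s\mathbb{I}_{2n}+M\right\} =\left(-1\right)^{2n}\det\left\{ -s\mathbb{I}_{2n}-M\right\} =\chi_{M}\left(-s\right)$, so $\chi_{M}$ involves only even powers of $s$. Because $M$ has real entries, $\chi_{M}$ has real coefficients, hence its roots occur in complex-conjugate pairs of equal multiplicity. Combining the two symmetries $s\mapsto-s$ and $s\mapsto\bar{s}$ shows that whenever $\zeta$ is an eigenvalue, so are $-\zeta$, $\bar{\zeta}$ and $-\bar{\zeta}$, all with the same algebraic multiplicity.

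The substantive part is the equality of the \emph{full Jordan block structure}, not merely of algebraic multiplicities. For the pair $\zeta,-\zeta$ I would use the similarity explicitly: fix an invertible $P$ with $PMP^{-1}=-M$. Then for every integer $k\geq1$,
\[
P\left(M-\zeta\mathbb{I}_{2n}\right)^{k}P^{-1}=\left(-M-\zeta\mathbb{I}_{2n}\right)^{k}=\left(-1\right)^{k}\left(M+\zeta\mathbb{I}_{2n}\right)^{k},
\]
so $P$ maps $\ker\left\{ \left(M-\zeta\mathbb{I}_{2n}\right)^{k}\right\} $ isomorphically onto $\ker\left\{ \left(M+\zeta\mathbb{I}_{2n}\right)^{k}\right\} $ for each $k$. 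Since the number and sizes of the Jordan blocks attached to an eigenvalue are determined by the dimensions of these iterated kernels, the block structure at $\zeta$ coincides with that at $-\zeta$. For the pair $\zeta,\bar{\zeta}$ I would invoke that $M$ is real: entrywise complex conjugation commutes with multiplication by $M$, so it carries $\ker\left\{ \left(M-\zeta\mathbb{I}_{2n}\right)^{k}\right\} $ bijectively onto $\ker\left\{ \left(M-\bar{\zeta}\mathbb{I}_{2n}\right)^{k}\right\} $, giving equality of block structure at $\zeta$ and $\bar{\zeta}$. Composing the two correspondences handles all four eigenvalues simultaneously.

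The main obstacle is conceptual rather than computational: one must resist inferring the matching of Jordan blocks from the evenness of $\chi_{M}$ alone, which controls only algebraic multiplicities. The care required is to transport the entire flag of generalized eigenspaces, and the displayed identity for $P\left(M-\zeta\mathbb{I}_{2n}\right)^{k}P^{-1}$ is precisely what guarantees the transport respects the nilpotent chain structure; the conjugation argument for $\bar{\zeta}$ is wholly analogous and uses only $M=\bar{M}$.
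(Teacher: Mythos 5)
Your proof is correct and takes essentially the same route the paper intends: the paper derives $\mathscr{H}^{\mathrm{T}}\sim-\mathscr{H}$ from the Hamiltonian identity (see equation (\ref{eq:MJHam1b}) and the discussion in the introduction) and combines it with Proposition \ref{prop:simtrans} ($M\sim M^{\mathrm{T}}$) to conclude $M\sim-M$, which is exactly your chain of similarities, with reality of $M$ handling $\bar{\zeta}$. The paper itself only states the Jordan-structure consequence with citations to the literature, and your explicit transport of the iterated kernels $\ker\left\{ \left(M-\zeta\mathbb{I}_{2n}\right)^{k}\right\} $ under $P$ and under entrywise conjugation correctly supplies the details it delegates to those references.
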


\section{A Sketch of the Basics of Electric Networks\label{sec:e-net}}

For the sake of self-consistency, we provide in this section basic
information on the basics of the electric network theory and relevant
notations.

Electrical networks is a well established subject represented in many
monographs. We present here basic elements of the electrical network
theory following mostly to \cite[2]{BalBic}, \cite{Cau}, \cite{SesRee}.
The electrical network theory constructions are based on the graph
theory concepts of branches (edges), nodes (vertices) and their incidences.
This approach is efficient in loop (fundamental circuit) analysis
and the determination of independent variables for the Kirchhoff current
and voltage laws - the subjects relevant to our studies here.

We are particularly interested in conservative electrical network
which is a particular case of an electrical network composed of electric
elements of three types: capacitors, inductors and gyrators. We remind
that a capacitor or an inductor are the so-called two-terminal electric
elements whereas a gyrator is four-terminal electric element as discussed
below. We assume that capacitors and inductors can have positive or
negative respective capacitances and inductances.

\subsection{Circuit elements and their voltage-current relationships\label{subsec:cir-elem}}

The elementary electric network (circuit) elements of interest here
are a \emph{capacitor}, an \emph{inductor}, a \emph{resistor} and
a \emph{gyrator}, \cite[1.5, 2.6]{BalBic}, \cite[App.5.4]{Cau},
\cite[10]{Iza}. These elements are characterized by the relevant
\emph{voltage-current relationships}. These relationships for the
capacitor, inductor and resistor are respectively as follows \cite[1.5]{BalBic},
\cite[3-Circuit theory]{Rich}, \cite[1.3]{SesBab}:
\begin{equation}
I=C\partial_{t}V,\quad V=L\partial_{t}I,\quad V=RI,\label{eq:cirvc1a}
\end{equation}
where $I$ and $V$ are respectively the \emph{current} and the \emph{voltage},
and real $C$, $L$ and $R$ are called respectively the \emph{capacitance},
the \emph{inductance} and the \emph{resistance}. The voltage-current
relationship for the gyrator depicted in Fig. \ref{fig:cir-gyr} are
\begin{gather}
(a):\:\begin{bmatrix}V_{1}\\
V_{2}
\end{bmatrix}=\begin{bmatrix}-GI_{2}\\
GI_{1}
\end{bmatrix},\quad(b):\:\begin{bmatrix}V_{1}\\
V_{2}
\end{bmatrix}=\begin{bmatrix}GI_{2}\\
-GI_{1}
\end{bmatrix},\label{eq:crvc1b}
\end{gather}
where $I_{1},\:I_{2}$ and $V_{1},\:V_{2}$ are respectively the \emph{currents}
and the \emph{voltages}, and quantity $G$ is called the \emph{gyration
resistance.}

The common graphic representations of the network elements are depicted
in Figures \ref{fig:cir-CLR} and \ref{fig:cir-gyr}. The arrow next
to the symbol $G$ in Fig. \ref{fig:cir-gyr} shows the direction
of gyration.

The gyrator has the so-called inverting property as shown in Fig.
\ref{fig:cir-gyr-inv}, \cite[1.5]{BalBic}, \cite[10]{Iza}, \cite[29.1]{Dorf}.
Namely, when a capacitor or an inductor connected to the output port
of the gyrator it behaves as an inductor or capacitor respectively
with the following effective values
\begin{equation}
L_{\mathrm{ef}}=G^{2}C,\quad C_{\mathrm{ef}}=\frac{L}{G^{2}}.\label{eq:cirvc1d}
\end{equation}

Notice that the voltage-current relationships in the second equation
in (\ref{eq:crvc1b}) can be obtained from the first equation in (\ref{eq:crvc1b})
by substituting $-G$ for $G$. The gyrator is a device that accounts
for physical situations in which the reciprocity condition does not
hold. The voltage-current relationships in equations (\ref{eq:crvc1b})
show that the gyrator is a non-reciprocal circuit element. In fact,
it is antireciprocal. Notice, that the gyrator, like the ideal transformer,
is characterized by a single parameter $G$, which is the gyration
resistance. The arrows next to the symbol $G$ in Fig. \ref{fig:cir-gyr}(a)
and (b) show the direction of gyration.
\begin{figure}[h]
\centering{}\includegraphics[scale=0.25]{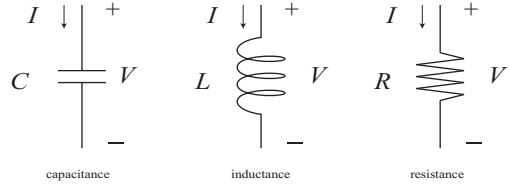}\caption{\label{fig:cir-CLR} Capacitance, inductance and resistance.}
\end{figure}
\begin{figure}[h]
\centering{}\includegraphics[scale=0.35]{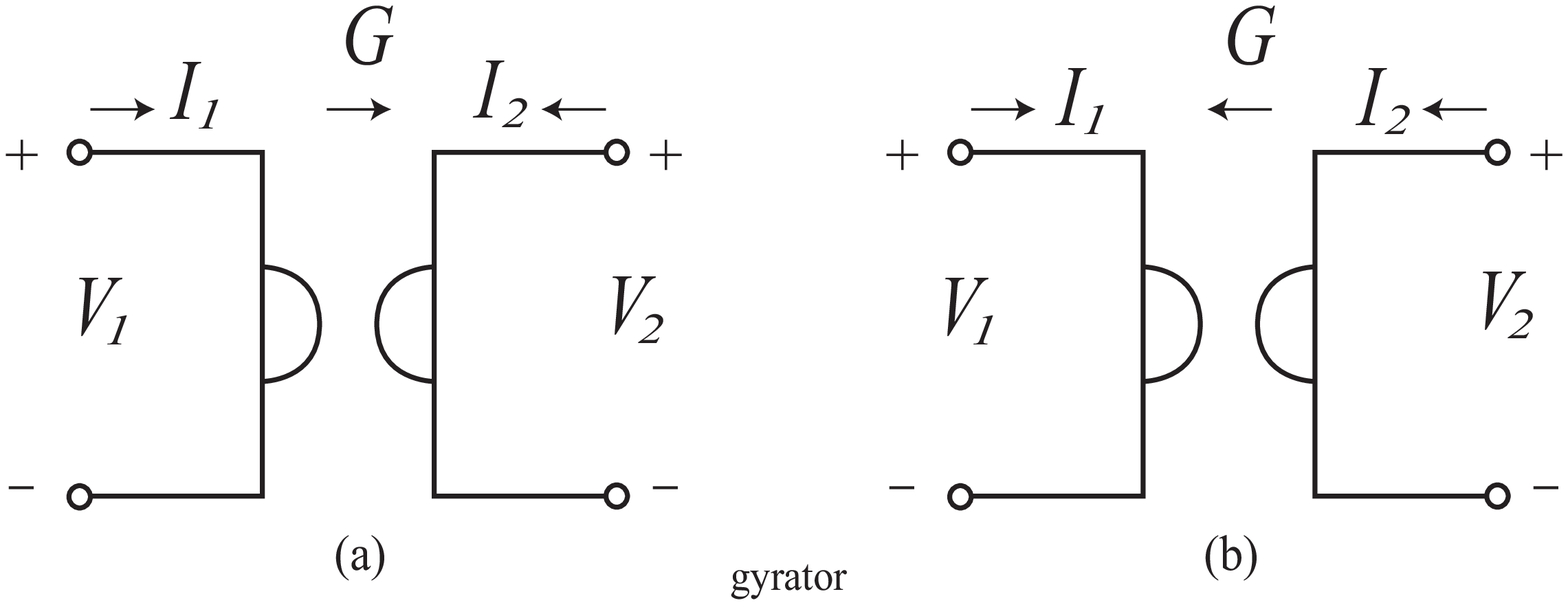}\caption{\label{fig:cir-gyr} Gyrator.}
\end{figure}
\begin{figure}[h]
\centering{}\includegraphics[scale=0.35]{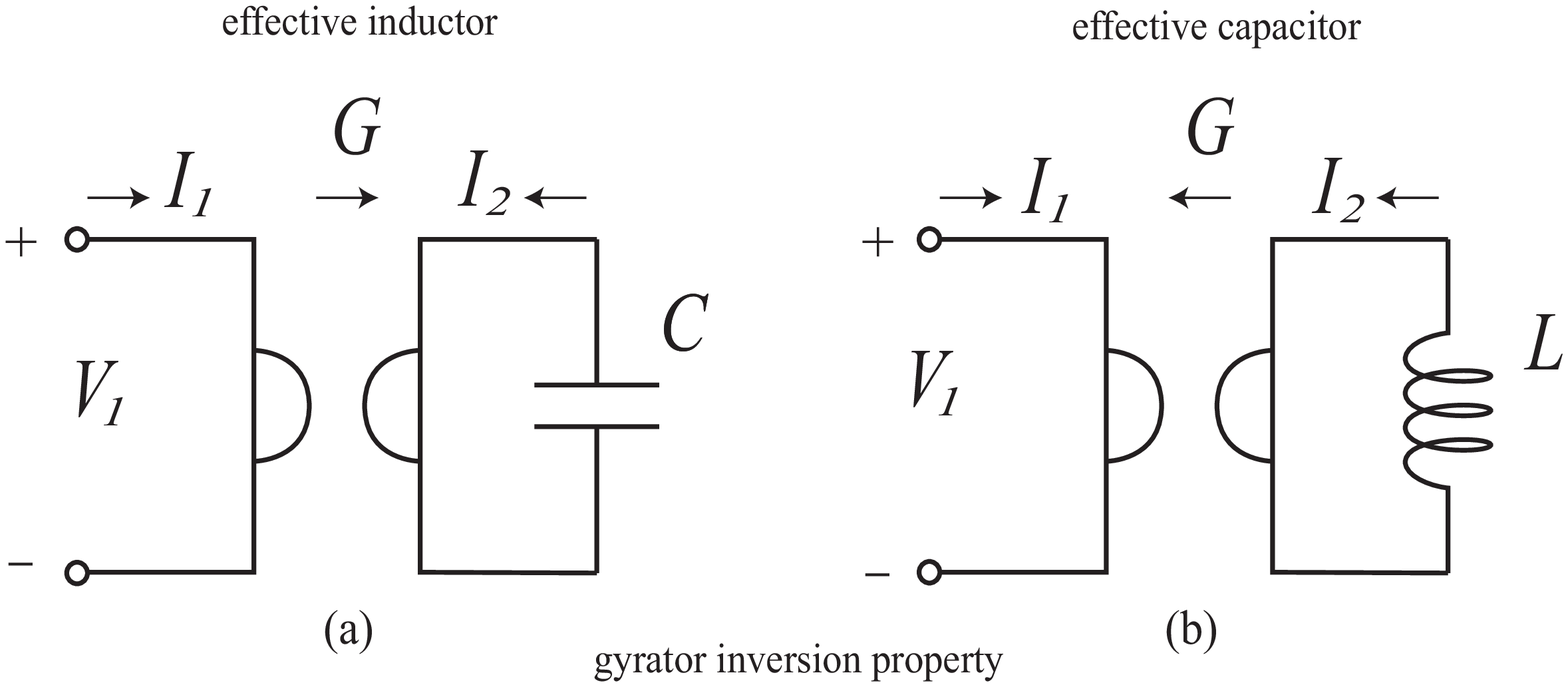}\caption{\label{fig:cir-gyr-inv} (a) Effective inductor; (b) effective capacitor.}
\end{figure}

Along with the voltage $V$ and the current $I$ variables we introduce
the \emph{charge }variable $Q$ and the \emph{momentum} (per unit
of charge) variable $P$ by the following formulas
\begin{gather}
Q\left(t\right)=\intop I\left(t\right)\,dt,\quad I\left(t\right)=\partial_{t}Q,\label{eq:cirvc2a}\\
P\left(t\right)=\intop V\left(t\right)\,dt,\quad V\left(t\right)=\partial_{t}P.\label{eq:cirvc2b}
\end{gather}
We introduce also the energy stored variable $W$, \cite[Circuit Theory]{Rich}.
Then the voltage-current relations (\ref{eq:cirvc1a}) and the stored
energy $W$ can be represented as follows:

\begin{gather}
\text{capacitor: }V=\frac{Q}{C},\quad I=\partial_{t}Q=C\partial_{t}V,\quad Q=CV=C\partial_{t}P;\label{eq:cirvc2c}\\
W=\frac{1}{2}VQ=\frac{Q^{2}}{2C}=\frac{CV^{2}}{2}=\frac{C\left(\partial_{t}P\right)^{2}}{2};\label{eq:cirvc2d}
\end{gather}

\begin{gather}
\text{inductor: }V=L\partial_{t}I,\quad P=LI=L\partial_{t}Q,\quad\partial_{t}Q=\frac{P}{L};\label{eq:cirvc2e}\\
W=\frac{PI}{2}=\frac{LI^{2}}{2}=\frac{L\left(\partial_{t}Q\right)^{2}}{2}=\frac{P^{2}}{2L};\label{eq:cirvc2f}
\end{gather}
\begin{equation}
\text{resistor: }V=RI,\quad P=RQ.\label{eq:cirvc2g}
\end{equation}

The Lagrangian associated with the network elements are as follows
\cite[9]{GantM}, \cite[3]{Rich}:
\begin{equation}
\text{capacitor: }\mathcal{L}=\frac{Q^{2}}{2C},\:\text{inductor: }\mathcal{L}=\frac{L\left(\partial_{t}Q\right)^{2}}{2},\label{eq:cirvc3a}
\end{equation}
\begin{gather}
\text{gyrator: }\mathcal{L}=GQ_{1}\partial_{t}Q_{2},\quad\mathcal{L}=\frac{G\left(Q_{1}\partial_{t}Q_{2}-Q{}_{2}\partial_{t}Q_{1}\right)}{2}.\label{eq:cirvc3b}
\end{gather}
Notice that the difference between two alternatives for the Lagrangian
in equations (\ref{eq:cirvc3b}) is $\frac{1}{2}G\partial_{t}\left(Q_{1}Q_{2}\right)$
which is evidently the complete time derivative. Consequently, the
EL equation are the same for both Lagrangians, see Section \ref{subsec:Lag}.

\subsection{Circuits of negative impedance, capacitance and inductance\label{subsec:neg-RCL}}

There are a number of physical devices that can provided for negative
capacitances and inductances needed for our circuits \cite[29]{Dorf}.
Following to \cite[10]{Iza} we show below circuits in Fig. \ref{fig:neg-imp}
that utilize operational amplifiers to achieve negative impedance,
capacitance and inductance respectively.
\begin{figure}[h]
\begin{centering}
\includegraphics[scale=0.45]{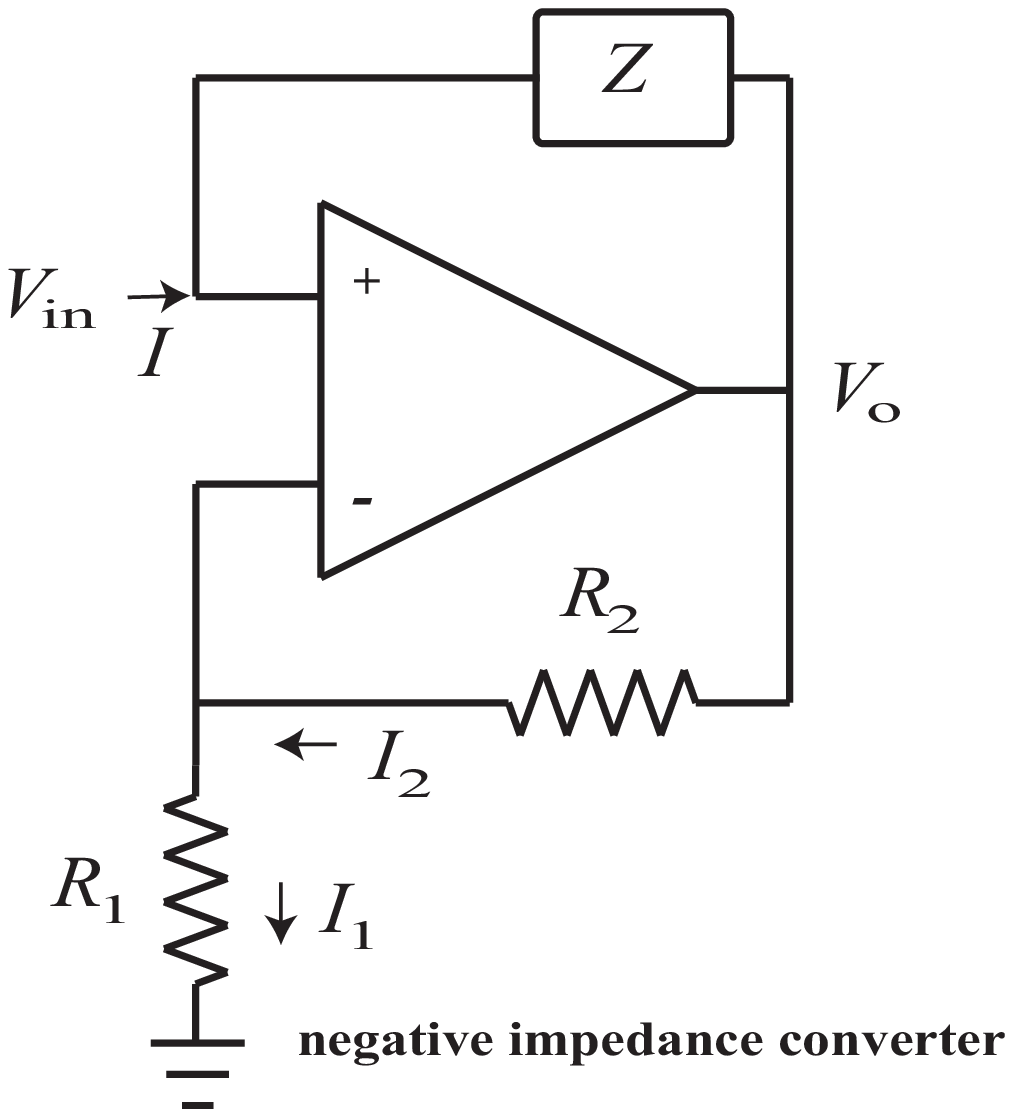}\hspace{1.5cm}\includegraphics[scale=0.45]{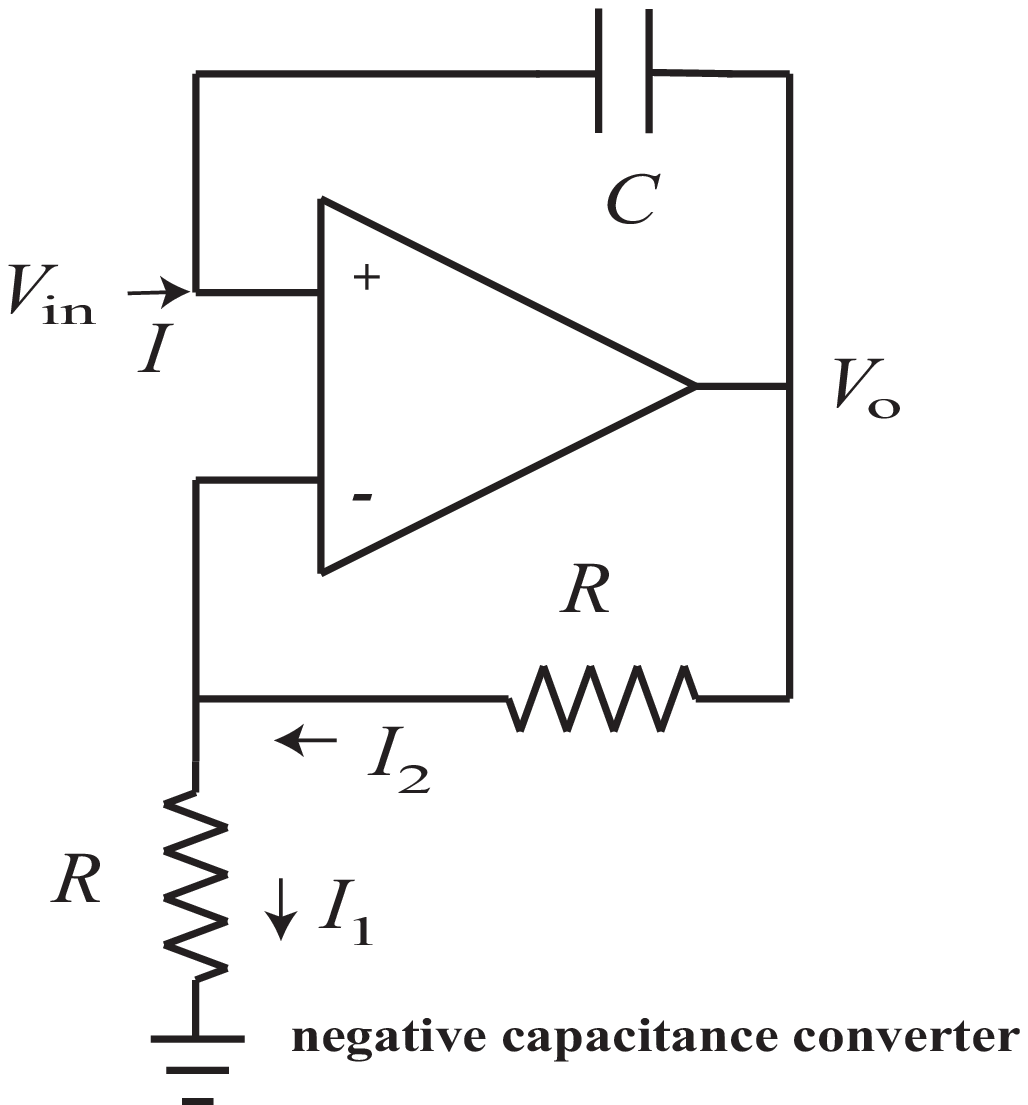}\hspace{1.5cm}\includegraphics[scale=0.45]{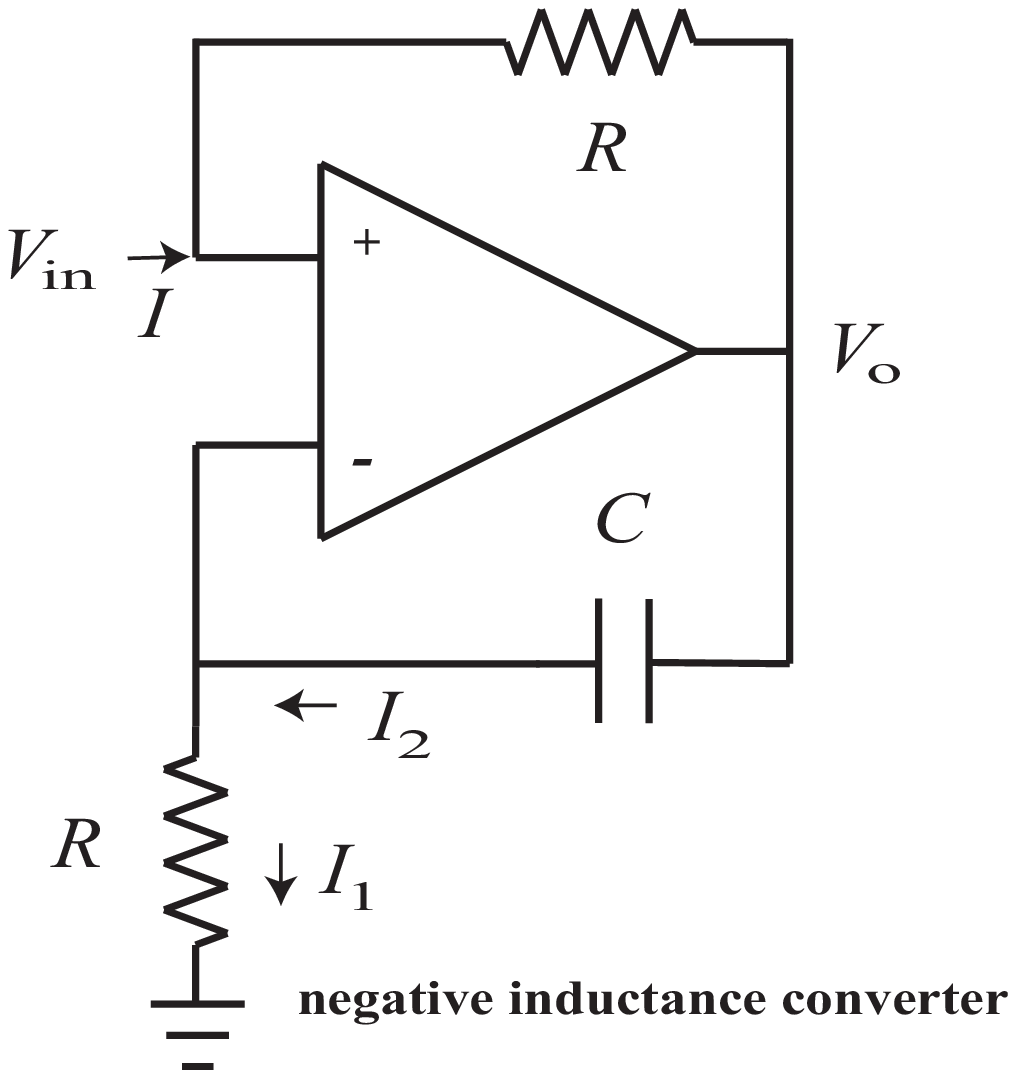}
\par\end{centering}
\begin{centering}
(a)\hspace*{5cm}(b)\hspace*{5cm}(c)
\par\end{centering}
\centering{}\caption{\label{fig:neg-imp} Operational-amplifier-based negative (a) impedance
converter; (b) capacitance converter; (c) inductance converter.}
\end{figure}
The currents and voltages for circuits depicted in Fig. \ref{fig:neg-imp}
are respectively as follows: (i) for negative impedance as in Fig.
\ref{fig:neg-imp}(a)

\begin{equation}
V_{\mathrm{in}}=-ZI,\quad V_{\mathrm{o}}=2V_{\mathrm{in}},\quad I_{1}=I_{2}=\frac{V_{\mathrm{in}}}{R};\label{eq:opamp1a}
\end{equation}
(ii) for negative capacitance as in Fig. \ref{fig:neg-imp}(b)

\begin{gather}
V_{\mathrm{in}}=Z_{\mathrm{in}}I,\quad Z_{\mathrm{in}}=-\frac{\mathrm{i}}{\omega C},\quad V_{\mathrm{o}}=2V_{\mathrm{in}},\quad I_{1}=I_{2}=\frac{V_{\mathrm{in}}}{R};\label{eq:opamp1b}
\end{gather}
(iii) for negative inductance as in Fig. \ref{fig:neg-imp}(c) 

\begin{gather}
V_{\mathrm{in}}=Z_{\mathrm{in}}I,\quad Z_{\mathrm{in}}=-\mathrm{i}\omega R^{2}C,\quad I_{1}=I_{2}=\frac{V_{\mathrm{in}}}{R},\quad V_{\mathrm{o}}=V_{\mathrm{in}}\left(1+\frac{1}{\mathrm{i}\omega RC}\right).\label{eq:opamp1c}
\end{gather}

\subsection{Topological aspects of the electric networks\label{subsec:net-top}}

We follow here mostly to \cite[2]{BalBic}. The purpose of this section
is to concisely describe and illustrate relevant concepts with understanding
that the precise description of all aspects of the concepts is available
in \cite[2]{BalBic}.

To describe topological (geometric) features of the electric network
we use the concept of \emph{linear graph} defined as a collection
of points, called \emph{nodes}, and line segments called branches,
the nodes being joined together by the branches as indicated in Fig.
\ref{fig:cir-gyr} (b). Branches whose ends fall on a node are said
to be incident at the node. For instance, Fig. \ref{fig:cir-gyr}
(b) branches \textbf{\textit{1}}, \textbf{\textit{2}}, \textbf{\textit{3}},
\textbf{\textit{4}} are incident at node 2. Each branch in Fig. \ref{fig:cir-gyr}
(b) carries an arrow indicating its orientation. A graph with oriented
branches is called an oriented graph. The elements of a network associated
with its graph have both a voltage and a current variable, each with
its own reference. In order to relate the orientation of the branches
of the graph to these references the convention is made that the voltage
and current of an element have the standard reference - voltage-reference
\textquotedbl plus\textquotedbl{} at the tail of the current-reference
arrow. The branch orientation of a graph is assumed to coincide with
the associated current reference as shown in Figures \ref{fig:cir-CLR}
and \ref{fig:cir-gyr}.

We denote the number of branches of the network by $N_{\mathrm{b}}\geq2$,
and the number of nodes by $N_{\mathrm{n}}\geq2$. 

A \emph{subgraph} is a subset of the branches and nodes of a graph.
The subgraph is said to be \emph{proper} if it consists of strictly
less than all the branches and nodes of the graph. A \emph{path} is
a particular subgraph consisting of an ordered sequence of branches
having the following properties:
\begin{enumerate}
\item At all but two of its nodes, called internal nodes, there are incident
exactly two branches of the subgraph. 
\item At each of the remaining two nodes, called the terminal nodes, there
is incident exactly one branch of the subgraph. 
\item No proper subgraph of this subgraph, having the same two terminal
nodes, has properties 1 and 2.
\end{enumerate}
A graph is called \emph{connected} if there exists at least one path
between any two nodes. We consider here only connected graphs such
as shown in Fig. \ref{fig:tree} (b).

A \emph{loop} (cycle) is a particular connected subgraph of a graph
such that at each of its nodes there are exactly two incident branches
of the subgraph. Consequently, if the two terminal nodes of a path
coincide we get a ``closed path'', that is a loop. In Fig. \ref{fig:tree}
(b) branches \textbf{\textit{7}}, \textbf{\textit{1}}, \textbf{\textit{3}},
\textbf{\textit{5}} together with nodes 1, 2, 3, and 4 form a loop.
We can specify a loop by an either the ordered list of the relevant
branched or the ordered list of the relevant nodes.

We remind that each branch of the network graph is associated with
two functions of time $t$: its current $I(t)$ and its voltage $V(t)$.
The set of these functions satisfy two Kirchhoff's laws, \cite[2.2]{BalBic},
\cite[2]{Cau}, \cite[Circuit Theory]{Rich}, \cite[1]{SesRee}. The
\emph{Kirchhoff current law} (KCL) states that in any electric network
the sum of all currents leaving any node equals zero at any instant
of time. The \emph{Kirchhoff voltage law} (KVL) states that in any
electric network, the sum of voltages of all branches forming any
loop equals zero at any instant of time. It turns out that the number
of independent KCL equations is $N_{\mathrm{n}}-1$ and the number
KVL equations is $N_{\mathrm{fl}}=N_{\mathrm{b}}-N_{\mathrm{n}}+1$
(the first Betti number \cite[2]{Cau}, \cite[2.3]{SesRee}).
\begin{figure}[th]
\begin{centering}
\includegraphics[clip,scale=0.5]{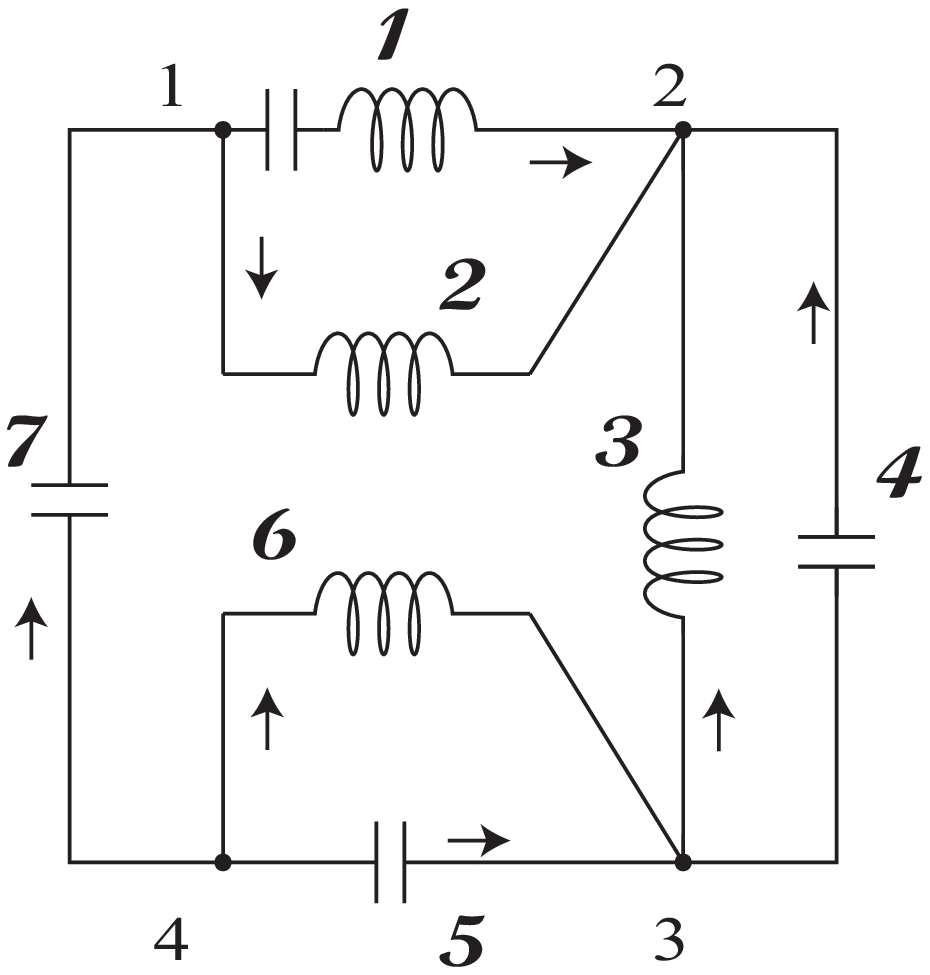}\hspace*{2cm}\includegraphics[clip,scale=0.5]{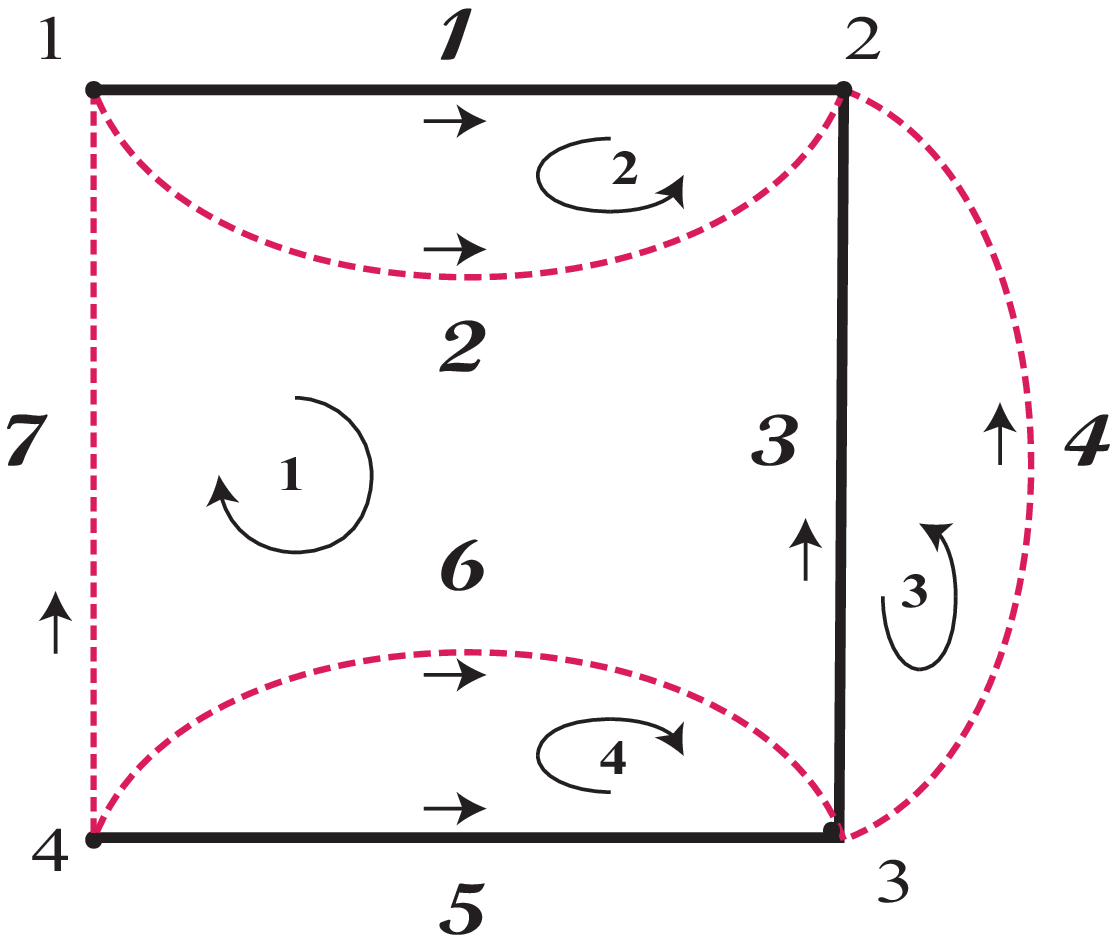}
\par\end{centering}
\centering{}\hspace*{-1cm}(a)\hspace*{6.5cm}(b)\caption{\label{fig:tree} The network (a) and its graph (b). There are $4$
nodes marked by small disks (black). In (b) there are $3$ twigs identified
by bolder (black) lines and labeled by numbers \textbf{\textit{1}},
\textbf{\textit{3}}, \textbf{\textit{5}}. There are $4$ links identified
by dashed (red) lines and labeled by numbers \textbf{\textit{2}},
\textbf{\textit{4}}, \textbf{\textit{6}}, \textbf{\textit{7}}. There
also $4$ oriented f-loops formed by the branches as shown.}
\end{figure}

There is an important concept of a \emph{tree} in the network graph
theory \cite[2.2]{BalBic}, \cite[2.1]{Cau} and \cite[2.3]{SesRee}.
A \emph{tree}, known also as \emph{complete tree}, is defined as a
connected subgraph of a connected graph containing all the nodes of
the graph but containing no loops as illustrated in Fig. \ref{fig:tree}
(b). The branches of the tree are called \emph{twigs} and those branches
that are not on a tree are called \emph{links} \cite[2.2]{BalBic}.
The links constitute the complement of the tree, or the \emph{cotree}.
The decomposition of the graph into a tree and cotree is not a unique.

The \emph{system of fundamental loops} or \emph{system of }f-loops
for short, \cite[2.2]{BalBic}, \cite[2.1]{Cau} and \cite[2.3]{SesRee},
is of particular importance to our studies. The system of time-dependent
charges (defined as the time integrals of the currents) associated
with the system of f-loops provides a c complete set of independent
variables. When the network tree is selected then every link defines
the containing it f-loop. The orientation of an f-loop is defined
by the orientation of the link it contains. Consequently, there are
as many of f-loops in as there are links, and
\begin{equation}
\text{number of \ensuremath{f}-loops}:\;N_{\mathrm{fl}}=N_{\mathrm{b}}-N_{\mathrm{n}}+1.\label{eq:numfL}
\end{equation}
The number $N_{\mathrm{fl}}$ of f-loops defined by equation (\ref{eq:numfL})
quantifies the connectivity of the network graph, and it is known
in the algebraic topology as the first Betti number \cite[2]{Cau},
\cite[2.3]{SesRee}), \cite{Witt}.

The discussed concepts of the graph of an electric network such as
the tree, twigs, links and f-loops are illustrated in Fig. \ref{fig:tree}.
In particular, there are $4$ nodes marked by small disks (black).
In Fig. \ref{fig:tree} (b) there are $3$ twigs identified by bolder
(black) lines and labeled by numbers \textbf{\textit{1}}, \textbf{\textit{3}},
\textbf{\textit{5}}. There are $4$ links identified by dashed (red)
lines and labeled by numbers \textbf{\textit{2}}, \textbf{\textit{4}},
\textbf{\textit{6}}, \textbf{\textit{7}}. There also $4$ oriented
f-loops formed by the branches as follows: (1) \textbf{\textit{7}},
\textbf{\textit{1}}, \textbf{\textit{3}}, \textbf{\textit{5}}; (2)\textbf{\textit{
2}}, \textbf{\textit{1}}; (3)\textbf{\textit{ 4}}, \textbf{\textit{3}};
(2)\textbf{\textit{ 6}}, \textbf{\textit{5}}. These representations
of the f-loops as ordered lists of branches identify the corresponding
links as number in the first position in every list.

One also distinguishes simpler \emph{planar} networks with graphs
that can be drawn so that lines representing branches do not intersect.
The graph of a general electric network does not have to be planar
though. Networks with non-planar graphs can still be represented graphically
with more complex display arrangements or algebraically by the \emph{incidence
matrices}, \cite[2.2]{BalBic}.

\section{Conclusions}

We developed here complete mathematical theory allowing to synthesize
circuits with evolution matrices exhibiting prescribed Jordan canonical
forms subject to natural constraints. In particular, we synthesized
simple lossless circuits associated with pairs of Jordan blocks of
size 2, 3 and 4, analyzed all their significant properties and derived
closed form algebraic expressions for all significant matrices. Importantly,
the elements of the constructed circuits involve negative capacitances
and/or inductances. Naturally, those negative values are needed for
chosen fixed frequencies only and that is beneficiary for efficiently
achieving them based on operational amplifier converters.

The data that supports the findings of this study are available within
the article.
\begin{quotation}
\textbf{\vspace{0.2cm}
}
\end{quotation}
\textbf{Acknowledgment:} This research was supported by AFOSR grant
\# FA9550-19-1-0103 and Northrop Grumman grant \# 2326345.

We are grateful to Prof. F. Capolino, University of California at
Irvine, for reading the manuscript and giving valuable suggestions.

\section{Appendix A: Jordan canonical form\label{sec:jord-form}}

We provide here very concise review of Jordan canonical forms following
mostly to \cite[III.4]{Hale}, \cite[3.1,3.2]{HorJohn}. As to a demonstration
of how Jordan block arises in the case of a single $n$-th order differential
equation we refer to \cite[25.4]{ArnODE}.

Let $A$ be an $n\times n$ matrix and $\lambda$ be its eigenvalue,
and let $r\left(\lambda\right)$ be the least integer $k$ such that
$\mathcal{N}\left[\left(A-\lambda\mathbb{I}\right)^{k}\right]=\mathcal{N}\left[\left(A-\lambda\mathbb{I}\right)^{k+1}\right]$,
where $\mathcal{N}\left[C\right]$ is a null space of a matrix $C$.
Then we refer to $M_{\lambda}=\mathcal{N}\left[\left(A-\lambda\mathbb{I}\right)^{r\left(\lambda\right)}\right]$
is the \emph{generalized eigenspace} of matrix $A$ corresponding
to eigenvalue $\lambda$. Then the following statements hold, \cite[III.4]{Hale}.
\begin{prop}[generalized eigenspaces]
\label{prop:gen-eig} Let $A$ be an $n\times n$ matrix and $\lambda_{1},\ldots,\lambda_{p}$
be its distinct eigenvalues. Then generalized eigenspaces $M_{\lambda_{1}},\ldots,M_{\lambda_{p}}$
are linearly independent, invariant under the matrix $A$ and
\begin{equation}
\mathbb{C}^{n}=M_{\lambda_{1}}\oplus\ldots\oplus M_{\lambda_{p}}.\label{eq:Mgeneig1a}
\end{equation}
Consequently, any vector $x_{0}$ in $\mathbb{C}^{n}$can be represented
uniquely as
\begin{equation}
x_{0}=\sum_{j=1}^{p}x_{0,j},\quad x_{0,j}\in M_{\lambda_{j}},\label{eq:Mgeneig1b}
\end{equation}
and
\begin{equation}
\exp\left\{ At\right\} x_{0}=\sum_{j=1}^{p}e^{\lambda_{j}t}p_{j}\left(t\right),\label{eq:Mgeneig1c}
\end{equation}
where column-vector polynomials $p_{j}\left(t\right)$ satisfy
\begin{gather}
p_{j}\left(t\right)=\sum_{k=0}^{r\left(\lambda_{j}\right)-1}\left(A-\lambda_{j}\mathbb{I}\right)^{k}\frac{t^{k}}{k!}x_{0,j},\quad x_{0,j}\in M_{\lambda_{j}},\quad1\leq j\leq p.\label{eq:Mgeneig1d}
\end{gather}
\end{prop}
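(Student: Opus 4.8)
The plan is to prove the proposition in two stages: first the primary decomposition (\ref{eq:Mgeneig1a}) of $\mathbb{C}^{n}$ into the generalized eigenspaces, and then the exponential formula (\ref{eq:Mgeneig1c})--(\ref{eq:Mgeneig1d}) as a consequence of nilpotency on each summand. For the decomposition I would begin with the factorization of the characteristic polynomial over $\mathbb{C}$, namely $\chi_{A}(s)=\prod_{j=1}^{p}(s-\lambda_{j})^{m_{j}}$ with $m_{j}$ the algebraic multiplicities, so that $\chi_{A}(A)=0$ by the Cayley--Hamilton theorem. Setting $\pi_{j}(s)=(s-\lambda_{j})^{m_{j}}$, one has $\prod_{j}\pi_{j}(A)=0$, and since the null spaces $\mathcal{N}[(A-\lambda_{j}\mathbb{I})^{k}]$ increase with $k$ and stabilize at $k=r(\lambda_{j})\leq m_{j}$, the generalized eigenspace $M_{\lambda_{j}}$ coincides with $\mathcal{N}[\pi_{j}(A)]$.

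The algebraic heart of the argument is coprimality. The polynomials $q_{j}(s)=\prod_{i\neq j}\pi_{i}(s)$ share no common root, so $\gcd(q_{1},\dots,q_{p})=1$ and Bezout's identity supplies polynomials $h_{j}$ with $\sum_{j=1}^{p}h_{j}(s)q_{j}(s)=1$. I would then define the operators $P_{j}=h_{j}(A)q_{j}(A)$ and verify, using that all polynomials in $A$ commute, the relations $\sum_{j}P_{j}=\mathbb{I}$ and $P_{i}P_{j}=0$ for $i\neq j$ (the latter because the product $q_{i}q_{j}$ is divisible by $\prod_{k}\pi_{k}$, which annihilates $A$); multiplying $\sum_{j}P_{j}=\mathbb{I}$ by $P_{i}$ gives $P_{i}^{2}=P_{i}$. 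Hence the $P_{j}$ are complementary projections and $\mathbb{C}^{n}=\bigoplus_{j}\operatorname{ran}P_{j}$. The identification $\operatorname{ran}P_{j}=\mathcal{N}[\pi_{j}(A)]=M_{\lambda_{j}}$ follows because $\pi_{j}(A)P_{j}=h_{j}(A)\prod_{k}\pi_{k}(A)=0$ gives one inclusion, while $\pi_{j}(A)x=0$ forces $P_{i}x=0$ for every $i\neq j$ (as $q_{i}$ carries the factor $\pi_{j}$) and therefore $x=P_{j}x$. This establishes (\ref{eq:Mgeneig1a}); linear independence, $A$-invariance (each $\mathcal{N}[\pi_{j}(A)]$ is invariant since $\pi_{j}(A)$ commutes with $A$), and the uniqueness of the representation (\ref{eq:Mgeneig1b}) are then immediate properties of a direct sum.

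For the exponential formula I would restrict to a single summand. On $M_{\lambda_{j}}$ the operator $N_{j}=A-\lambda_{j}\mathbb{I}$ satisfies $N_{j}^{r(\lambda_{j})}=0$ by the definition of $M_{\lambda_{j}}$, and the commuting splitting $A=\lambda_{j}\mathbb{I}+N_{j}$ yields $\exp\{At\}x_{0,j}=e^{\lambda_{j}t}\exp\{N_{j}t\}x_{0,j}$ for $x_{0,j}\in M_{\lambda_{j}}$. The power series for $\exp\{N_{j}t\}$ then truncates at $k=r(\lambda_{j})-1$, which is exactly the polynomial $p_{j}(t)$ of (\ref{eq:Mgeneig1d}). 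Writing $x_{0}=\sum_{j}x_{0,j}$ according to the decomposition and summing over $j$ gives (\ref{eq:Mgeneig1c}).

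The step I expect to demand the most care is the identification $\operatorname{ran}P_{j}=M_{\lambda_{j}}$, together with the accompanying fact that the null-space chain stabilizes no later than the algebraic multiplicity $m_{j}$; this is where the Bezout relation does the genuine work and where a careless argument could conflate $\mathcal{N}[(A-\lambda_{j}\mathbb{I})^{m_{j}}]$ with the smaller ordinary eigenspace $\mathcal{N}[A-\lambda_{j}\mathbb{I}]$. Everything else --- the projection identities, the invariance, and the truncation of the exponential series --- is routine bookkeeping once the spectral projections $P_{j}$ are in hand.
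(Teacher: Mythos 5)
Your proof is correct, but there is nothing in the paper to compare it against: Proposition \ref{prop:gen-eig} is stated in Appendix A as background material and referred to \cite[III.4]{Hale} without any internal proof. Your route is the standard primary-decomposition argument: Cayley--Hamilton plus Bezout's identity produce the commuting spectral projections $P_{j}=h_{j}(A)q_{j}(A)$; the projection identities give $\mathbb{C}^{n}=\bigoplus_{j}\mathrm{ran}\,P_{j}$; the identification $\mathrm{ran}\,P_{j}=\mathcal{N}\left[\pi_{j}(A)\right]$ together with nilpotency of $A-\lambda_{j}\mathbb{I}$ on each summand truncates the exponential series, which yields exactly (\ref{eq:Mgeneig1c})--(\ref{eq:Mgeneig1d}). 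The one step you assert rather than prove is the identification $M_{\lambda_{j}}=\mathcal{N}\left[\pi_{j}(A)\right]$, equivalently that the null-space chain stabilizes no later than the algebraic multiplicity $m_{j}$; you flag this yourself, and your own Bezout machinery closes it: for any $k\geq1$ the polynomials $(s-\lambda_{j})^{k}$ and $q_{j}(s)$ are coprime, so from $1=u(s)(s-\lambda_{j})^{k}+v(s)q_{j}(s)$ any $x$ with $(A-\lambda_{j}\mathbb{I})^{k}x=0$ satisfies $x=v(A)q_{j}(A)x$, whence $\pi_{j}(A)x=v(A)\chi_{A}(A)x=0$; thus $\mathcal{N}\left[(A-\lambda_{j}\mathbb{I})^{k}\right]\subseteq\mathcal{N}\left[\pi_{j}(A)\right]$ for every $k$, which combined with monotonicity of the chain forces $r(\lambda_{j})\leq m_{j}$ and the claimed equality. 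With that detail written out (together with the routine lemma that once two consecutive null spaces coincide the chain is constant thereafter), your argument is complete and self-contained, which is arguably a gain over the paper's bare citation.
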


For a complex number $\lambda$ a Jordan block $J_{r}\left(\lambda\right)$
of size $r\geq1$ is a $r\times r$ upper triangular matrix of the
form
\begin{gather}
J_{r}\left(\lambda\right)=\lambda\mathbb{I}_{r}+K_{r}=\left[\begin{array}{ccccc}
\lambda & 1 & \cdots & 0 & 0\\
0 & \lambda & 1 & \cdots & 0\\
0 & 0 & \ddots & \cdots & \vdots\\
\vdots & \vdots & \ddots & \lambda & 1\\
0 & 0 & \cdots & 0 & \lambda
\end{array}\right],\quad J_{1}\left(\lambda\right)=\left[\lambda\right],\quad J_{2}\left(\lambda\right)=\left[\begin{array}{cc}
\lambda & 1\\
0 & \lambda
\end{array}\right],\label{eq:Jork1a}
\end{gather}
\begin{equation}
K_{r}=J_{r}\left(0\right)=\left[\begin{array}{ccccc}
0 & 1 & \cdots & 0 & 0\\
0 & 0 & 1 & \cdots & 0\\
0 & 0 & \ddots & \cdots & \vdots\\
\vdots & \vdots & \ddots & 0 & 1\\
0 & 0 & \cdots & 0 & 0
\end{array}\right].\label{eq:Jork1k}
\end{equation}
The special Jordan block $K_{r}=J_{r}\left(0\right)$ defined by equation
(\ref{eq:Jork1k}) is an nilpotent matrix that satisfies the following
identities

\begin{gather}
K_{r}^{2}=\left[\begin{array}{ccccc}
0 & 0 & 1 & \cdots & 0\\
0 & 0 & 0 & \cdots & \vdots\\
0 & 0 & \ddots & \cdots & 1\\
\vdots & \vdots & \ddots & 0 & 0\\
0 & 0 & \cdots & 0 & 0
\end{array}\right],\cdots,\;K_{r}^{r-1}=\left[\begin{array}{ccccc}
0 & 0 & \cdots & 0 & 1\\
0 & 0 & 0 & \cdots & 0\\
0 & 0 & \ddots & \cdots & \vdots\\
\vdots & \vdots & \ddots & 0 & 0\\
0 & 0 & \cdots & 0 & 0
\end{array}\right],\quad K_{r}^{r}=0.\label{eq:Jord1a}
\end{gather}
A general Jordan $n\times n$ matrix $J$ is defined as a direct sum
of Jordan blocks, that is

\begin{equation}
J=\left[\begin{array}{ccccc}
J_{n_{1}}\left(\lambda_{1}\right) & 0 & \cdots & 0 & 0\\
0 & J_{n_{2}}\left(\lambda_{2}\right) & 0 & \cdots & 0\\
0 & 0 & \ddots & \cdots & \vdots\\
\vdots & \vdots & \ddots & J_{n_{q-1}}\left(\lambda_{n_{q}-1}\right) & 0\\
0 & 0 & \cdots & 0 & J_{n_{q}}\left(\lambda_{n_{q}}\right)
\end{array}\right],\quad n_{1}+n_{2}+\cdots n_{q}=n,\label{eq:Jork1b}
\end{equation}
where $\lambda_{j}$ need not be distinct. Any square matrix $A$
is similar to a Jordan matrix as in equation (\ref{eq:Jork1b}) which
is called \emph{Jordan canonical form} of $A$. Namely, the following
statement holds, \cite[3.1]{HorJohn}.
\begin{prop}[Jordan canonical form]
\label{prop:jor-can} Let $A$ be an $n\times n$ matrix. Then there
exists a non-singular $n\times n$ matrix $Q$ such that the following
block-diagonal representation holds
\begin{equation}
Q^{-1}AQ=J\label{eq:QAQC1a}
\end{equation}
where $J$ is the Jordan matrix defined by equation (\ref{eq:Jork1b})
and $\lambda_{j}$, $1\leq j\leq q$ are not necessarily different
eigenvalues of matrix $A$. Representation (\ref{eq:QAQC1a}) is known
as the \emph{Jordan canonical form} of matrix $A$, and matrices $J_{j}$
are called \emph{Jordan blocks}. The columns of the $n\times n$ matrix
$Q$ constitute the \emph{Jordan basis} providing for the Jordan canonical
form (\ref{eq:QAQC1a}) of matrix $A$.
\end{prop}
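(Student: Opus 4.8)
The plan is to reduce the general statement to the case of a nilpotent operator and then construct an explicit basis adapted to it. First I would invoke Proposition \ref{prop:gen-eig} to decompose $\mathbb{C}^{n}=M_{\lambda_{1}}\oplus\ldots\oplus M_{\lambda_{p}}$ into the generalized eigenspaces of $A$, each of which is invariant under $A$. Since the $M_{\lambda_{j}}$ are complementary and $A$-invariant, it suffices to construct a suitable basis on each factor separately and then concatenate them; the matrix $Q$ whose columns are these basis vectors (ordered block by block) will then satisfy $Q^{-1}AQ=J$ with $J$ block-diagonal as in equation (\ref{eq:Jork1b}).

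Next I would observe that on each generalized eigenspace the restricted operator splits as $A|_{M_{\lambda_{j}}}=\lambda_{j}\mathbb{I}+N_{j}$, where by the very definition $M_{\lambda_{j}}=\mathcal{N}\left[\left(A-\lambda_{j}\mathbb{I}\right)^{r(\lambda_{j})}\right]$ the operator $N_{j}=\left(A-\lambda_{j}\mathbb{I}\right)|_{M_{\lambda_{j}}}$ is nilpotent with $N_{j}^{r(\lambda_{j})}=0$. This reduces the whole problem to the purely nilpotent case: I must show that any nilpotent operator $N$ on a finite-dimensional space admits a basis in which it is a direct sum of the elementary nilpotent blocks $K_{r}$ of equation (\ref{eq:Jork1k}). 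Adding back $\lambda_{j}\mathbb{I}$ then converts each $K_{r}$ into a Jordan block $J_{r}(\lambda_{j})=\lambda_{j}\mathbb{I}_{r}+K_{r}$ exactly as in equation (\ref{eq:Jork1a}).

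The heart of the argument --- and the step I expect to be the main obstacle --- is the construction of a Jordan (chain) basis for a nilpotent $N$. I would proceed by induction on the dimension, using the ascending filtration of kernels $\mathcal{N}[N]\subset\mathcal{N}[N^{2}]\subset\cdots$ or, equivalently, the ascending chain of images. The idea is to choose vectors whose $N$-orbits $\{x,Nx,N^{2}x,\ldots\}$ are linearly independent chains that together span the space: one selects representatives in the top quotient $\mathcal{N}[N^{k}]/\mathcal{N}[N^{k-1}]$, pushes them down by $N$, and checks that the resulting chains can be completed without overlap. The delicate points are verifying that the chains remain independent across different lengths and that their total length accounts for the full dimension; this is where a careful dimension count on the successive quotients is required.

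Finally I would assemble the global basis by ordering each nilpotent chain so that $N$ acts as a shift, giving a block $K_{r}$ on each chain, and stacking the $p$ generalized-eigenspace bases in sequence. The change-of-basis matrix $Q$ built from these columns then yields the block-diagonal Jordan form (\ref{eq:Jork1b}), with each block $J_{n_{i}}(\lambda_{j})$ attached to the eigenvalue $\lambda_{j}$ of the generalized eigenspace it came from. Uniqueness of the block sizes (not asserted here) would follow from the invariance of the numbers $\dim\mathcal{N}[(A-\lambda\mathbb{I})^{k}]$, but for the existence claimed in the statement the construction above suffices.
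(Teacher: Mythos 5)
The paper does not actually prove this proposition: it is stated as a classical fact with a citation to \cite[3.1]{HorJohn}, so there is no internal proof to compare yours against. Your outline is a correct and standard route to existence: decompose $\mathbb{C}^{n}$ by Proposition \ref{prop:gen-eig} into $A$-invariant generalized eigenspaces, observe that $N_{j}=\left(A-\lambda_{j}\mathbb{I}\right)|_{M_{\lambda_{j}}}$ is nilpotent, build a chain basis for each $N_{j}$, and reassemble; this dovetails well with the paper, since Proposition \ref{prop:gen-eig} is already available in the same appendix. Note that the cited source proceeds differently: Horn and Johnson first triangularize via Schur's theorem, then block-diagonalize by grouping equal eigenvalues (a Sylvester-equation argument removes the coupling between blocks with distinct eigenvalues), and only then handle the nilpotent part; your route through the primary decomposition avoids Schur's theorem entirely, at the cost of leaning on Proposition \ref{prop:gen-eig}. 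The one step you leave as a sketch is also the crux: the chain basis for a nilpotent $N$. The filtration argument you describe does close, but the fact that makes the dimension count work is that $N$ induces an \emph{injective} map of quotients
\begin{equation}
\mathcal{N}\left[N^{k}\right]/\mathcal{N}\left[N^{k-1}\right]\longrightarrow\mathcal{N}\left[N^{k-1}\right]/\mathcal{N}\left[N^{k-2}\right],
\end{equation}
so representatives chosen at the top level stay independent modulo lower levels under repeated application of $N$ and can be completed level by level without overlap. Stating and using that injectivity explicitly would turn your sketch into a complete proof; as written, the plan contains no wrong step, only this unexecuted verification.
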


A function $f\left(J_{r}\left(s\right)\right)$ of a Jordan block
$J_{r}\left(s\right)$ is represented by the following equation \cite[7.9]{MeyCD},
\cite[10.5]{BernM}

\begin{gather}
f\left(J_{r}\left(s\right)\right)==\left[\begin{array}{ccccc}
f\left(s\right) & \partial f\left(s\right) & \frac{\partial^{2}f\left(s\right)}{2} & \cdots & \frac{\partial^{r-1}f\left(s\right)}{\left(r-1\right)!}\\
0 & f\left(s\right) & \partial f\left(s\right) & \cdots & \frac{\partial^{r-2}f\left(s\right)}{\left(r-2\right)!}\\
0 & 0 & \ddots & \cdots & \vdots\\
\vdots & \vdots & \ddots & f\left(s\right) & \partial f\left(s\right)\\
0 & 0 & \cdots & 0 & f\left(s\right)
\end{array}\right].\label{eq:JJordf1a}
\end{gather}
Notice that any function $f\left(J_{r}\left(s\right)\right)$ of the
Jordan block $J_{r}\left(s\right)$ is evidently an upper triangular
Toeplitz matrix.

There are two particular cases of formula (\ref{eq:JJordf1a}) which
can be also derived straightforwardly using equations (\ref{eq:Jord1a}):
\begin{gather}
\exp\left\{ K_{r}t\right\} =\sum_{k=0}^{r-1}\frac{t^{k}}{k!}K_{r}^{k}=\left[\begin{array}{ccccc}
1 & t & \frac{t^{2}}{2!} & \cdots & \frac{t^{r-1}}{\left(r-1\right)!}\\
0 & 1 & t & \cdots & \frac{t^{r-2}}{\left(r-2\right)!}\\
0 & 0 & \ddots & \cdots & \vdots\\
\vdots & \vdots & \ddots & 1 & t\\
0 & 0 & \cdots & 0 & 1
\end{array}\right],\label{eq:Jord1c}
\end{gather}
 
\begin{gather}
\left[J_{r}\left(s\right)\right]^{-1}=\sum_{k=0}^{r-1}s^{-k-1}\left(-K_{r}\right)^{k}=\left[\begin{array}{ccccc}
\frac{1}{s} & -\frac{1}{s^{2}} & \frac{1}{s^{3}} & \cdots & \frac{\left(-1\right)^{r-1}}{s^{r}}\\
0 & \frac{1}{s} & -\frac{1}{s^{2}} & \cdots & \frac{\left(-1\right)^{r-2}}{s^{r-1}}\\
0 & 0 & \ddots & \cdots & \vdots\\
\vdots & \vdots & \ddots & \frac{1}{s} & -\frac{1}{s^{2}}\\
0 & 0 & \cdots & 0 & \frac{1}{s}
\end{array}\right].\label{eq:JJordf1b}
\end{gather}

\section{Appendix B: Companion matrix and cyclicity condition\label{sec:co-mat}}

The companion matrix $C\left(a\right)$ for monic polynomial
\begin{equation}
a\left(s\right)=s^{\nu}+\sum_{1\leq k\leq\nu}a_{\nu-k}s^{\nu-k}\label{eq:compas1a}
\end{equation}
where coefficients $a_{k}$ are complex numbers is defined by \cite[5.2]{BernM}
\begin{equation}
C\left(a\right)=\left[\begin{array}{ccccc}
0 & 1 & \cdots & 0 & 0\\
0 & 0 & 1 & \cdots & 0\\
0 & 0 & 0 & \cdots & \vdots\\
\vdots & \vdots & \ddots & 0 & 1\\
-a_{0} & -a_{1} & \cdots & -a_{\nu-2} & -a_{\nu-1}
\end{array}\right].\label{eq:compas1b}
\end{equation}
Notice that
\begin{equation}
\det\left\{ C\left(a\right)\right\} =\left(-1\right)^{\nu}a_{0}.\label{eq:compas1c}
\end{equation}

An eigenvalue is called \emph{cyclic (nonderogatory)} if its geometric
multiplicity is 1. A square matrix is called \emph{cyclic (nonderogatory)}
if all its eigenvalues are cyclic \cite[5.5]{BernM}. The following
statement provides different equivalent descriptions of a cyclic matrix
\cite[5.5]{BernM}.
\begin{prop}[criteria for a matrix to be cyclic]
\label{prop:cyc1} Let $A\in\mathbb{C}^{n\times n}$ be $n\times n$
matrix with complex-valued entries. Let $\mathrm{spec}\,\left(A\right)=\left\{ \zeta_{1},\zeta_{2},\ldots,\zeta_{r}\right\} $
be the set of all distinct eigenvalues and $k_{j}=\mathrm{ind}{}_{A}\,\left(\zeta_{j}\right)$
is the largest size of Jordan block associated with $\zeta_{j}$.
Then the minimal polynomial $\mu_{A}\left(s\right)$ of the matrix
$A$, that is a monic polynomial of the smallest degree such that
$\mu_{A}\left(A\right)=0$, satisfies
\begin{equation}
\mu_{A}\left(s\right)=\prod_{j=1}^{r}\left(s-\zeta_{j}\right)^{k_{j}}.\label{eq:compas1d}
\end{equation}
Furthermore, and following statements are equivalent:
\end{prop}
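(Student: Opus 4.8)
The plan is to reduce the computation of the minimal polynomial to the Jordan canonical form of $A$ and then to treat one Jordan block at a time. First I would invoke Proposition \ref{prop:jor-can} to write $Q^{-1}AQ=J$, where $J$ is the Jordan matrix (\ref{eq:Jork1b}). Since $p\left(A\right)=Qp\left(J\right)Q^{-1}$ for every polynomial $p$, the matrices $A$ and $J$ are annihilated by exactly the same polynomials, so they share the same minimal polynomial and it suffices to compute $\mu_{J}\left(s\right)$. Because $J$ is block diagonal, $p\left(J\right)$ is the block diagonal matrix with diagonal blocks $p\left(J_{n_{i}}\left(\lambda_{i}\right)\right)$; hence $p$ annihilates $J$ if and only if it annihilates every block, which shows that $\mu_{J}$ is the least common multiple of the minimal polynomials of the individual Jordan blocks.

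The key local computation is the minimal polynomial of a single block $J_{m}\left(\lambda\right)$. Writing $J_{m}\left(\lambda\right)-\lambda\mathbb{I}_{m}=K_{m}$ and using the nilpotency relations (\ref{eq:Jord1a}), namely $K_{m}^{m-1}\neq0$ while $K_{m}^{m}=0$, I would conclude that $\left(s-\lambda\right)^{m}$ annihilates $J_{m}\left(\lambda\right)$ but $\left(s-\lambda\right)^{m-1}$ does not, and since $\lambda$ is the only eigenvalue of the block its minimal polynomial is forced to be a power of $s-\lambda$; therefore that power is exactly $\left(s-\lambda\right)^{m}$. Grouping the Jordan blocks by the distinct eigenvalues $\zeta_{1},\ldots,\zeta_{r}$, the blocks attached to $\zeta_{j}$ contribute factors $\left(s-\zeta_{j}\right)^{m}$ for their various sizes $m$, whose least common multiple is $\left(s-\zeta_{j}\right)^{k_{j}}$ with $k_{j}$ the largest such size, i.e. $k_{j}=\mathrm{ind}_{A}\left(\zeta_{j}\right)$. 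Since factors belonging to distinct eigenvalues are pairwise coprime, the overall least common multiple is the product $\prod_{j=1}^{r}\left(s-\zeta_{j}\right)^{k_{j}}$, which is the asserted formula (\ref{eq:compas1d}).

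For the equivalent characterizations of cyclicity I would translate the condition through this Jordan picture: $A$ is cyclic precisely when every $\zeta_{j}$ has geometric multiplicity one, i.e. a single Jordan block, in which case $k_{j}$ equals the algebraic multiplicity of $\zeta_{j}$ and $\mu_{A}$ coincides up to sign with the characteristic polynomial $\chi_{A}\left(s\right)=\det\left\{s\mathbb{I}-A\right\}$. From $\mu_{A}=\chi_{A}$ one then deduces similarity of $A$ to the companion matrix (\ref{eq:compas1b}) of $\chi_{A}$ and the existence of a cyclic vector $x$ for which $x,Ax,\ldots,A^{n-1}x$ is a basis, and conversely each of these conditions forces one block per eigenvalue. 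I expect the main obstacle to lie not in the minimal-polynomial formula, which is essentially bookkeeping once the single-block case is settled, but in the nontrivial direction of the equivalences: showing that one Jordan block per eigenvalue actually produces a cyclic vector and the companion-matrix similarity. Constructing such a vector by combining cyclic generators of the individual blocks and verifying that the resulting Krylov sequence remains independent across the coprime eigenvalue components is the step that demands genuine argument rather than routine manipulation.
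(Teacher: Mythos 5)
The paper itself offers no proof of Proposition \ref{prop:cyc1}: it is quoted from \cite[5.5]{BernM}, so there is no in-paper argument to compare yours against; your write-up would in effect supply the missing proof. Your Jordan-form computation of the minimal polynomial is correct and complete: reduction to $J$ via Proposition \ref{prop:jor-can}, the observation that $p\left(J\right)$ is block diagonal so $\mu_{J}$ is the least common multiple of the blocks' minimal polynomials, the single-block computation $\mu_{J_{m}\left(\lambda\right)}\left(s\right)=\left(s-\lambda\right)^{m}$ from $K_{m}^{m-1}\neq0$, $K_{m}^{m}=0$, and the coprimality argument yielding (\ref{eq:compas1d}). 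The equivalences of items (1), (2), (3) also come out correctly, since the paper defines cyclicity by geometric multiplicity one and the geometric multiplicity of $\zeta_{j}$ equals the number of Jordan blocks attached to it. (Minor point: with the paper's convention $\chi_{A}\left(s\right)=\det\left\{ s\mathbb{I}-A\right\} $ both polynomials are monic, so in (1) they coincide exactly, not merely ``up to sign.'')

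The one place where you stop short of a proof is the equivalence with item (4), similarity to the companion matrix $C\left(\chi_{A}\right)$: you flag the construction of a cyclic vector and the independence of the resulting Krylov sequence as the genuinely hard step and leave it as a plan. That plan is workable, but the step is avoidable altogether: the paper's own Proposition \ref{prop:cycJ} (see also Proposition \ref{prop:cyc2}) states that $C\left(a\right)$ has Jordan form $\mathrm{diag}\,\left(J_{n_{1}}\left(\zeta_{1}\right),\ldots,J_{n_{r}}\left(\zeta_{r}\right)\right)$, that is exactly one block per distinct root of $a$, of size equal to its multiplicity. Taking $a=\chi_{A}$, condition (3) says the Jordan form of $A$ is precisely this matrix, so (3) implies (4) by uniqueness of the Jordan canonical form, and conversely similarity to $C\left(\chi_{A}\right)$ transfers the one-block-per-eigenvalue structure back to $A$. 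No cyclic vector or Krylov basis is needed anywhere, because the paper's notion of ``cyclic'' is spectral rather than the classical cyclic-vector one; importing the latter is what created the obstacle you anticipated.
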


\begin{enumerate}
\item $\mu_{A}\left(s\right)=\chi_{A}\left(s\right)=\det\left\{ s\mathbb{I}-A\right\} $.
\item $A$ is cyclic.
\item For every $\zeta_{j}$ the Jordan form of $A$ contains exactly one
block associated with $\zeta_{j}$.
\item $A$ is similar to the companion matrix $C\left(\chi_{A}\right)$.
\end{enumerate}
\begin{prop}[companion matrix factorization]
\label{prop:cyc2} Let $a\left(s\right)$ be a monic polynomial having
degree $\nu$ and $C\left(a\right)$ is its $\nu\times\nu$ companion
matrix. Then, there exist unimodular $\nu\times\nu$ matrices $S_{1}\left(s\right)$
and $S_{2}\left(s\right)$, that is $\det\left\{ S_{m}\right\} =\pm1$,
$m=1,2$, such that
\begin{equation}
s\mathbb{I}_{\nu}-C\left(a\right)=S_{1}\left(s\right)\left[\begin{array}{lr}
\mathbb{I}_{\nu-1} & 0_{\left(\nu-1\right)\times1}\\
0_{1\times\left(\nu-1\right)} & a\left(s\right)
\end{array}\right]S_{2}\left(s\right).\label{eq:compas1e}
\end{equation}
Consequently, $C\left(a\right)$ is cyclic and
\begin{equation}
\chi_{C\left(a\right)}\left(s\right)=\mu_{C\left(a\right)}\left(s\right)=a\left(s\right).\label{eq:compas1f}
\end{equation}
\end{prop}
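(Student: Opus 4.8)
The plan is to recognize that the claimed identity (\ref{eq:compas1e}) is precisely the statement that the Smith normal form of the characteristic matrix $s\mathbb{I}_{\nu}-C\left(a\right)$ over the principal ideal domain $\mathbb{C}\left[s\right]$ equals $\mathrm{diag}\left(1,\ldots,1,a\left(s\right)\right)$. First I would invoke the Smith normal form theorem: any matrix over $\mathbb{C}\left[s\right]$ can be reduced, by left and right multiplication by unimodular matrices, to a diagonal matrix whose diagonal entries are the invariant factors $d_{k}\left(s\right)/d_{k-1}\left(s\right)$, where $d_{k}\left(s\right)$ is the monic greatest common divisor of all $k\times k$ minors and $d_{0}=1$. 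The unimodular matrices produced by this reduction are products of elementary row and column operations over $\mathbb{C}\left[s\right]$, and I would take these to be $S_{1}\left(s\right)$ and $S_{2}\left(s\right)$; as explained below, for this particular matrix the only scalings needed are by $-1$, so each operation (and hence each $\det\left\{ S_{m}\right\}$) is $\pm1$, matching the claim.

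The computational core is to identify the $d_{k}\left(s\right)$ for $M\left(s\right)=s\mathbb{I}_{\nu}-C\left(a\right)$. The top determinant is $d_{\nu}\left(s\right)=\det\left\{ s\mathbb{I}_{\nu}-C\left(a\right)\right\} =a\left(s\right)$, which follows by cofactor expansion of $M\left(s\right)$ along its first column and is the defining property of the companion matrix (\ref{eq:compas1b}), consistent with (\ref{eq:compas1c}). The key step is to show $d_{\nu-1}\left(s\right)=1$, and for this I would exhibit a single $\left(\nu-1\right)\times\left(\nu-1\right)$ minor that is a unit: deleting the last row and the first column of $M\left(s\right)$ leaves a lower-triangular matrix whose diagonal entries are the superdiagonal $-1$'s of $s\mathbb{I}_{\nu}-C\left(a\right)$, so this minor equals $\left(-1\right)^{\nu-1}$. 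Since $d_{\nu-1}\left(s\right)$ divides every $\left(\nu-1\right)\times\left(\nu-1\right)$ minor, it divides a nonzero constant, whence $d_{\nu-1}\left(s\right)=1$. The divisibility chain $d_{1}\mid d_{2}\mid\cdots\mid d_{\nu-1}=1$ then forces $d_{1}=\cdots=d_{\nu-1}=1$, so the invariant factors are $1,\ldots,1$ and $d_{\nu}/d_{\nu-1}=a\left(s\right)$; this establishes (\ref{eq:compas1e}). Moreover, because the pivots encountered in the reduction are exactly these $\pm1$ entries, normalization requires only multiplying rows by $-1$, so $\det\left\{ S_{m}\right\} =\pm1$ as asserted.

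It remains to read off the spectral consequences. The factorization shows $s\mathbb{I}_{\nu}-C\left(a\right)$ is equivalent over $\mathbb{C}\left[s\right]$ to $\mathrm{diag}\left(1,\ldots,1,a\left(s\right)\right)$, so $C\left(a\right)$ has a single nonunit invariant factor, namely $a\left(s\right)$. By the standard correspondence between the invariant factors of $s\mathbb{I}-A$ and the rational canonical structure of $A$, the product of all invariant factors is the characteristic polynomial, giving $\chi_{C\left(a\right)}\left(s\right)=a\left(s\right)$, while the last (largest) invariant factor is the minimal polynomial, giving $\mu_{C\left(a\right)}\left(s\right)=a\left(s\right)$. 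Thus $\mu_{C\left(a\right)}=\chi_{C\left(a\right)}=a\left(s\right)$, which is (\ref{eq:compas1f}), and by the equivalence of items (1) and (2) in Proposition \ref{prop:cyc1} the matrix $C\left(a\right)$ is cyclic.

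The main obstacle is not any single calculation but the amount of background one chooses to assume: if the invariant-factor and Smith-form machinery is taken as known, the only genuine work is the two determinant computations above, both routine. If instead a self-contained argument is wanted, the delicate point is the clean identification of the last invariant factor with the minimal polynomial. An alternative that sidesteps the Smith form for the cyclicity and polynomial claims is to observe directly that the last standard basis vector $e_{\nu}$ is a cyclic vector for $C\left(a\right)$, since its successive images under $C\left(a\right)$ span $\mathbb{C}^{\nu}$; this immediately yields nonderogatoriness and $\mu_{C\left(a\right)}=\chi_{C\left(a\right)}=a\left(s\right)$, though it does not by itself produce the explicit unimodular factorization (\ref{eq:compas1e}), which is why I would still carry out the minor computation to obtain the full statement.
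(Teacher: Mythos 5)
The paper itself contains no proof of Proposition \ref{prop:cyc2}: it sits in the background Appendix B, where it is quoted from the literature (see \cite[5.5]{BernM} and the surrounding citations), so there is no in-paper argument to compare yours against. Judged on its own, your proof is correct and follows the standard invariant-factor route. The two computations that carry the argument are right: $\det\left\{ s\mathbb{I}_{\nu}-C\left(a\right)\right\} =a\left(s\right)$, and the $\left(\nu-1\right)\times\left(\nu-1\right)$ minor obtained by deleting the last row and first column of $s\mathbb{I}_{\nu}-C\left(a\right)$ is lower triangular with all diagonal entries $-1$, hence equals $\left(-1\right)^{\nu-1}$. That forces $d_{\nu-1}\left(s\right)=1$, the divisibility chain kills all lower $d_{k}$, and the invariant factors come out as $1,\ldots,1,a\left(s\right)$, which is exactly (\ref{eq:compas1e}). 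Reading off $\chi_{C\left(a\right)}$ as the product of the invariant factors and $\mu_{C\left(a\right)}$ as the last one, and invoking the equivalence of items (1) and (2) of Proposition \ref{prop:cyc1}, gives (\ref{eq:compas1f}) and cyclicity. Your alternative observation that $e_{\nu}$ is a cyclic vector (its iterates under $C\left(a\right)$ form a triangular, hence spanning, family) is also correct and is indeed the quickest self-contained route to (\ref{eq:compas1f}) alone, though, as you say, it does not produce the factorization.

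The one step you assert rather than prove is that the unimodular factors can be taken with $\det\left\{ S_{m}\right\} =\pm1$ exactly, not merely a nonzero constant; your appeal to "the pivots encountered in the reduction" is plausible but not demonstrated. It deserves a sentence, and either of two remarks closes it. First, one can exhibit the reduction: the single column operation $C_{1}\rightarrow C_{1}+sC_{2}+s^{2}C_{3}+\cdots+s^{\nu-1}C_{\nu}$ turns the first column of $s\mathbb{I}_{\nu}-C\left(a\right)$ into $\left(0,\ldots,0,a\left(s\right)\right)^{\mathrm{T}}$, after which the superdiagonal $-1$ entries clear all remaining rows and columns using only additions of polynomial multiples, sign changes and permutations, all of determinant $\pm1$. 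Second, and even more cheaply over $\mathbb{C}$: if $\det\left\{ S_{1}\right\} =c\neq0$, replace $S_{1}$ by $\lambda S_{1}$ and $S_{2}$ by $\lambda^{-1}S_{2}$ with $\lambda^{\nu}=1/c$; this leaves the middle diagonal factor untouched (scalar matrices commute with it) and, since $\det\left\{ S_{1}\right\} \det\left\{ S_{2}\right\} $ is pinned to $1$ by comparing leading coefficients of $\det\left\{ s\mathbb{I}_{\nu}-C\left(a\right)\right\} =a\left(s\right)$ with $\det\left\{ D\right\} =a\left(s\right)$, both determinants become $\pm1$. As written this is a fixable omission, not an error.
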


The following statement summarizes important information on the Jordan
form of the companion matrix and the generalized Vandermonde matrix,
\cite[5.16]{BernM}, \cite[2.11]{LanTsi}, \cite[7.9]{MeyCD}.
\begin{prop}[Jordan form of the companion matrix]
\label{prop:cycJ} Let $C\left(a\right)$ be an $n\times n$ a companion
matrix of the monic polynomial $a\left(s\right)$ defined by equation
(\ref{eq:compas1a}). Suppose that the set of distinct roots of polynomial
$a\left(s\right)$ is $\left\{ \zeta_{1},\zeta_{2},\ldots,\zeta_{r}\right\} $
and $\left\{ n_{1},n_{2},\ldots,n_{r}\right\} $ is the corresponding
set of the root multiplicities such that
\begin{equation}
n_{1}+n_{2}+\cdots+n_{r}=n.\label{eq:compas2a}
\end{equation}
Then 
\begin{equation}
C\left(a\right)=RJR^{-1},\label{eq:compas2b}
\end{equation}
where
\begin{equation}
J=\mathrm{diag}\,\left(J_{n_{1}}\left(\zeta_{1}\right),J_{n_{2}}\left(\zeta_{2}\right),\ldots,J_{n_{r}}\left(\zeta_{r}\right)\right)\label{eq:compas2c}
\end{equation}
is the the Jordan form of companion matrix $C\left(a\right)$ and
$n\times n$ matrix $R$ is the so-called generalized Vandermonde
matrix defined by
\begin{equation}
R=\left[R_{1}|R_{2}|\cdots|R_{r}\right],\label{eq:compas2d}
\end{equation}
 where $R_{j}$ is $n\times n_{j}$ matrix of the form
\begin{equation}
R_{j}=\left[\begin{array}{rrcr}
1 & 0 & \cdots & 0\\
\zeta_{j} & 1 & \cdots & 0\\
\vdots & \vdots & \ddots & \vdots\\
\zeta_{j}^{n-2} & \binom{n-2}{1}\,\zeta_{j}^{n-3} & \cdots & \binom{n-2}{n_{j}-1}\,\zeta_{j}^{n-n_{j}-1}\\
\zeta_{j}^{n-1} & \binom{n-1}{1}\,\zeta_{j}^{n-2} & \cdots & \binom{n-1}{n_{j}-1}\,\zeta_{j}^{n-n_{j}}
\end{array}\right].\label{eq:compas2f}
\end{equation}
As a consequence of representation (\ref{eq:compas2c}) $C\left(a\right)$
is a cyclic matrix.
\end{prop}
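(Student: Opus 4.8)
The plan is to exhibit the columns of $R$ explicitly as a Jordan basis of $C(a)$ and then to check that $R$ is nonsingular, so that the similarity (\ref{eq:compas2b}) follows at once. The shape of $J$ in (\ref{eq:compas2c}) comes essentially for free: Proposition \ref{prop:cyc2} already gives that $C(a)$ is cyclic with $\chi_{C(a)}(s)=a(s)$, and then criterion (3) of Proposition \ref{prop:cyc1} forces exactly one Jordan block per distinct root $\zeta_j$, necessarily of size equal to its algebraic multiplicity $n_j$ since the characteristic polynomial is $\prod_j(s-\zeta_j)^{n_j}$. Thus the real content to be verified is that the specific matrix $R$ of (\ref{eq:compas2d})--(\ref{eq:compas2f}) realizes this similarity.

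First I would record the action of the companion matrix on the vector polynomial $Y(s)=\left[1,s,\ldots,s^{n-1}\right]^{\mathrm{T}}$. A one-line check against (\ref{eq:compas1b}) gives
\begin{equation}
C(a)\,Y(s)=s\,Y(s)-a(s)\,e_{n},\nonumber
\end{equation}
where $e_{n}$ is the last standard basis vector: the top $n-1$ rows shift $Y$ upward, and the bottom row reproduces $s^{n}-a(s)$. Differentiating $m$ times in $s$ and using that the factor $s$ is linear (so Leibniz leaves only two terms) yields
\begin{equation}
C(a)\,Y^{(m)}(s)=s\,Y^{(m)}(s)+m\,Y^{(m-1)}(s)-a^{(m)}(s)\,e_{n}.\nonumber
\end{equation}
Evaluating at $s=\zeta_{j}$ and using that $a^{(m)}(\zeta_{j})=0$ for $0\le m\le n_{j}-1$ (as $\zeta_j$ is a root of multiplicity $n_j$) kills the inhomogeneous term. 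Hence the normalized vectors $v^{(j)}_{k}:=Y^{(k-1)}(\zeta_{j})/(k-1)!$ satisfy $(C(a)-\zeta_{j}\mathbb{I})v^{(j)}_{1}=0$ and $(C(a)-\zeta_{j}\mathbb{I})v^{(j)}_{k}=v^{(j)}_{k-1}$ for $2\le k\le n_{j}$, that is, they form a Jordan chain.

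Next I would identify these chains with the columns of $R_{j}$. The $i$-th component of $v^{(j)}_{k}$ is
\begin{equation}
\frac{1}{(k-1)!}\left.\frac{d^{k-1}}{ds^{k-1}}s^{i-1}\right|_{s=\zeta_{j}}=\binom{i-1}{k-1}\zeta_{j}^{\,i-k},\nonumber
\end{equation}
which is precisely the $(i,k)$ entry displayed in (\ref{eq:compas2f}). Therefore $C(a)R_{j}=R_{j}J_{n_{j}}(\zeta_{j})$ for each $j$, and stacking these block relations gives $C(a)R=RJ$ with $J$ and $R$ as in (\ref{eq:compas2c})--(\ref{eq:compas2d}).

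The remaining and main point is the invertibility of $R$, which upgrades $C(a)R=RJ$ to the genuine similarity (\ref{eq:compas2b}). I would argue this from the block structure rather than by computing a confluent Vandermonde determinant directly: within a fixed block the vectors $v^{(j)}_{1},\ldots,v^{(j)}_{n_{j}}$ are independent because a Jordan chain erected on the nonzero eigenvector $v^{(j)}_{1}=Y(\zeta_{j})$ is always linearly independent, while vectors from different blocks lie in the distinct generalized eigenspaces $M_{\zeta_{j}}$, which are linearly independent and sum to $\mathbb{C}^{n}$ by Proposition \ref{prop:gen-eig}. Consequently the $n$ columns of $R$ form a basis, $R$ is nonsingular, and $C(a)=RJR^{-1}$ with $J$ the Jordan form, whence cyclicity is re-read off as one block per eigenvalue. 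This independence step is the one demanding genuine care, since it is exactly what certifies that the confluent Vandermonde $R$ does not degenerate when roots are repeated; invoking the generalized-eigenspace decomposition of Proposition \ref{prop:gen-eig} is what removes this obstacle cleanly.
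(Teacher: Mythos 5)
Your proof is correct. Note that the paper itself does not prove Proposition \ref{prop:cycJ}: it is stated as a summary of known results with citations to Bernstein, Lancaster--Tismenetsky and Meyer, and the only proof-relevant remark added afterwards is equation (\ref{eq:compas3a}), which identifies the columns of $R_{j}$ as the scaled derivatives $\frac{1}{m!}\partial_{s}^{m}Y\left(\zeta_{j}\right)$. Your argument supplies exactly the proof lying behind those citations, and it is consistent with (\ref{eq:compas3a}): the identity $C\left(a\right)Y\left(s\right)=sY\left(s\right)-a\left(s\right)e_{n}$, its $m$-fold differentiated form via Leibniz, and the vanishing of $a^{\left(m\right)}\left(\zeta_{j}\right)$ for $0\leq m\leq n_{j}-1$ turn the columns of $R_{j}$ into Jordan chains, yielding $C\left(a\right)R_{j}=R_{j}J_{n_{j}}\left(\zeta_{j}\right)$ blockwise and hence $C\left(a\right)R=RJ$; invertibility of $R$ then follows from the independence of a Jordan chain erected on a nonzero eigenvector (here $Y\left(\zeta_{j}\right)\neq0$ since its first entry is $1$) combined with the linear independence of the generalized eigenspaces from Proposition \ref{prop:gen-eig}, and the count $n_{1}+\cdots+n_{r}=n$ makes these $n$ independent columns a basis. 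Two small observations: your opening appeal to Propositions \ref{prop:cyc2} and \ref{prop:cyc1} is actually dispensable, since once $C\left(a\right)R=RJ$ with $R$ invertible is in hand, $J$ is by definition the Jordan canonical form and cyclicity is read off as one block per distinct root, just as you note at the end; and your choice to prove nonsingularity of $R$ structurally, via the generalized-eigenspace decomposition rather than by evaluating a confluent Vandermonde determinant, is the cleaner route and matches the treatment in the texts the paper cites.
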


As to the structure of matrix $R_{j}$ in equation (\ref{eq:compas2f}),
if we denote by $Y\left(\zeta_{j}\right)$ its first column then it
can be expressed as follows \cite[2.11]{LanTsi}:
\begin{equation}
R_{j}=\left[Y^{\left(0\right)}|Y^{\left(1\right)}|\cdots|Y^{\left(n_{j}-1\right)}\right],\quad Y^{\left(m\right)}=\frac{1}{m!}\partial_{s_{j}}^{m}Y\left(\zeta_{j}\right),\quad0\leq m\leq n_{j}-1.\label{eq:compas3a}
\end{equation}
In the case when all eigenvalues of a cyclic matrix are distinct then
the generalized Vandermonde matrix turns into the standard Vandermonde
matrix
\begin{equation}
V=\left[\begin{array}{rrcr}
1 & 1 & \cdots & 1\\
\zeta_{1} & \zeta_{2} & \cdots & \zeta_{n}\\
\vdots & \vdots & \ddots & \vdots\\
\zeta_{1}^{n-2} & \zeta_{2}^{n-2} & \cdots & \zeta_{n}^{n-2}\\
\zeta_{1}^{n-1} & \zeta_{2}^{n-1} & \cdots & \zeta_{n}^{n-1}
\end{array}\right].\label{eq:compas3c}
\end{equation}

\section{Appendix C: Matrix polynomials\label{sec:mat-poly}}

An important incentive for considering matrix polynomials is that
they are relevant to the spectral theory of the differential equations
of the order higher than 1, particularly the Euler-Lagrange equations
which are the second-order differential equations in time. We provide
here selected elements of the theory of matrix polynomials following
mostly to \cite[II.7, II.8]{GoLaRo}, \cite[9]{Baum}. General matrix
polynomial eigenvalue problem reads
\begin{equation}
A\left(s\right)x=0,\quad A\left(s\right)=\sum_{j=0}^{\nu}A_{j}s^{j},\quad x\neq0,\label{eq:Aux1a}
\end{equation}
where $s$ is complex number, $A_{k}$ are constant $m\times m$ matrices
and $x\in\mathbb{C}^{m}$ is $m$-dimensional column-vector. We refer
to problem (\ref{eq:Aux1a}) of funding complex-valued $s$ and non-zero
vector $x\in\mathbb{C}^{m}$ as polynomial eigenvalue problem. 

If a pair of a complex $s$ and non-zero vector $x$ solves problem
(\ref{eq:Aux1a}) we refer to $s$ as an \emph{eigenvalue} or as a\emph{
characteristic value} and to $x$ as the corresponding to $s$ \emph{eigenvector}.
Evidently the characteristic values of problem (\ref{eq:Aux1a}) can
be found from polynomial \emph{characteristic equation}
\begin{equation}
\det\left\{ A\left(s\right)\right\} =0.\label{eq:Aux1b}
\end{equation}
We refer to matrix polynomial $A\left(s\right)$ as \emph{regular}
if $\det\left\{ A\left(s\right)\right\} $ is not identically zero.
We denote by $m\left(s_{0}\right)$ the \emph{multiplicity} (called
also \emph{algebraic multiplicity}) of eigenvalue $s_{0}$ as a root
of polynomial $\det\left\{ A\left(s\right)\right\} $. In contrast,
the \emph{geometric multiplicity} of eigenvalue $s_{0}$ is defined
as $\dim\left\{ \ker\left\{ A\left(s_{0}\right)\right\} \right\} $,
where $\ker\left\{ A\right\} $ defined for any square matrix $A$
stands for the subspace of solutions $x$ to equation $Ax=0$. Evidently,
the geometric multiplicity of eigenvalue does not exceed its algebraic
one, see Corollary \ref{cor:dim-ker}. 

It turns out that the matrix polynomial eigenvalue problem (\ref{eq:Aux1a})
can be always recast as the standard ``linear'' eigenvalue problem,
namely
\begin{equation}
\left(s\mathsf{B}-\mathsf{A}\right)\mathsf{x}=0,\label{eq:Aux1c}
\end{equation}
where $m\nu\times m\nu$ matrices $\mathsf{A}$ and $\mathsf{B}$
are defined by
\begin{gather}
\mathsf{B}=\left[\begin{array}{ccccc}
\mathbb{I} & 0 & \cdots & 0 & 0\\
0 & \mathbb{I} & 0 & \cdots & 0\\
0 & 0 & \ddots & \cdots & \vdots\\
\vdots & \vdots & \ddots & \mathbb{I} & 0\\
0 & 0 & \cdots & 0 & A_{\nu}
\end{array}\right],\quad\mathsf{A}=\left[\begin{array}{ccccc}
0 & \mathbb{I} & \cdots & 0 & 0\\
0 & 0 & \mathbb{I} & \cdots & 0\\
0 & 0 & 0 & \cdots & \vdots\\
\vdots & \vdots & \ddots & 0 & \mathbb{I}\\
-A_{0} & -A_{1} & \cdots & -A_{\nu-2} & -A_{\nu-1}
\end{array}\right],\label{eq:CBA1b}
\end{gather}
with $\mathbb{I}$ being $m\times m$ identity matrix. Matrix $\mathsf{A}$,
particularly in monic case, is often referred to as \emph{companion
matrix}. In the case of \emph{monic polynomial} $A\left(\lambda\right)$,
when $A_{\nu}=\mathbb{I}$ is $m\times m$ identity matrix, matrix
$\mathsf{B}=\mathsf{I}$ is $m\nu\times m\nu$ identity matrix. The
reduction of original polynomial problem (\ref{eq:Aux1a}) to an equivalent
linear problem (\ref{eq:Aux1c}) is called \emph{linearization}.

The linearization is not unique, and one way to accomplish is by introducing
the so-called known ``\emph{companion polynomia}l'' which is $m\nu\times m\nu$
matrix
\begin{gather}
\mathsf{C}_{A}\left(s\right)=s\mathsf{B}-\mathsf{A}=\left[\begin{array}{ccccc}
s\mathbb{I} & -\mathbb{I} & \cdots & 0 & 0\\
0 & s\mathbb{I} & -\mathbb{I} & \cdots & 0\\
0 & 0 & \ddots & \cdots & \vdots\\
\vdots & \vdots & \vdots & s\mathbb{I} & -\mathbb{I}\\
A_{0} & A_{1} & \cdots & A_{\nu-2} & sA_{\nu}+A_{\nu-1}
\end{array}\right].\label{eq:CBA1a}
\end{gather}
Notice that in the case of the EL equations the linearization can
be accomplished by the relevant Hamilton equations.

To demonstrate the equivalency between the eigenvalue problems for
$m\nu\times m\nu$ companion polynomial $\mathsf{C}_{A}\left(s\right)$
and the original $m\times m$ matrix polynomial $A\left(s\right)$
we introduce two $m\nu\times m\nu$ matrix polynomials $\mathsf{E}\left(s\right)$
and $\mathsf{F}\left(s\right)$. Namely,
\begin{gather}
\mathsf{E}\left(s\right)=\left[\begin{array}{ccccc}
E_{1}\left(s\right) & E_{2}\left(s\right) & \cdots & E_{\nu-1}\left(s\right) & \mathbb{I}\\
-\mathbb{I} & 0 & 0 & \cdots & 0\\
0 & -\mathbb{I} & \ddots & \cdots & \vdots\\
\vdots & \vdots & \ddots & 0 & 0\\
0 & 0 & \cdots & -\mathbb{I} & 0
\end{array}\right],\label{eq:CBA1c}\\
\det\left\{ \mathsf{E}\left(s\right)\right\} =1,\nonumber 
\end{gather}
where $m\times m$ matrix polynomials $E_{j}\left(s\right)$ are defined
by the following recursive formulas
\begin{gather}
E_{\nu}\left(s\right)=A_{\nu},\quad E_{j-1}\left(s\right)=A_{j-1}+sE_{j}\left(s\right),\quad j=\nu,\ldots,2.\label{eq:CBA1d}
\end{gather}
Matrix polynomial $\mathsf{F}\left(s\right)$ is defined by
\begin{gather}
\mathsf{F}\left(s\right)=\left[\begin{array}{ccccc}
\mathbb{I} & 0 & \cdots & 0 & 0\\
-s\mathbb{I} & \mathbb{I} & 0 & \cdots & 0\\
0 & -s\mathbb{I} & \ddots & \cdots & \vdots\\
\vdots & \vdots & \ddots & \mathbb{I} & 0\\
0 & 0 & \cdots & -s\mathbb{I} & \mathbb{I}
\end{array}\right],\quad\det\left\{ \mathsf{F}\left(s\right)\right\} =1.\label{eq:CBA1e}
\end{gather}
Notice, that both matrix polynomials $\mathsf{E}\left(s\right)$ and
$\mathsf{F}\left(s\right)$ have constant determinants readily implying
that their inverses $\mathsf{E}^{-1}\left(s\right)$ and $\mathsf{F}^{-1}\left(s\right)$
are also matrix polynomials. Then it is straightforward to verify
that
\begin{gather}
\mathsf{E}\left(s\right)\mathsf{C}_{A}\left(s\right)\mathsf{F}^{-1}\left(s\right)=\mathsf{E}\left(s\right)\left(s\mathsf{B}-\mathsf{A}\right)\mathsf{F}^{-1}\left(s\right)=\left[\begin{array}{ccccc}
A\left(s\right) & 0 & \cdots & 0 & 0\\
0 & \mathbb{I} & 0 & \cdots & 0\\
0 & 0 & \ddots & \cdots & \vdots\\
\vdots & \vdots & \ddots & \mathbb{I} & 0\\
0 & 0 & \cdots & 0 & \mathbb{I}
\end{array}\right].\label{eq:CBA1f}
\end{gather}
The identity (\ref{eq:CBA1f}) where matrix polynomials $\mathsf{E}\left(s\right)$
and $\mathsf{F}\left(s\right)$ have constant determinants can be
viewed as the definition of equivalency between matrix polynomial
$A\left(s\right)$ and its companion polynomial $\mathsf{C}_{A}\left(s\right)$. 

Let us take a look at the eigenvalue problem for eigenvalue $s$ and
eigenvector $\mathsf{x}\in\mathbb{C}^{m\nu}$ associated with companion
polynomial $\mathsf{C}_{A}\left(s\right)$, that is
\begin{gather}
\left(s\mathsf{B}-\mathsf{A}\right)\mathsf{x}=0,\quad\mathsf{x}=\left[\begin{array}{c}
x_{0}\\
x_{1}\\
x_{2}\\
\vdots\\
x_{\nu-1}
\end{array}\right]\in\mathbb{C}^{m\nu},\quad x_{j}\in\mathbb{C}^{m},\quad0\leq j\leq\nu-1,\label{eq:CBAx1a}
\end{gather}
where
\begin{equation}
\left(s\mathsf{B}-\mathsf{A}\right)\mathsf{x}=\left[\begin{array}{c}
sx_{0}-x_{1}\\
sx_{1}-x_{2}\\
\vdots\\
sx_{\nu-2}-x_{\nu-1}\\
\sum_{j=0}^{\nu-2}A_{j}x_{j}+\left(sA_{\nu}+A_{\nu-1}\right)x_{\nu-1}
\end{array}\right].\label{eq:CBAx1b}
\end{equation}
With equations (\ref{eq:CBAx1a}) and (\ref{eq:CBAx1b}) in mind we
introduce the following vector polynomial
\begin{equation}
\mathsf{x}_{s}=\left[\begin{array}{c}
x_{0}\\
sx_{0}\\
\vdots\\
s^{\nu-2}x_{0}\\
s^{\nu-1}x_{0}
\end{array}\right],\quad x_{0}\in\mathbb{C}^{m}.\label{eq:CBAx1c}
\end{equation}
Not accidentally, the components of the vector $\mathsf{x}_{s}$ in
its representation (\ref{eq:CBAx1c}) are in evident relation with
the derivatives $\partial_{t}^{j}\left(x_{0}\mathrm{e}^{st}\right)=s^{j}x_{0}\mathrm{e}^{st}$.
That is just another sign of the intimate relations between the matrix
polynomial theory and the theory of systems of ordinary differential
equations, see Section \ref{sec:dif-jord}. 
\begin{thm}[eigenvectors]
\label{thm:matpol-eigvec} Let $A\left(s\right)$ as in equations
(\ref{eq:Aux1a}) be regular, that $\det\left\{ A\left(s\right)\right\} $
is not identically zero, and let $m\nu\times m\nu$ matrices $\mathsf{A}$
and $\mathsf{B}$ be defined by equations (\ref{eq:Aux1b}). Then
the following identities hold
\begin{equation}
\left(s\mathsf{B}-\mathsf{A}\right)\mathsf{x}_{s}=\left[\begin{array}{c}
0\\
0\\
\vdots\\
0\\
A\left(s\right)x_{0}
\end{array}\right],\;\mathsf{x}_{s}=\left[\begin{array}{c}
x_{0}\\
sx_{0}\\
\vdots\\
s^{\nu-2}x_{0}\\
s^{\nu-1}x_{0}
\end{array}\right],\label{eq:CBAx1d}
\end{equation}
\begin{gather}
\det\left\{ A\left(s\right)\right\} =\det\left\{ s\mathsf{B}-\mathsf{A}\right\} ,\quad\det\left\{ \mathsf{B}\right\} =\det\left\{ A_{\nu}\right\} ,\label{eq:CBAx1g}
\end{gather}
where $\det\left\{ A\left(s\right)\right\} =\det\left\{ s\mathsf{B}-\mathsf{A}\right\} $
is a polynomial of the degree $m\nu$ if $\det\left\{ \mathsf{B}\right\} =\det\left\{ A_{\nu}\right\} \neq0$.
There is one-to-one correspondence between solutions of equations
$A\left(s\right)x=0$ and $\left(s\mathsf{B}-\mathsf{A}\right)\mathsf{x}=0$.
Namely, a pair $s,\:\mathsf{x}$ solves eigenvalue problem $\left(s\mathsf{B}-\mathsf{A}\right)\mathsf{x}=0$
if and only if the following equalities hold
\begin{gather}
\mathsf{x}=\mathsf{x}_{s}=\left[\begin{array}{c}
x_{0}\\
sx_{0}\\
\vdots\\
s^{\nu-2}x_{0}\\
s^{\nu-1}x_{0}
\end{array}\right],\quad A\left(s\right)x_{0}=0,\quad x_{0}\neq0;\quad\det\left\{ A\left(s\right)\right\} =0.\label{eq:CBAx1e}
\end{gather}
\end{thm}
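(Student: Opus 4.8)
The plan is to prove the three assertions in the order they are stated, since each feeds the next. First I would establish the identity (\ref{eq:CBAx1d}) by direct substitution. Plugging the special vector $\mathsf{x}_{s}$ from (\ref{eq:CBAx1c}), whose $j$-th block is $s^{j}x_{0}$, into the explicit expression (\ref{eq:CBAx1b}) for $\left(s\mathsf{B}-\mathsf{A}\right)\mathsf{x}$, the first $\nu-1$ block rows become $s\cdot s^{j}x_{0}-s^{j+1}x_{0}=0$, while the last block row collapses by telescoping into $\sum_{j=0}^{\nu-2}A_{j}s^{j}x_{0}+\left(sA_{\nu}+A_{\nu-1}\right)s^{\nu-1}x_{0}=\sum_{j=0}^{\nu}A_{j}s^{j}x_{0}=A\left(s\right)x_{0}$. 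This step is entirely mechanical and yields (\ref{eq:CBAx1d}) at once.

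Next I would derive the determinant identities (\ref{eq:CBAx1g}). The key input is the polynomial equivalence (\ref{eq:CBA1f}), which lets me write $s\mathsf{B}-\mathsf{A}=\mathsf{E}^{-1}\left(s\right)\,\mathrm{diag}\left(A\left(s\right),\mathbb{I},\ldots,\mathbb{I}\right)\mathsf{F}\left(s\right)$. Since the matrix polynomials $\mathsf{E}\left(s\right)$ and $\mathsf{F}\left(s\right)$ defined in (\ref{eq:CBA1c}) and (\ref{eq:CBA1e}) have constant determinant equal to $1$, taking determinants gives $\det\left\{s\mathsf{B}-\mathsf{A}\right\}=\det\left\{A\left(s\right)\right\}$. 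The block-diagonal form of $\mathsf{B}$ in (\ref{eq:CBA1b}) immediately gives $\det\left\{\mathsf{B}\right\}=\det\left\{A_{\nu}\right\}$. For the degree claim I would observe that in the Leibniz expansion of $\det\left\{A\left(s\right)\right\}$ the only source of the top power $s^{m\nu}$ is the product of the leading coefficients $\left(A_{\nu}\right)_{p,\sigma(p)}$ across the $m$ factors, so the coefficient of $s^{m\nu}$ is exactly $\det\left\{A_{\nu}\right\}$; hence the degree equals $m\nu$ precisely when $\det\left\{A_{\nu}\right\}\neq0$.

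Finally I would prove the one-to-one correspondence (\ref{eq:CBAx1e}). For the forward direction, suppose $\left(s,\mathsf{x}\right)$ solves $\left(s\mathsf{B}-\mathsf{A}\right)\mathsf{x}=0$ with $\mathsf{x}=\left[x_{0},\ldots,x_{\nu-1}\right]^{\mathrm{T}}$. Reading off the first $\nu-1$ block rows of (\ref{eq:CBAx1b}) forces $x_{j+1}=sx_{j}$, so by induction $x_{j}=s^{j}x_{0}$ and hence $\mathsf{x}=\mathsf{x}_{s}$; the last block row then reduces, by the computation of the first step, to $A\left(s\right)x_{0}=0$. Moreover $x_{0}\neq0$, since $x_{0}=0$ would force $\mathsf{x}=0$. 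Conversely, if $A\left(s\right)x_{0}=0$ with $x_{0}\neq0$, then setting $\mathsf{x}=\mathsf{x}_{s}$ and invoking (\ref{eq:CBAx1d}) gives $\left(s\mathsf{B}-\mathsf{A}\right)\mathsf{x}_{s}=0$, with $\mathsf{x}_{s}\neq0$ because its leading block is $x_{0}$. The assignments $\mathsf{x}\mapsto x_{0}$ and $x_{0}\mapsto\mathsf{x}_{s}$ are mutually inverse, which establishes the bijection.

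I do not expect a genuine obstacle here: the whole argument is bookkeeping resting on the factorization already recorded in (\ref{eq:CBA1f}). The only step demanding a little care is the degree assertion in (\ref{eq:CBAx1g}), where one must verify that no cancellation destroys the leading term $\det\left\{A_{\nu}\right\}s^{m\nu}$; this follows from the uniqueness of that term as the source of $s^{m\nu}$ in the determinant expansion, as noted above.
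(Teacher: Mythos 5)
Your proposal is correct and follows essentially the same route as the paper: identity (\ref{eq:CBAx1d}) by direct block-row substitution, the determinant identities via the unimodular equivalence (\ref{eq:CBA1f}), and the bijection by reading off the first $\nu-1$ block rows to force $\mathsf{x}=\mathsf{x}_{s}$ and then invoking (\ref{eq:CBAx1d}). Your treatment of the degree claim is in fact slightly more careful than the paper's (you identify $\det\left\{ A_{\nu}\right\}$ as the coefficient of $s^{m\nu}$ via the Leibniz expansion, whereas the paper simply asserts the degree from the matrix dimensions), but this is a refinement of the same argument rather than a different approach.
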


\begin{proof}
Polynomial vector identity (\ref{eq:CBAx1d}) readily follows from
equations (\ref{eq:CBAx1b}) and (\ref{eq:CBAx1c}). Identities (\ref{eq:CBAx1g})
for the determinants follow straightforwardly from equations (\ref{eq:CBAx1c}),
(\ref{eq:CBAx1e}) and (\ref{eq:CBA1f}). If $\det\left\{ \mathsf{B}\right\} =\det\left\{ A_{\nu}\right\} \neq0$
then the degree of the polynomial $\det\left\{ s\mathsf{B}-\mathsf{A}\right\} $
has to be $m\nu$ since $\mathsf{A}$ and $\mathsf{B}$ are $m\nu\times m\nu$
matrices.

Suppose that equations (\ref{eq:CBAx1e}) hold. Then combining them
with proven identity (\ref{eq:CBAx1d}) we get $\left(s\mathsf{B}-\mathsf{A}\right)\mathsf{x}_{s}=0$
proving that expressions (\ref{eq:CBAx1e}) define an eigenvalue $s$
and an eigenvector $\mathsf{x}=\mathsf{x}_{s}$.

Suppose now that $\left(s\mathsf{B}-\mathsf{A}\right)\mathsf{x}=0$
where $\mathsf{x}\neq0$. Combing that with equations (\ref{eq:CBAx1b})
we obtain
\begin{gather}
x_{1}=sx_{0},\quad x_{2}=sx_{1}=s^{2}x_{0},\cdots,\quad x_{\nu-1}=s^{\nu-1}x_{0},\label{eq:CBAx2a}
\end{gather}
implying that
\begin{equation}
\mathsf{x}=\mathsf{x}_{s}=\left[\begin{array}{c}
x_{0}\\
sx_{0}\\
\vdots\\
s^{\nu-2}x_{0}\\
s^{\nu-1}x_{0}
\end{array}\right],\quad x_{0}\neq0,\label{eq:CBAx2b}
\end{equation}
 and 
\begin{equation}
\sum_{j=0}^{\nu-2}A_{j}x_{j}+\left(sA_{\nu}+A_{\nu-1}\right)x_{\nu-1}=A\left(s\right)x_{0}.\label{eq:CBAx2c}
\end{equation}
Using equations (\ref{eq:CBAx2b}) and identity (\ref{eq:CBAx1d})
we obtain
\begin{equation}
0=\left(s\mathsf{B}-\mathsf{A}\right)\mathsf{x}=\left(s\mathsf{B}-\mathsf{A}\right)\mathsf{x}_{s}=\left[\begin{array}{c}
0\\
0\\
\vdots\\
0\\
A\left(s\right)x_{0}
\end{array}\right].\label{eq:CBAx2d}
\end{equation}
 Equations (\ref{eq:CBAx2d}) readily imply $A\left(s\right)x_{0}=0$
and $\det\left\{ A\left(s\right)\right\} =0$ since $x_{0}\neq0$.
That completes the proof.
\end{proof}
\begin{rem}[characteristic polynomial degree]
\label{rem:char-pol-deg} Notice that according to Theorem \ref{thm:matpol-eigvec}
the characteristic polynomial $\det\left\{ A\left(s\right)\right\} $
for $m\times m$ matrix polynomial $A\left(s\right)$ has the degree
$m\nu$, whereas in linear case $s\mathbb{I}-A_{0}$ for $m\times m$
identity matrix $\mathbb{I}$ and $m\times m$ matrix $A_{0}$ the
characteristic polynomial $\det\left\{ s\mathbb{I}-A_{0}\right\} $
is of the degree $m$. This can be explained by observing that in
the non-linear case of $m\times m$ matrix polynomial $A\left(s\right)$
we are dealing effectively with many more $m\times m$ matrices $A$
than just a single matrix $A_{0}$.
\end{rem}

Another problem of our particular interest related to the theory of
matrix polynomials is eigenvalues and eigenvectors degeneracy and
consequently the existence of non-trivial Jordan blocks, that is Jordan
blocks of dimensions higher or equal to 2. The general theory addresses
this problem by introducing so-called ``Jordan chains'' which are
intimately related to the theory of system of differential equations
expressed as $A\left(\partial_{t}\right)x\left(t\right)=0$ and their
solutions of the form $x\left(t\right)=p\left(t\right)e^{st}$ where
$p\left(t\right)$ is a vector polynomial, see Section \ref{sec:dif-jord}
and \cite[I, II]{GoLaRo}, \cite[9]{Baum}. Avoiding the details of
Jordan chains developments we simply notice that an important to us
point of Theorem \ref{thm:matpol-eigvec} is that there is one-to-one
correspondence between solutions of equations $A\left(s\right)x=0$
and $\left(s\mathsf{B}-\mathsf{A}\right)\mathsf{x}=0$, and it has
the following immediate implication.
\begin{cor}[equality of the dimensions of eigenspaces]
\label{cor:dim-ker} Under the conditions of Theorem \ref{thm:matpol-eigvec}
for any eigenvalue $s_{0}$, that is $\det\left\{ A\left(s_{0}\right)\right\} =0$,
we have
\begin{equation}
\dim\left\{ \ker\left\{ s_{0}\mathsf{B}-\mathsf{A}\right\} \right\} =\dim\left\{ \ker\left\{ A\left(s_{0}\right)\right\} \right\} .\label{eq:CBAx2e}
\end{equation}
In other words, the geometric multiplicities of the eigenvalue $s_{0}$
associated with matrices $A\left(s_{0}\right)$ and $s_{0}\mathsf{B}-\mathsf{A}$
are equal. In view of identity (\ref{eq:CBAx2e}) the following inequality
holds for the (algebraic) multiplicity $m\left(s_{0}\right)$
\begin{equation}
m\left(s_{0}\right)\geq\dim\left\{ \ker\left\{ A\left(s_{0}\right)\right\} \right\} .\label{eq:CBAx2f}
\end{equation}
\end{cor}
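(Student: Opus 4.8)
The plan is to prove the dimension equality (\ref{eq:CBAx2e}) by exhibiting an explicit linear isomorphism between the two null spaces, and then to read off the inequality (\ref{eq:CBAx2f}) from the standard fact that the geometric multiplicity of an eigenvalue never exceeds its algebraic multiplicity. Essentially all the structural work has already been carried out inside the proof of Theorem \ref{thm:matpol-eigvec}; the corollary is a repackaging of the one-to-one correspondence established there into the language of dimensions of eigenspaces.

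First I would introduce the linear map $\Phi\colon\mathbb{C}^{m}\to\mathbb{C}^{m\nu}$ defined by $\Phi\left(x_{0}\right)=\mathsf{x}_{s_{0}}=\left[x_{0},\,s_{0}x_{0},\,\ldots,\,s_{0}^{\nu-1}x_{0}\right]^{\mathrm{T}}$, which for the fixed eigenvalue $s_{0}$ is manifestly linear in $x_{0}$ and injective, since its first $m\times m$ block returns $x_{0}$. I would then verify that $\Phi$ carries $\ker\left\{A\left(s_{0}\right)\right\}$ bijectively onto $\ker\left\{s_{0}\mathsf{B}-\mathsf{A}\right\}$. For the forward inclusion, identity (\ref{eq:CBAx1d}) gives $\left(s_{0}\mathsf{B}-\mathsf{A}\right)\Phi\left(x_{0}\right)=\left[0,\ldots,0,\,A\left(s_{0}\right)x_{0}\right]^{\mathrm{T}}$, so $A\left(s_{0}\right)x_{0}=0$ forces $\Phi\left(x_{0}\right)\in\ker\left\{s_{0}\mathsf{B}-\mathsf{A}\right\}$. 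For surjectivity I would take an arbitrary $\mathsf{x}=\left[x_{0},\ldots,x_{\nu-1}\right]^{\mathrm{T}}$ in $\ker\left\{s_{0}\mathsf{B}-\mathsf{A}\right\}$; the first $\nu-1$ block-rows of $\left(s_{0}\mathsf{B}-\mathsf{A}\right)\mathsf{x}=0$ are exactly equations (\ref{eq:CBAx2a}) and force $x_{j}=s_{0}^{j}x_{0}$, whence $\mathsf{x}=\Phi\left(x_{0}\right)$, while the last block-row reduces by (\ref{eq:CBAx2c})--(\ref{eq:CBAx2d}) to $A\left(s_{0}\right)x_{0}=0$. Thus $\Phi$ restricts to an injective linear map from $\ker\left\{A\left(s_{0}\right)\right\}$ onto $\ker\left\{s_{0}\mathsf{B}-\mathsf{A}\right\}$, and the two spaces have equal dimension, which is (\ref{eq:CBAx2e}).

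For the inequality (\ref{eq:CBAx2f}) I would invoke the determinant identity $\det\left\{A\left(s\right)\right\}=\det\left\{s\mathsf{B}-\mathsf{A}\right\}$ from (\ref{eq:CBAx1g}), so that $m\left(s_{0}\right)$ is precisely the order of vanishing of $\det\left\{s\mathsf{B}-\mathsf{A}\right\}$ at $s_{0}$, i.e. the algebraic multiplicity of $s_{0}$ for the pencil, while by (\ref{eq:CBAx2e}) the common value $\dim\left\{\ker\left\{A\left(s_{0}\right)\right\}\right\}=\dim\left\{\ker\left\{s_{0}\mathsf{B}-\mathsf{A}\right\}\right\}$ is its geometric multiplicity. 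In the monic case relevant to this paper, $A_{\nu}=\mathbb{I}$ gives $\mathsf{B}=\mathsf{I}$, so $s_{0}\mathsf{B}-\mathsf{A}=s_{0}\mathbb{I}-\mathsf{A}$ is an ordinary resolvent and the inequality is the textbook statement that the geometric multiplicity of an eigenvalue of a single matrix is at most its algebraic multiplicity. The only genuine subtlety, and the step I would treat most carefully, is the general regular-pencil case in which $\mathsf{B}$ may be singular: there one cannot reduce to a single matrix, and the inequality must be justified through the Smith form of $s\mathsf{B}-\mathsf{A}$, noting that each invariant factor divisible by $\left(s-s_{0}\right)$ contributes at least one to the order of $\det\left\{s\mathsf{B}-\mathsf{A}\right\}$ at $s_{0}$ while accounting for exactly one dimension of the null space.
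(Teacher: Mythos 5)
Your proof is correct and follows essentially the same route as the paper: the paper obtains the corollary as an immediate consequence of the one-to-one correspondence between solutions of $A\left(s_{0}\right)x=0$ and $\left(s_{0}\mathsf{B}-\mathsf{A}\right)\mathsf{x}=0$ established in Theorem \ref{thm:matpol-eigvec}, which is precisely the linear isomorphism $\Phi$ you make explicit, and it then reads off (\ref{eq:CBAx2f}) from the standard bound of geometric multiplicity by algebraic multiplicity. Your Smith-form treatment of the case of singular $\mathsf{B}$ is a careful refinement of a step the paper treats as evident, but it does not change the substance of the argument.
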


The next statement shows that if the geometric multiplicity of an
eigenvalue is strictly less than its algebraic one than there exist
non-trivial Jordan blocks, that is Jordan blocks of dimensions higher
or equal to 2.
\begin{thm}[non-trivial Jordan block]
\label{thm:Jord-block} Assuming notations introduced in Theorem
\ref{thm:matpol-eigvec} let us suppose that the multiplicity $m\left(s_{0}\right)$
of eigenvalue $s_{0}$ satisfies
\begin{equation}
m\left(s_{0}\right)>\dim\left\{ \ker\left\{ A\left(s_{0}\right)\right\} \right\} .\label{eq:CBAAx3a}
\end{equation}
Then the Jordan canonical form of companion polynomial $\mathsf{C}_{A}\left(s\right)=s\mathsf{B}-\mathsf{A}$
has a least one nontrivial Jordan block of the dimension exceeding
2.

In particular, if 
\begin{equation}
\dim\left\{ \ker\left\{ s_{0}\mathsf{B}-\mathsf{A}\right\} \right\} =\dim\left\{ \ker\left\{ A\left(s_{0}\right)\right\} \right\} =1,\label{eq:CBAAx3b}
\end{equation}
 and $m\left(s_{0}\right)\geq2$ then the Jordan canonical form of
companion polynomial $\mathsf{C}_{A}\left(s\right)=s\mathsf{B}-\mathsf{A}$
has exactly one Jordan block associated with eigenvalue $s_{0}$ and
its dimension is $m\left(s_{0}\right)$.
\end{thm}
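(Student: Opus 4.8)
The plan is to reduce the statement to two classical facts about the Jordan structure of a single matrix: for any eigenvalue the number of associated Jordan blocks equals its geometric multiplicity, while the sum of the sizes of those blocks equals its algebraic multiplicity. The two conclusions of Theorem \ref{thm:matpol-eigvec} and of Corollary \ref{cor:dim-ker} supply exactly the bridge needed to transfer the spectral data of the $m\times m$ polynomial $A\left(s_{0}\right)$ to the linearized object $s\mathsf{B}-\mathsf{A}$, after which a one-line counting argument yields both assertions.

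First I would pin down the two multiplicities on the linearized side. By the determinant identity $\det\left\{ A\left(s\right)\right\} =\det\left\{ s\mathsf{B}-\mathsf{A}\right\} $ of Theorem \ref{thm:matpol-eigvec}, the order of $s_{0}$ as a root of the characteristic polynomial of the linearization is precisely $m\left(s_{0}\right)$, so the algebraic multiplicity of $s_{0}$ for the companion object equals $m\left(s_{0}\right)$. By Corollary \ref{cor:dim-ker} the geometric multiplicity satisfies
\[
g\left(s_{0}\right):=\dim\left\{ \ker\left\{ s_{0}\mathsf{B}-\mathsf{A}\right\} \right\} =\dim\left\{ \ker\left\{ A\left(s_{0}\right)\right\} \right\} .
\]
Next I would invoke the two structural facts recorded in the Jordan canonical form (Proposition \ref{prop:jor-can}): the number of Jordan blocks attached to $s_{0}$ equals $g\left(s_{0}\right)$, and the sizes of those blocks sum to $m\left(s_{0}\right)$. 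For the first assertion, the hypothesis $m\left(s_{0}\right)>\dim\left\{ \ker\left\{ A\left(s_{0}\right)\right\} \right\} =g\left(s_{0}\right)$ says the total block size strictly exceeds the number of blocks; were every block of size one, the total would be exactly $g\left(s_{0}\right)$, a contradiction, so at least one block has size at least $2$, i.e.\ is nontrivial in the sense defined just before the theorem. For the sharper second assertion, $g\left(s_{0}\right)=1$ forces exactly one Jordan block for $s_{0}$, and since all block sizes for $s_{0}$ sum to $m\left(s_{0}\right)$, that single block has dimension $m\left(s_{0}\right)\geq2$.

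The only genuine subtlety — and the step I would treat most carefully — is the passage from the matrix \emph{pencil} $s\mathsf{B}-\mathsf{A}$ to an honest single matrix whose Jordan form is under discussion. Here I would use that $\mathsf{B}$ is invertible, which is exactly the condition $\det\left\{ A_{\nu}\right\} \neq0$ ensuring $\det\left\{ s\mathsf{B}-\mathsf{A}\right\} $ attains full degree $m\nu$ by Theorem \ref{thm:matpol-eigvec}. Left multiplication by $\mathsf{B}^{-1}$ converts the pencil into $s\mathbb{I}-\mathsf{B}^{-1}\mathsf{A}$; this changes neither the eigenvalues nor the kernel $\ker\left\{ s_{0}\mathsf{B}-\mathsf{A}\right\} =\ker\left\{ s_{0}\mathbb{I}-\mathsf{B}^{-1}\mathsf{A}\right\} $ at each $s_{0}$, and alters the characteristic polynomial only by the nonzero constant factor $\det\left\{ \mathsf{B}^{-1}\right\} $, so $m\left(s_{0}\right)$ and $g\left(s_{0}\right)$ are unaffected. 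The two standard Jordan facts then apply verbatim to $\mathsf{B}^{-1}\mathsf{A}$, whose Jordan form is the one meant in the statement. In the monic case that is relevant to the circuits of this paper this reduction is vacuous, since there $\mathsf{B}=\mathbb{I}$ and $\mathsf{A}$ is already the companion matrix; with that identification the counting argument above is immediate and completes the proof.
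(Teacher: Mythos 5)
Your proposal is correct and follows essentially the same route as the paper, which dismisses the proof as following "straightforwardly from the definition of the Jordan canonical form and its basic properties" and notes precisely the reduction you make explicit: passing to $\mathsf{B}^{-1}\mathsf{A}$ when $\mathsf{B}^{-1}$ exists, where condition (\ref{eq:CBAAx3b}) means $s_{0}$ is a cyclic (nonderogatory) eigenvalue. Your write-up simply fills in the block-counting details (number of blocks equals geometric multiplicity, sizes sum to algebraic multiplicity, via Theorem \ref{thm:matpol-eigvec} and Corollary \ref{cor:dim-ker}) that the paper leaves implicit, and correctly reads "nontrivial" as dimension at least $2$, consistent with the paper's own definition.
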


The proof of Theorem \ref{thm:Jord-block} follows straightforwardly
from the definition of the Jordan canonical form and its basic properties.
Notice that if equations (\ref{eq:CBAAx3b}) hold that implies that
the eigenvalue $0$ is cyclic (nonderogatory) for matrix $A\left(s_{0}\right)$
and eigenvalue $s_{0}$ is cyclic (nonderogatory) for matrix $\mathsf{B}^{-1}\mathsf{A}$
provided $\mathsf{B}^{-1}$ exists, see Section \ref{sec:co-mat}.

\section{Appendix D: Vector differential equations and the Jordan canonical
form\label{sec:dif-jord}}

In this section we relate the vector ordinary equations to the matrix
polynomials reviewed in Section \ref{sec:mat-poly} following to \cite[5.1, 5.7]{GoLaRo2},
\cite[III.4]{Hale}, \cite[7.9]{MeyCD}.

Equation $A\left(s\right)x=0$ with polynomial matrix $A\left(s\right)$
defined by equations (\ref{eq:Aux1a}) corresponds to the following
$m$-vector $\nu$-th order ordinary differential
\begin{equation}
A\left(\partial_{t}\right)x\left(t\right)=0,\text{ where }A\left(\partial_{t}\right)=\sum_{j=0}^{\nu}A_{j}\partial_{t}^{j},\label{eq:Adtx1}
\end{equation}
where $A_{j}$ are $m\times m$ matrices. Then differential equation
(\ref{eq:Adtx1}) can be recast in standard fashion as $m\nu$-vector
first order differential equation
\begin{equation}
\mathsf{B}\partial_{t}Y\left(t\right)=\mathsf{A}Y\left(t\right),\label{eq:yxt1a}
\end{equation}
where $\mathsf{A}$ and $\mathsf{B}$ are $m\nu\times m\nu$ companion
matrices defined by equations (\ref{eq:CBA1b}) and
\begin{equation}
Y\left(t\right)=\left[\begin{array}{c}
x\left(t\right)\\
\partial_{t}x\left(t\right)\\
\vdots\\
\partial_{t}^{\nu-2}x\left(t\right)\\
\partial_{t}^{\nu-1}x\left(t\right)
\end{array}\right]\label{eq:yxt1b}
\end{equation}
is $m\nu$-column-vector function.

In the case when $A_{\nu}$ is an invertible $m\times m$ matrix equation
(\ref{eq:yxt1a}) can be recast further as
\begin{equation}
\partial_{t}Y\left(t\right)=\dot{\mathsf{A}}Y\left(t\right),\label{eq:yxt1c}
\end{equation}
where
\begin{gather}
\dot{\mathsf{A}}=\left[\begin{array}{ccccc}
0 & \mathbb{I} & \cdots & 0 & 0\\
0 & 0 & \mathbb{I} & \cdots & 0\\
0 & 0 & 0 & \cdots & \vdots\\
\vdots & \vdots & \ddots & 0 & \mathbb{I}\\
-\dot{A}_{0} & -\dot{A}_{1} & \cdots & -\dot{A}_{\nu-2} & -\dot{A}_{\nu-1}
\end{array}\right],\quad\dot{A}_{j}=A_{\nu}^{-1}A_{j},\quad0\leq\nu-1.\label{eq:yxt1d}
\end{gather}
Notice one can interpret equation (\ref{eq:yxt1c}) as particular
case of equation (\ref{eq:yxt1a}) where matrices $A_{\nu}$ and $\mathsf{B}$
are identity matrices of the respective dimensions $m\times m$ and
$m\nu\times m\nu$, and that polynomial matrix $A\left(s\right)$
defined by equations (\ref{eq:Aux1a}) becomes monic matrix polynomial
$\dot{A}\left(s\right)$, that is
\begin{gather}
\dot{A}\left(s\right)=\mathbb{I}s^{\nu}+\sum_{j=0}^{\nu-1}\dot{A}_{j}s^{j},\quad\dot{A}_{j}=A_{\nu}^{-1}A_{j},\quad0\leq\nu-1.\label{eq:yxt1e}
\end{gather}
Notice that in view of equation (\ref{eq:yxt1b}) one recovers $x\left(t\right)$
from $Y\left(t\right)$ by the following formula
\begin{equation}
x\left(t\right)=P_{1}Y\left(t\right),\quad P_{1}=\left[\begin{array}{ccccc}
\mathbb{I} & 0 & \cdots & 0 & 0\end{array}\right],\label{eq:yxt2b}
\end{equation}
where $P_{1}$ evidently is $m\times m\nu$ matrix.

Observe also that, \cite[Prop. 5.1.2]{GoLaRo2}, \cite[14]{LanTsi}
\begin{gather}
\left[\dot{A}\left(s\right)\right]^{-1}=P_{1}\left[\mathbb{I}s-\dot{\mathsf{A}}\right]^{-1}R_{1},\quad P_{1}=\left[\begin{array}{ccccc}
\mathbb{I} & 0 & \cdots & 0 & 0\end{array}\right],\quad R_{1}=\left[\begin{array}{c}
0\\
0\\
\vdots\\
0\\
\mathbb{I}
\end{array}\right],\label{eq:yxt2a}
\end{gather}
where $P_{1}$ and $R_{1}$ evidently respectively $m\times m\nu$
and $m\nu\times m$ matrices.

The general form for the solution to vector differential equation
(\ref{eq:yxt1c}) is
\begin{equation}
Y\left(t\right)=\exp\left\{ \dot{\mathsf{A}}t\right\} Y_{0},\quad Y_{0}\in\mathbb{C}^{m\nu}.\label{eq:yxt2c}
\end{equation}
Then using the formulas (\ref{eq:yxt2b}), (\ref{eq:yxt2c}) and Proposition
\ref{prop:gen-eig} we arrive the following statement.
\begin{prop}[solution to the vector differential equation ]
\label{prop:dif-sol-g} Let $\dot{\mathsf{A}}$ be $m\nu\times m\nu$
companion matrix defined by equations (\ref{eq:yxt1d}), $\zeta_{1},\ldots,\zeta_{p}$
be its distinct eigenvalues, and $M_{\zeta_{1}},\ldots,M_{\zeta_{p}}$
be the corresponding generalized eigenspaces of the corresponding
dimensions $r\left(\zeta_{j}\right)$, $1\leq j\leq p$. Then the
$m\nu$ column-vector solution $Y\left(t\right)$ to differential
equation (\ref{eq:yxt1c}) is of the form
\begin{gather}
Y\left(t\right)=\exp\left\{ \dot{\mathsf{A}}t\right\} Y_{0}=\sum_{j=1}^{p}e^{\zeta_{j}t}p_{j}\left(t\right),\quad Y_{0}=\sum_{j=1}^{p}Y_{0,j},\quad Y_{0,j}\in M_{\zeta_{j}},\label{eq:yxt2d}
\end{gather}
where $m\nu$-column-vector polynomials $p_{j}\left(t\right)$ satisfy
\begin{gather}
p_{j}\left(t\right)=\sum_{k=0}^{r\left(\zeta_{j}\right)-1}\frac{t^{k}}{k!}\left(\dot{\mathsf{A}}-\zeta_{j}\mathbb{I}\right)^{k}Y_{0,j},\quad1\leq j\leq p.\label{eq:yxt2e}
\end{gather}
Consequently, the general $m$-column-vector solution $x\left(t\right)$
to differential equation (\ref{eq:Adtx1}) is of the form
\begin{gather}
x\left(t\right)=\sum_{j=1}^{p}e^{\zeta_{j}t}P_{1}p_{j}\left(t\right),\quad P_{1}=\left[\begin{array}{ccccc}
\mathbb{I} & 0 & \cdots & 0 & 0\end{array}\right].\label{eq:yxt2f}
\end{gather}
\end{prop}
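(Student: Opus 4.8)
The plan is to assemble the statement from three ingredients already established in the excerpt: the matrix-exponential form of the general solution of the linearized system, the generalized-eigenspace decomposition of Proposition \ref{prop:gen-eig}, and the projection formula that recovers $x\left(t\right)$ from the augmented vector $Y\left(t\right)$. None of these steps requires new machinery; the work is purely organizational, so the proof will amount to a careful renaming and substitution.

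First I would record that equation (\ref{eq:yxt1c}) is the autonomous constant-coefficient linear system $\partial_{t}Y=\dot{\mathsf{A}}Y$, whose general solution is exactly $Y\left(t\right)=\exp\left\{\dot{\mathsf{A}}t\right\}Y_{0}$ for an arbitrary initial datum $Y_{0}\in\mathbb{C}^{m\nu}$, as already stated in equation (\ref{eq:yxt2c}). Next I would invoke Proposition \ref{prop:gen-eig} with the abstract matrix there specialized to $A=\dot{\mathsf{A}}$ and the vector specialized to $x_{0}=Y_{0}$. The proposition furnishes the direct-sum decomposition (\ref{eq:Mgeneig1a}) of $\mathbb{C}^{m\nu}$ into the generalized eigenspaces $M_{\zeta_{1}},\ldots,M_{\zeta_{p}}$, hence the unique splitting $Y_{0}=\sum_{j=1}^{p}Y_{0,j}$ with $Y_{0,j}\in M_{\zeta_{j}}$; and its conclusions (\ref{eq:Mgeneig1c})--(\ref{eq:Mgeneig1d}) are, verbatim after this renaming, the asserted identities (\ref{eq:yxt2d}) and (\ref{eq:yxt2e}), with the summation bound $r\left(\zeta_{j}\right)$ being by definition the least integer at which the null spaces of the powers of $\dot{\mathsf{A}}-\zeta_{j}\mathbb{I}$ stabilize.

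Finally I would pass from $Y\left(t\right)$ to $x\left(t\right)$ by the recovery formula (\ref{eq:yxt2b}), namely $x\left(t\right)=P_{1}Y\left(t\right)$ with $P_{1}=\left[\mathbb{I}\ 0\ \cdots\ 0\right]$ extracting the top $m$ components of the block vector (\ref{eq:yxt1b}). Since $P_{1}$ is a constant linear map it commutes with the finite sum over $j$ and with the scalar factors $e^{\zeta_{j}t}$, so applying it to (\ref{eq:yxt2d}) yields $x\left(t\right)=\sum_{j=1}^{p}e^{\zeta_{j}t}P_{1}p_{j}\left(t\right)$, which is precisely (\ref{eq:yxt2f}); and this $x\left(t\right)$ solves (\ref{eq:Adtx1}) because $Y\left(t\right)$ solves the equivalent linearized system (\ref{eq:yxt1c}).

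The closest thing to an obstacle is bookkeeping rather than mathematics: one must confirm that the integer $r\left(\zeta_{j}\right)$ serving as the upper summation limit in (\ref{eq:yxt2e}) coincides with the index governing the nilpotency of $\dot{\mathsf{A}}-\zeta_{j}\mathbb{I}$ on $M_{\zeta_{j}}$, so that each polynomial $p_{j}\left(t\right)$ terminates at the correct degree. This is immediate from the definition of the generalized eigenspace preceding Proposition \ref{prop:gen-eig}, and the reduction to (\ref{eq:yxt1c}) tacitly relies on the invertibility of $A_{\nu}$ that is already assumed there.
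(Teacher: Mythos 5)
Your proof is correct and follows exactly the route the paper intends: the paper's own justification is the single remark that the result follows ``using the formulas (\ref{eq:yxt2b}), (\ref{eq:yxt2c}) and Proposition \ref{prop:gen-eig},'' and your argument is precisely that combination spelled out --- the matrix-exponential solution of (\ref{eq:yxt1c}), Proposition \ref{prop:gen-eig} applied with $A=\dot{\mathsf{A}}$ and $x_{0}=Y_{0}$ to obtain (\ref{eq:yxt2d})--(\ref{eq:yxt2e}), and the projection $x\left(t\right)=P_{1}Y\left(t\right)$ to obtain (\ref{eq:yxt2f}). Your closing remark about identifying $r\left(\zeta_{j}\right)$ with the stabilization index from the definition preceding Proposition \ref{prop:gen-eig} is a worthwhile bookkeeping check that the paper leaves implicit.
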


Notice that $\chi_{\dot{\mathsf{A}}}\left(s\right)=\det\left\{ s\mathbb{I}-\dot{\mathsf{A}}\right\} $
is the characteristic function of the matrix $\dot{\mathsf{A}}$ then
using notations of Proposition \ref{prop:dif-sol-g} we obtain
\begin{equation}
\chi_{\dot{\mathsf{A}}}\left(s\right)=\prod_{j=1}^{p}\left(s-\zeta_{j}\right)^{r\left(\zeta_{j}\right)}.\label{eq:yxt3a}
\end{equation}
Notice also that for any values of complex-valued coefficients $b_{k}$
we have
\begin{gather}
\left(\partial_{t}-\zeta_{j}\right)^{r\left(\zeta_{j}\right)}\left[e^{\zeta_{j}t}p_{j}\left(t\right)\right]=0,\quad p_{j}\left(t\right)=\sum_{k=0}^{r\left(\zeta_{j}\right)-1}b_{k}t^{k},\label{eq:yxt3b}
\end{gather}
implying together with representation (\ref{eq:yxt3a})
\begin{gather}
\chi_{\dot{\mathsf{A}}}\left(\partial_{t}\right)\left[e^{\zeta_{j}t}p_{j}\left(t\right)\right]=0,\quad p_{j}\left(t\right)=\sum_{k=0}^{r\left(\zeta_{j}\right)-1}b_{k}t^{k}.\label{eq:yxt3c}
\end{gather}
Combing now Proposition \ref{prop:dif-sol-g} with equation (\ref{eq:yxt3c})
we obtain the following statement.
\begin{cor}[property of a solution to the vector differential equation]
\label{cor:dif-sol-g} Let $x\left(t\right)$ be the general $m$-column-vector
solution $x\left(t\right)$ to differential equation (\ref{eq:Adtx1}).
Then $x\left(t\right)$ satisfies
\begin{equation}
\chi_{\dot{\mathsf{A}}}\left(\partial_{t}\right)x\left(t\right)=0.\label{eq:yxt3d}
\end{equation}
\end{cor}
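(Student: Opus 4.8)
The plan is to invoke the explicit representation of the general solution already furnished by Proposition \ref{prop:dif-sol-g} and then apply the scalar differential operator $\chi_{\dot{\mathsf{A}}}\left(\partial_{t}\right)$ directly to each summand, exploiting the factorization of the characteristic polynomial recorded in equation (\ref{eq:yxt3a}). Since every ingredient needed is already established in this section, the argument is essentially an assembly by linearity rather than a fresh computation, matching the remark preceding the corollary that the statement follows from combining Proposition \ref{prop:dif-sol-g} with equation (\ref{eq:yxt3c}).

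First I would recall from equation (\ref{eq:yxt2f}) that the general $m$-column-vector solution has the form $x\left(t\right)=\sum_{j=1}^{p}e^{\zeta_{j}t}P_{1}p_{j}\left(t\right)$, where each $p_{j}\left(t\right)$ is an $m\nu$-column-vector polynomial of degree at most $r\left(\zeta_{j}\right)-1$ (the degree cap being exactly the content of equation (\ref{eq:yxt2e})) and $P_{1}$ is the constant $m\times m\nu$ projection matrix. Next I would observe that $\chi_{\dot{\mathsf{A}}}\left(\partial_{t}\right)$ is a scalar polynomial differential operator with constant coefficients, so it acts componentwise on vector functions and in particular commutes with left multiplication by the constant matrix $P_{1}$. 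By linearity it therefore suffices to show that $\chi_{\dot{\mathsf{A}}}\left(\partial_{t}\right)$ annihilates each term $e^{\zeta_{j}t}p_{j}\left(t\right)$. This is precisely equation (\ref{eq:yxt3c}): because the factorization (\ref{eq:yxt3a}) gives $\chi_{\dot{\mathsf{A}}}\left(s\right)=\prod_{j=1}^{p}\left(s-\zeta_{j}\right)^{r\left(\zeta_{j}\right)}$, the single factor $\left(\partial_{t}-\zeta_{j}\right)^{r\left(\zeta_{j}\right)}$ already kills $e^{\zeta_{j}t}p_{j}\left(t\right)$ by equation (\ref{eq:yxt3b}), while the remaining factors of the operator commute with it. Summing over $j$ then yields $\chi_{\dot{\mathsf{A}}}\left(\partial_{t}\right)x\left(t\right)=0$, which is equation (\ref{eq:yxt3d}).

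There is no genuine obstacle here; the only point warranting care is the degree bookkeeping. One must confirm that each $p_{j}\left(t\right)$ in the representation (\ref{eq:yxt2f}) indeed has degree at most $r\left(\zeta_{j}\right)-1$, so that the lone factor $\left(\partial_{t}-\zeta_{j}\right)^{r\left(\zeta_{j}\right)}$ of $\chi_{\dot{\mathsf{A}}}\left(\partial_{t}\right)$ is enough to annihilate the corresponding exponential-polynomial summand. Once that matching of the multiplicity $r\left(\zeta_{j}\right)$ against the polynomial degree is in hand, the hard part — if one can call it that — is simply the elementary commutation of the constant-coefficient scalar operator through both $P_{1}$ and the summation, so the proof reduces to a one-line combination of the cited results.
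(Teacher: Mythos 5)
Your proof is correct and follows essentially the same route as the paper, which obtains the corollary by combining Proposition \ref{prop:dif-sol-g} with the annihilation identity (\ref{eq:yxt3c}), itself derived from the factorization (\ref{eq:yxt3a}) and the elementary fact (\ref{eq:yxt3b}). Your additional remarks---that the scalar constant-coefficient operator $\chi_{\dot{\mathsf{A}}}\left(\partial_{t}\right)$ commutes with the constant matrix $P_{1}$ and with the sum, and that the degree bound in (\ref{eq:yxt2e}) matches the multiplicity $r\left(\zeta_{j}\right)$---simply make explicit the bookkeeping the paper leaves implicit.
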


\section{Appendix E: Some properties of block matrices\label{sec:block-mat}}

The statements on block matrices below are useful for our studies
\cite[2.8]{BernM}.
\begin{prop}[factorization of a block matrix]
\label{prop:4blockA1} Let $A\in\mathbb{C}^{n\times n}$, $B\in\mathbb{C}^{n\times m}$,
$C\in\mathbb{C}^{p\times n}$, $D\in\mathbb{C}^{p\times m}$, and
assume $A$ is nonsingular. Then
\begin{gather}
\left[\begin{array}{cr}
A & B\\
C & D
\end{array}\right]=\left[\begin{array}{rc}
I & 0\\
CA^{-1} & I
\end{array}\right]\left[\begin{array}{cr}
A & 0\\
0 & D-CA^{-1}B
\end{array}\right]\left[\begin{array}{rr}
I & A^{-1}B\\
0 & I
\end{array}\right],\label{eq:BLock1d}
\end{gather}
and
\begin{gather}
\mathrm{Rank}\,\left\{ \left[\begin{array}{cr}
A & B\\
C & D
\end{array}\right]\right\} =n+\mathrm{Rank}\,\left\{ D-CA^{-1}B\right\} .\label{eq:BLock1e}
\end{gather}
If furthermore, $m=p$, that is $A\in\mathbb{C}^{n\times n}$, $B\in\mathbb{C}^{n\times m}$,
$C\in\mathbb{C}^{m\times n}$, $D\in\mathbb{C}^{m\times m}$, then
\begin{gather}
\det\left\{ \left[\begin{array}{cr}
A & B\\
C & D
\end{array}\right]\right\} =\det\left\{ A\right\} \det\left\{ D-CA^{-1}B\right\} .\label{eq:Block1f}
\end{gather}
\end{prop}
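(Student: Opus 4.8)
The plan is to prove the three assertions in the order they are stated: first establish the block factorization (\ref{eq:BLock1d}) by a direct block multiplication, and then extract the rank identity (\ref{eq:BLock1e}) and the determinant identity (\ref{eq:Block1f}) as immediate consequences, using the fact that the two outer factors are unimodular block-triangular matrices. Since $A$ is assumed nonsingular, $A^{-1}$ exists and the Schur complement $D-CA^{-1}B$ is well defined throughout.

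First I would verify (\ref{eq:BLock1d}) by multiplying the right-hand side from right to left. The block-diagonal middle factor times the upper-triangular right factor gives
\[
\left[\begin{array}{cc} A & 0 \\ 0 & D-CA^{-1}B \end{array}\right]\left[\begin{array}{cc} I & A^{-1}B \\ 0 & I \end{array}\right]=\left[\begin{array}{cc} A & B \\ 0 & D-CA^{-1}B \end{array}\right],
\]
using $A\,A^{-1}B=B$. Left-multiplying this by the lower-triangular factor then yields
\[
\left[\begin{array}{cc} I & 0 \\ CA^{-1} & I \end{array}\right]\left[\begin{array}{cc} A & B \\ 0 & D-CA^{-1}B \end{array}\right]=\left[\begin{array}{cc} A & B \\ CA^{-1}A & CA^{-1}B+D-CA^{-1}B \end{array}\right]=\left[\begin{array}{cc} A & B \\ C & D \end{array}\right],
\]
which is exactly the claimed identity. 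The only care needed is block-dimension bookkeeping: $A^{-1}B$ is $n\times m$, $CA^{-1}$ is $p\times n$, and every product appearing above is conformable.

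For the rank identity I would note that the two outer factors are block-triangular with identity diagonal blocks, hence nonsingular. Because rank is unchanged under left or right multiplication by an invertible matrix, the rank of the full matrix equals the rank of the block-diagonal middle factor; and the rank of a block-diagonal matrix is the sum of the ranks of its diagonal blocks. As $A$ is nonsingular it contributes rank $n$, so the total is $n+\mathrm{Rank}\,\{D-CA^{-1}B\}$, giving (\ref{eq:BLock1e}).

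Finally, for (\ref{eq:Block1f}), which requires $m=p$ so that all blocks and the whole matrix are square, I would take determinants across (\ref{eq:BLock1d}). Determinant is multiplicative; each outer triangular factor has determinant $1$, and the middle factor has determinant $\det\,\{A\}\det\,\{D-CA^{-1}B\}$, both by the block-triangular determinant rule. This produces the Schur determinant formula. There is no genuine obstacle here: the content is the classical Schur-complement factorization and the argument is a routine verification. The only steps invoking standing facts rather than arithmetic are the invariance of rank under multiplication by invertible matrices together with additivity of rank over block-diagonal blocks, and the block-triangular determinant formula; both are standard and may be cited from the same source \cite[2.8]{BernM} in which the proposition is stated.
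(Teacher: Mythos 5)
Your proof is correct. Note that the paper itself offers no proof of this proposition: it is stated in Appendix E as a standard fact quoted from the reference \cite[2.8]{BernM}, so there is no internal argument to compare against. Your verification—direct block multiplication for the factorization (\ref{eq:BLock1d}), invariance of rank under invertible factors plus additivity over the block-diagonal middle term for (\ref{eq:BLock1e}), and multiplicativity of the determinant with the block-triangular determinant rule for (\ref{eq:Block1f})—is exactly the standard Schur-complement argument that the cited source supplies, and it fills the gap completely.
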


\section{Appendix F: Canonical forms for quadratic Hamiltonian\label{sec:can-forms}}

Canonical forms of Hamiltonian matrices and Hamiltonians is a well-studied
subject \cite[App. 6]{ArnMec}, \cite[2.4]{ArnGiv}, \cite[3.3, 4.6]{Miln},
\cite[3.3, 4.6, 4.7]{Mey}. In particular, an approach to the canonical
forms due to D. Galin is as follows \cite[App. 6]{ArnMec}.

Hamiltonian associated with a pair of Jordan blocks of order $n$
with real eigenvalues $\pm a$ is as follows
\begin{equation}
H=-a\sum_{j=1}^{n}p_{j}q_{j}+\sum_{j=1}^{n-1}p_{j}q_{j+1}.\label{eq:cafoH1a}
\end{equation}
Hamiltonian associated with a pair of Jordan blocks of odd order $2n+1$
with purely imaginary eigenvalues $\pm b\mathrm{i}$, that is $b$
is real, is one of the following two nonequivalent types:
\begin{gather}
H=\pm\frac{1}{2}\left[H_{1}-H_{2}\right]-\sum_{j=1}^{2n}p_{j}q_{j+1},\quad H_{1}=\sum_{j=1}^{n}\left(b^{2}p_{2j}q_{2n-2j+2}+p_{2j}q_{2n-2j+2}\right),\label{eq:cafoH1b}\\
H_{2}=\sum_{j=1}^{n+1}\left(b^{2}p_{2j-1}q_{2n-2j+3}+p_{2j-1}q_{2n-2j+3}\right).\nonumber 
\end{gather}
Hamiltonian associated with a pair of Jordan blocks of even order
$2n$ with purely imaginary eigenvalues $\pm b\mathrm{i}$, that is
$b$ is real, is one of the following two nonequivalent types:
\begin{gather}
H=\pm\frac{1}{2}\left[H_{1}-H_{2}\right]-b^{2}\sum_{j=1}^{n}p_{2j}q_{2j-1}+\sum_{j=1}^{n}p_{2j}q_{2j-1}.\label{eq:cafoH1c}\\
H_{1}=\sum_{j=1}^{n}\left(\frac{1}{b^{2}}q_{2j-1}q_{2n-2j+1}+q_{2j}q_{2n-2j+2}\right),\nonumber \\
H_{2}=\sum_{j=1}^{n-1}\left(b^{2}p_{2j+1}q_{2n-2j+1}+p_{2j+2}q_{2n-2j+3}\right).\nonumber 
\end{gather}
In particular for $n=1$ the above formula turns into
\begin{equation}
H=\pm\frac{1}{2}\left(\frac{1}{b^{2}}q_{1}^{2}+q_{2}^{2}\right)-b^{2}p_{1}q_{2}+p_{2}q_{1}.\label{eq:cafoH1d}
\end{equation}

Hamiltonian associated with a quadruple of Jordan blocks of order
$n$ with eigenvalues $\pm a\pm b\mathrm{i}$, is as follows
\begin{gather}
H=-a\sum_{j=1}^{2n}p_{j}q_{j}+b\sum_{j=1}^{n}\left(p_{2j-1}q_{2j}-p_{2j}q_{2j-1}\right)+\sum_{j=1}^{2n-2}p_{j}q_{j+2}.\label{eq:cafoH1e}
\end{gather}

Canonical forms of Hamiltonians according to \cite[2.4]{ArnGiv} are
as follows. 

Hamiltonian associated with a pair of Jordan blocks of order $n$
with real or pure imaginary eigenvalues $\pm\chi$ is as follows

\begin{gather}
H=\pm\left[\sum_{j=1}^{n-1}p_{j}q_{j+1}+\frac{p_{n}^{2}}{2}-\frac{1}{2}\sum_{j=1}^{n}\left(\begin{array}{c}
n\\
j-1
\end{array}\right)\chi^{2\left(n-j+1\right)}q_{j}^{2}\right],\text{ where}\label{eq:cafoH1f}\\
\left(\begin{array}{c}
n\\
j
\end{array}\right)=C_{j}^{n}=\frac{n!}{j!\left(n-j\right)!}\text{ is the binomial coefficient.}\label{eq:cafoH1g}
\end{gather}

Hamiltonian associated with a quadruple of Jordan blocks of order
$m=\frac{n}{2}$ where $n$ is an even positive integer with eigenvalues
$\pm a\pm b\mathrm{i}$, is as follows
\begin{equation}
H=\sum_{j=1}^{n-1}p_{j}q_{j+1}+\frac{p_{n}^{2}}{2}-\frac{1}{2}\sum_{j=1}^{n}a_{j-1}q_{j}^{2},\quad n=2m,\label{eq:cafoH1h}
\end{equation}
where
\begin{gather}
\sum_{j=0}^{2m}a_{j}\zeta^{j}==\left[\zeta^{2}+2\left(a^{2}-b^{2}\right)\zeta+\left(a^{2}+b^{2}\right)^{2}\right]^{m}.\label{eq:cafoH1j}
\end{gather}

\section{Appendix G: Notations\label{sec:notation}}
\begin{itemize}
\item $\mathbb{C}$ is a set of complex number.
\item $\bar{s}$ is complex-conjugate to complex number $s$
\item $\mathbb{C}^{n}$ is a set of $n$ dimensional column vectors with
complex complex-valued entries.
\item $\mathbb{C}^{n\times m}$ is a set of $n\times m$ matrices with complex-valued
entries.
\item $\mathbb{R}^{n\times m}$ is a set of $n\times m$ matrices with real-valued
entries.
\item $\mathrm{spec}\,\left(A\right)$ is the set of all distinct eigenvalues
of a $n\times n$ matrix $A$.
\item $\mathrm{ind}{}_{A}\,\left(\lambda\right)$ is defined for an eigenvalue
$\lambda$ of a $n\times n$ matrix $A$ to be the largest size of
Jordan block associated with $\lambda$.
\item $\chi_{A}\left(s\right)=\det\left\{ s\mathbb{I}_{\nu}-A\right\} $
is the characteristic polynomial of a $\nu\times\nu$ matrix $A$.
\item $\mu_{A}\left(s\right)$ is the minimal polynomial of a $n\times n$
matrix $A$, that is the smallest degree polynomial such that $\mu_{A}\left(A\right)=0.$ 
\item $\mathrm{col}\,\left(A,k\right)$ is $k$-th column of matrix $A$.
\item $\mathrm{row}\,\left(A,k\right)$ is $k$-th row of matrix $A$.
\item $\mathbb{I}_{\nu}$ is $\nu\times\nu$ identity matrix.
\item $\mathbb{J},\;\mathbb{J}_{\nu}$ is $2\nu\times2\nu$ unit imaginary
matrix.
\item $M^{\mathrm{T}}$ is a matrix transposed to matrix $M$.
\item $\mathrm{diag}\,\left(b_{1},b_{2},\ldots,b_{n}\right)$ is $n\times n$
diagonal matrix with indicated entries.
\item $\mathrm{diag}\,\left(B_{1},B_{2},\ldots,B_{n}\right)$ is diagonal
block-matrix where $B_{J}$ are square matrices.
\item $\left[R_{1}|R_{2}|\cdots|R_{r}\right]$ is an $n\times m$ matrix
formed by putting next to each other in a row $n\times m_{j}$ matrices
$R_{j}$, $1\leq j\leq r$ where $m=m_{1}+\cdots+m_{r}$.
\item $L_{j}$, $C_{j}$ and $G_{j}$ are respectively inductances, capacitances
and gyrator resistances.
\item EL stands for the Euler-Lagrange (equations).
\end{itemize}

\end{document}